\newif\ifllncs  
\newif\iffull   

\fulltrue

\ifllncs
    \documentclass[runningheads,a4paper]{llncs}

\else
    \documentclass[11pt]{article}
\fi

\usepackage{amsmath, amsfonts, amssymb, amsthm}
\usepackage{color}
\usepackage{url}
\usepackage{qcircuit}
\usepackage{algorithm}
\usepackage{algpseudocode}
\usepackage[colorlinks=true,
            linkcolor=blue,
            urlcolor=blue,
            citecolor=blue]{hyperref}
\usepackage{graphicx}
\usepackage{multirow}
\usepackage{multicol}
\usepackage{tikz}
\usepackage{authblk}

\usepackage{subcaption}
\usepackage{booktabs}
\usepackage{pgfplots}
\pgfplotsset{compat=1.18}
\usetikzlibrary{calc,arrows.meta}

\newcommand{\F}{\mathbb{F}}

\newcommand{\floor}[1]{\left\lfloor #1 \right\rfloor}
\newcommand{\ceil}[1]{\left\lceil #1 \right\rceil}

\newcommand{\bra}[1]{\langle #1|}
\newcommand{\ket}[1]{|#1 \rangle}

\newcommand{\QFT}{\mathsf{QFT}}

\newcommand{\triang}{\quad\,\blacktriangleright\ \,}

\newcommand{\gateout}[1]{*+<.6em>{#1\ } \POS ="i",[0,0]+R *{\triang},"i"+UR;"i"+UL **\dir{-};"i"+DL **\dir{-};"i"+DR **\dir{-};"i"+UR **\dir{-},"i" \qw}
\newcommand{\multigateout}[2]{*+<1em,.9em>{\hphantom{#2}} \POS [0,0]="i",[0,0].[#1,0]="e", !C *{#2}, [0,0]+R *{\triang}, "e"+UR;"e"+UL **\dir{-};"e"+DL **\dir{-};"e"+DR **\dir{-};"e"+UR **\dir{-},"i" \qw}

\newcommand{\multigateouttwo}[2]{*+<1em,.9em>{\hphantom{#2}} \POS [0,0]="i",[0,0].[#1,0]="e", !C *{#2}, [0,0]+R *{\triang}, [#1,0]+R *{\triang}, "e"+UR;"e"+UL **\dir{-};"e"+DL **\dir{-};"e"+DR **\dir{-};"e"+UR **\dir{-},"i" \qw}

\newcommand{\multigateouttwohi}[2]{*+<1em,.9em>{\hphantom{#2}} \POS [0,0]="i",[0,0].[#1,0]="e", !C *{#2}, [0,0]+R *{\triang},  [1,0]+R *{\triang},"e"+UR;"e"+UL **\dir{-};"e"+DL **\dir{-};"e"+DR **\dir{-};"e"+UR **\dir{-},"i" \qw}

\iffull
    \newtheorem{theorem}{Theorem}[section]
    \newtheorem{lemma}[theorem]{Lemma}

\fi

\graphicspath{{sec_circuit/}, {sec_point_add/}}

\title{Quantum Algorithm for Elliptic Curve Discrete Logarithms with Space-Efficient Point Addition\thanks{This paper supersedes our earlier preprint \href{https://arxiv.org/pdf/2604.02311}{arXiv:2604.02311}. Compared with the earlier version, the present paper reduces the space complexity from $5n + O(\log_2 n)$ to $3n + O(\log_2 n)$ for affine point addition and from $3n + O(\log_2 n)$ to $2n + O(\log_2 n)$ for modular inversion. It also improves the overall Toffoli counts of ECDLP from $O(n^3)$ to $O(n^3/\log n)$. 
}}

\iffull
   \author[1]{Han Luo\thanks{Equal contribution.}}
   \author[2,3]{Ziyi Yang\protect\footnotemark[2]}
   \author[4]{Jingquan Luo\protect\footnotemark[2]}
   \author[2,3]{Ziruo Wang}
   \author[2,3]{Yuexin Su}
   \author[5,6]{Xiaoming Sun}
   \author[4]{\\Lvzhou Li\thanks{Corresponding author. Email: lilvzh@mail.sysu.edu.cn}}
   \author[2,3]{Tongyang Li\thanks{Corresponding author. Email: tongyangli@pku.edu.cn}} 
   \affil[1]{Institute for Interdisciplinary Information Sciences, Tsinghua University}
   \affil[2]{Center on Frontiers of Computing Studies, Peking University}
   \affil[3]{School of Computer Science, Peking University}
   \affil[4]{Institute of Quantum Computing  and Software, School of Computer Science and Engineering, Sun Yat-Sen University}
   \affil[5]{State Key Lab of Processors, Institute of Computing Technology, Chinese Academy of Sciences}
   \affil[6]{School of Computer Science and Technology, University of Chinese Academy of Sciences}
\else
   \author{Anonymous submission}
\fi

\date{}

\begin{document}

\maketitle

\vspace{-6mm}
\begin{abstract}
The Elliptic Curve Discrete Logarithm Problem (ECDLP) is a fundamental problem in cryptography, and reducing the resource requirements of quantum algorithms for solving ECDLP is an important goal. In this work, we present a space-efficient quantum algorithm for solving the ECDLP over prime fields, achieving an implementation with only
$3n+6\lfloor \log_2 n \rfloor+O(1)$ logical qubits and $1056n^3/\log_2 n+O(n^2)$ Toffoli gates, where $n$ is the bit-length of the prime. For a 256-bit prime-field curve, our construction requires only 835 logical qubits, reducing the previous best estimates of 1098 and 1175 logical qubits by Chevignard et al.~[EUROCRYPT 2026] and Babbush et al.~[ArXiv Preprint 2026], respectively.

The key to our improvement is a new space-efficient reversible modular inversion circuit, which addresses the dominant space bottleneck in affine-coordinate point addition. Starting from the extended Euclidean algorithm (EEA), we refine the register-sharing technique of Proos and Zalka by introducing length registers and location-controlled arithmetic to compactly store and update intermediate variables. We further optimize the reversible update procedures and construct the corresponding controlled arithmetic circuits, resulting in a modular inversion circuit implemented by only
$2n+6\lfloor \log_2 n \rfloor+O(1)$ logical qubits and
$229n^2+O(n\log_2 n)$ Toffoli gates.
This modular inversion circuit together with mid-circuit measurements and classical feed-forward operations provides a space-efficient controlled affine point-addition circuit and a complete implementation of Shor's algorithm for ECDLP.
\vspace{6mm}
\end{abstract}

\section{Introduction}

Elliptic curve cryptography (ECC), first proposed in the mid-1980s as an alternative framework for public-key cryptography \cite{koblitz1987elliptic,miller1985use}, has become one of the foundational primitives in modern cryptographic systems.
The security of ECC-based cryptosystems relies on the computational difficulty of the Elliptic Curve Discrete Logarithm Problem (ECDLP).
Specifically, given an elliptic curve defined over a finite field, a base point $P$ on that curve, and another point $Q$ resulting from scalar multiplication of $P$ by an unknown integer $m$, the ECDLP asks for the recovery of the scalar $m$ from the two points.
Despite decades of research, no classical algorithm is known that solves the ECDLP in polynomial time of bit length; the best generic attacks, such as Pollard's rho method, require $\Theta(\sqrt{p})$ group operations, where $p$ denotes the field size of the curve which is exponentially large \cite{galbraith2016recent}.

Compared to classical public-key systems such as RSA \cite{rivest1978method} and finite-field Diffie-Hellman \cite{diffie1976new}, ECC offers equivalent security with substantially smaller key sizes.
For example, according to NIST recommendations \cite{barker2020nist}, a 256-bit elliptic curve provides a security level of 128 bits, which is comparable to that of RSA with a 3072-bit modulus.
This efficiency advantage has led to the widespread adoption of ECC in both theoretical cryptographic constructions and real-world applications. Specifically, ECC is used for key exchange \cite{diffie1976new} and digital signature schemes \cite{elgamal1985public,johnson2001elliptic} in widely deployed protocols, including transport layer security \cite{blake2006elliptic}, secure shell \cite{stebila2009elliptic}, as well as cryptocurrencies such as Bitcoin \cite{blake1999standards,nakamoto2008peer}. 

In contrast to the computational hardness of ECDLP on classical computers, the security landscape changes fundamentally in the presence of large-scale quantum computers.
In 1994, Shor gave polynomial-time quantum algorithms for integer factorization and discrete logarithms \cite{shor1994algorithms}. Because the ECDLP is a discrete-logarithm problem in a cyclic subgroup of an elliptic-curve group, it can be efficiently solved quantumly by Shor’s algorithm whenever the elliptic-curve group operation can be implemented efficiently. Proos and Zalka subsequently developed an implementation with resource analysis for prime-field elliptic curves \cite{proos2003shor}. Thus, sufficiently large fault-tolerant quantum computers would break conventional ECC based on the hardness of the ECDLP.
At the same time, the practical implementation of quantum algorithms for cryptography problems remains severely constrained by the limitations of near-term quantum hardware. For instance, implementing Shor's algorithm at cryptographically relevant scales requires a fault-tolerant quantum computer with a large number of logical qubits and circuits, as well as sufficiently low physical error rates (or equivalently, achievable logical error rates after error correction) \cite{gidney2021factor,gidney2025factor}.

Motivated by these challenges, a substantial body of work has focused on reducing the quantum resource requirements of Shor's algorithm for integer factorization \cite{regev2025efficient,ragavan2024space,kahanamoku2025jacobi,chevignard2025reducing} and the ECDLP \cite{roetteler2017quantum,haner2020improved,gu2025resource,cryptoeprint:2026/106,cryptoeprint:2026/280,babbush2026ECC,schrottenloher2026optimized}.
In recent studies, three primary resource metrics have received the most attention: (i) space, measured by the number of logical qubits; (ii) quantum gate complexity; and (iii) circuit depth.
These resources are strongly coupled, and optimizing one metric often leads to increased costs in others \cite{roetteler2017quantum,haner2020improved,babbush2026ECC,schrottenloher2026optimized}.
For each single metric, the state-of-the-art results are:

\begin{itemize}
\item In terms of space complexity, Chevignard, Fouque, and Schrottenloher~\cite{cryptoeprint:2026/280} proposed a low-space variant of Shor's ECDLP algorithm based on residue-number-system arithmetic and compressed projective-coordinate outputs.
Their construction achieves the state-of-the-art asymptotic qubit complexity among existing implementations, requiring only \(3.12n+O(\sqrt n)\) logical qubits for an $n$-bit prime field.
However, this reduction in space comes at the cost of a significantly larger gate count of \(\widetilde O(n^4)\).

\item In terms of gate complexity, Babbush et al.~\cite{babbush2026ECC} reported substantially improved resource estimates for solving ECDLP of the curve secp256k1 using \(1175\) logical qubits and \(2^{26.27}\) Toffoli gates in the space-optimized setting, or \(1425\) logical qubits and \(2^{25.94}\) Toffoli gates in the gate-optimized setting. However, their circuit-level details are not publicly disclosed. 
Schrottenloher~\cite{schrottenloher2026optimized} subsequently gave an explicit and reproducible point addition architecture with comparable performance and slightly smaller Toffoli counts for secp256k1.

\item In terms of circuit depth, Kim et al.~\cite{cryptoeprint:2026/106} developed depth-optimized point-addition circuits based on QCSA-based Montgomery multiplication and low-depth binary-EEA inversion. They achieved the lowest circuit
depth with adequate ancilla qubits. In terms of the product of qubit count and full
depth, their qubit-optimized version achieves a 37\%--40\% improvement compared to the previous result by Häner et al.~\cite{haner2020improved}.
\end{itemize}

In this work, we study the space complexity of ECDLP. In particular, our focus is the modular inversion subroutine, which is the dominant source of workspace in affine-coordinate point addition. Our goal is to answer the following question:
\begin{center}
    {\em Can the quantum space requirements for solving the ECDLP be further reduced in practice?}
\end{center}

\subsection{Contributions}

The idea of reducing the space complexity of quantum ECDLP algorithms can be traced back to Proos and Zalka~\cite{proos2003shor}, who introduced a register-sharing strategy and obtained an asymptotic space bound of \(5n+O(\sqrt n)\) for solving the ECDLP over an \(n\)-bit prime field. However, their analysis was given at a high level of abstraction and did not provide explicit reversible circuit constructions or detailed resource estimates; in particular, issues such as reversible implementation and ancilla management were not fully addressed, and the approximate treatment of modular inversion introduces a fidelity loss of \(O(n^{-1})\). 

Our main contribution is to turn this low-space strategy into an exact and fully explicit circuit-level construction. Since modular inversion is the dominant contributor to the overall space, we focus on designing a more space-efficient reversible implementation of this component. Building on the algorithmic framework and register-sharing idea of Proos and Zalka, we refine the representation of the EEA state, introduce length registers and location-controlled arithmetic, and provide explicit circuit constructions with detailed resource estimates. As summarized in Table~\ref{tab:comp_qubit}, our construction achieves the best asymptotic logical-qubit complexity reported so far for ECDLP in the standard quantum circuit model.

\iffull
\begin{table}[ht]
\centering
\small
\begin{tabular}{|c|c|c|}
    \hline
    \textbf{Reference} & \textbf{Space of point addition} & \textbf{Space of modular inversion}\\
    \hline
    \cite{roetteler2017quantum} & $9n + 2\log_2 n + O(1)$ & $7n + 2\log_2 n + O(1)$\\
    \hline
    \cite{haner2020improved} & $8n + 10.2\log_2 n + O(1)$ & $7n + \log_2 n + O(1)$ \\
    \hline
    \cite{cryptoeprint:2026/280} & $3.12n+o(n)$ & not based on modular inversion\\
    \hline
    \cite{schrottenloher2026optimized} & $4.12n + O(\sqrt{n})$ & $2.12n + O(\sqrt{n})$\\
    \hline
    \textbf{This work} & $3n + 6\log_2 n + O(1)$ & $2n + 6\log_2 n + O(1)$ \\
    \hline
\end{tabular}
\caption{Comparison of the number of logical qubits required for ECDLP and modular inversion.}
\label{tab:comp_qubit}
\end{table}
\fi

\ifllncs
\begin{table}[ht]
\centering
\small
\begin{tabular}{|c|c|c|}
    \hline
    \textbf{Source} & \textbf{Number of logical qubits} & \textbf{Remark} \\
    \hline
    \cite{proos2003shor} without register sharing & $5n + 4\log_2 n + O(1)$ & \multirow{2}{*}{\begin{tabular}[c]{@{}c@{}}without explicit \\ circuit implementation\end{tabular}} \\
    \cline{1-2}
    \cite{proos2003shor} with register sharing & $3n + 8\sqrt{n} + 4\log_2 n + O(1)$ & \\
    \hline
    \cite{roetteler2017quantum} & $7n + 2\log_2 n + O(1)$ & \multirow{3}{*}{\begin{tabular}[c]{@{}c@{}}with explicit \\ circuit implementation\end{tabular}}  \\
    \cline{1-2}
    \cite{haner2020improved} & $7n + \log_2 n + O(1)$ & \\
    \cline{1-2}
    \textbf{This work} & $3n + 4\log_2 n + O(1)$ & \\
    \hline
\end{tabular}
\caption{Comparison of the number of logical qubits required for ECDLP.}
\label{tab:comp_qubit}
\end{table}
\fi

Second, we provide resource estimates for our quantum circuit implementation. Specifically, we show that a modular inversion circuit can be implemented using at most $229n^2 + O(n\log_2 n)$ Toffoli gates and $286n^2 + O(n\log_2 n)$ CNOT gates. By integrating this modular inversion circuit into an affine point addition circuit with mid-circuit measurement (Figures \ref{fig:measurement_point_addition} and \ref{fig:measurement_idiv}), we obtain a complexity of $1051n^2 + O(n\log_2 n)$ Toffoli gates per point addition. Furthermore, incorporating the point addition circuit into the complete quantum algorithm for solving the ECDLP using the signed-window technique \cite[Section 5.1]{haner2020improved} yields a total Toffoli-gate complexity of $1056n^3/\log_2 n + O(n^2)$.

We complement the asymptotic comparison with numerical gate-count estimates
and concrete secp256k1 resource estimates. Figure~\ref{fig:toffli_counts} reports the Toffoli and CNOT counts of our modular-inversion and point-addition
circuits as functions of the field size \(n\) and Table~\ref{tab:comp_qubit_concrete} compares the resulting secp256k1 estimates with recent quantum ECDLP implementations. For secp256k1, our construction uses 835 logical qubits, saving more than 250 logical qubits compared to 1098 qubits in the previous best space-optimized estimate of Chevignard et al.~\cite{cryptoeprint:2026/280}. Furthermore, our Toffoli count is two orders of magnitude lower than~\cite{cryptoeprint:2026/280}.
\begin{figure}[ht]
    \centering
    \begin{subfigure}{0.48\linewidth}
        \centering
        \includegraphics[width=\linewidth]{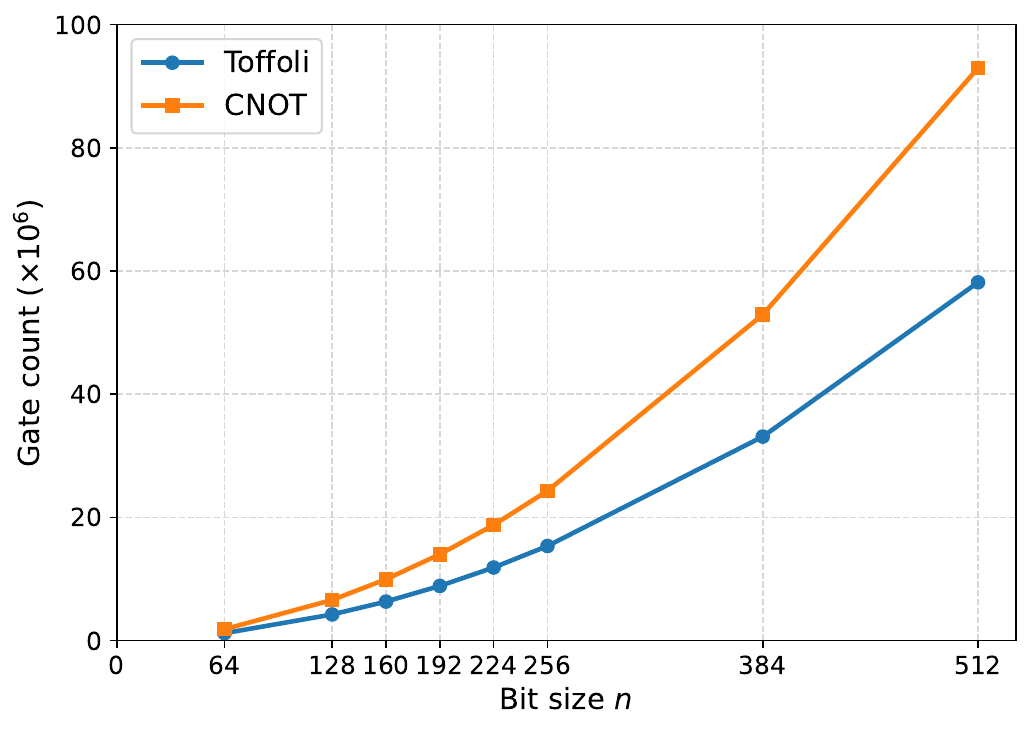}
        \caption{modular inversion count}
        \label{fig:modular_inversion}
    \end{subfigure}
    \hfill
    \begin{subfigure}{0.48\linewidth}
        \centering
        \includegraphics[width=\linewidth]{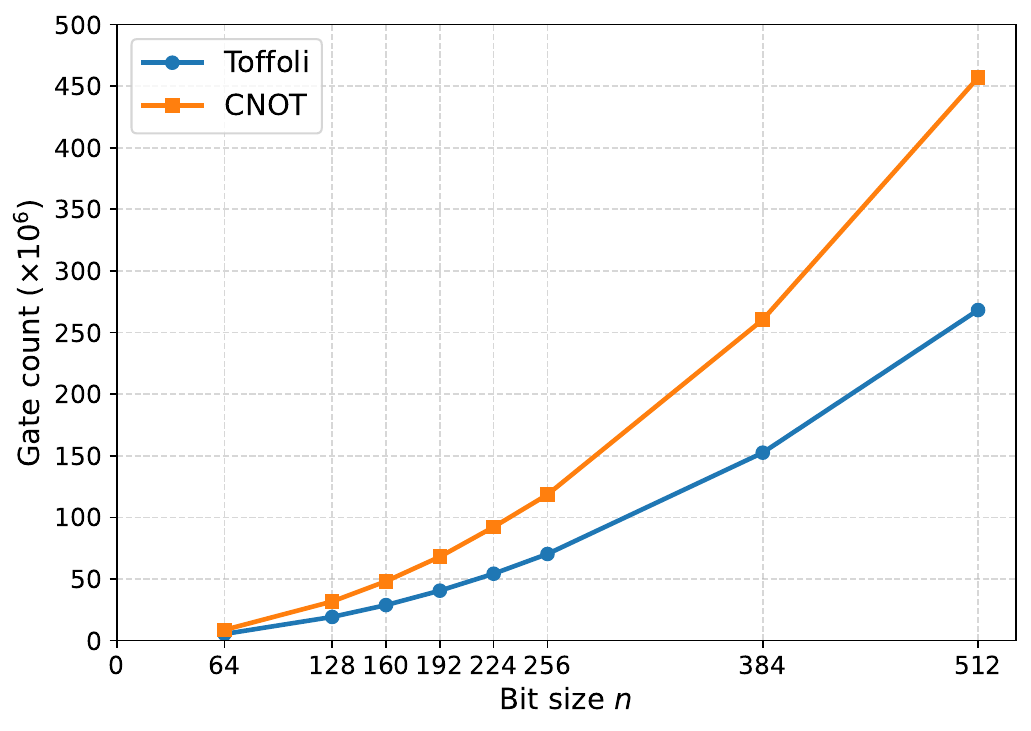}
        \caption{point addition count}
        \label{fig:point_addition}
    \end{subfigure}
    
    \caption{Toffoli gate count and CNOT gate count for modular inversion and point addition.}
    \label{fig:toffli_counts}
\end{figure}

\begin{table}[ht]
\centering
\renewcommand{\arraystretch}{1.05}
\begin{tabular}{@{}l l c c@{}}
\toprule
\textbf{Reference} & \textbf{Type} & \textbf{Qubits} & \textbf{Toffolis} \\
\midrule
\multirow{2}{*}{\cite{babbush2026ECC}}
& Space-optimized for secp256k1 & 1175 & $2^{26.27}$ \\
& Gate-optimized for secp256k1  & 1425 & $2^{25.94}$ \\
\midrule
\multirow{2}{*}{\cite{schrottenloher2026optimized}}
& Space-optimized for secp256k1 & 1192 & $2^{26.11}$ \\
& Gate-optimized for secp256k1  & 1446 & $2^{25.78}$ \\
\midrule
\cite{cryptoeprint:2026/280}
& Space-optimized for secp256k1 & 1098 & $2^{38.10}$ \\
\midrule
\textbf{This work}
& Space-optimized for secp256k1 & 835 & $2^{30.88}$ \\
\bottomrule
\end{tabular}
\caption{Comparison of the concrete costs of circuits for secp256k1.}
\label{tab:comp_qubit_concrete}
\end{table}
\subsection{Overview}
We present an overview of our quantum circuit for solving ECDLP.
The standard implementation of Shor's algorithm consists of $2n+2$ controlled elliptic-curve point additions followed by a quantum Fourier transform (QFT), as illustrated in Figure~\ref{fig:shor_QFT}.
Replacing the QFT with its semiclassical variant reduces the logical-qubit requirement without altering the algorithmic functionality, as shown in Figure~\ref{fig:shor_semiQFT}.

Each elliptic-curve point addition is decomposed into reversible modular arithmetic operations, including modular addition, multiplication, squaring, and division.
Among these, modular division is the dominant contributor to the logical-qubit cost.
Figure~\ref{fig:measurement_point_addition} illustrates the structure of our point-addition circuit, where these operations are encapsulated as modular arithmetic blocks.
In particular, the in-place modular division block is realized by the circuit shown in Figure~\ref{fig:measurement_idiv}, which consists of a forward execution of the extended Euclidean algorithm (EEA) for modular inversion, followed by its uncomputation.

We present each component from a top-down perspective as follows.

\paragraph{Space-efficient reversible algorithm for modular inversion (Section~\ref{sec:EEA}).}
Our modular inversion procedure, based on the EEA, is designed to implement the following quantum transformation:
\begin{align}
    \ket{x}_X\ket{0}_A\ket{0}_S
    \longmapsto
    \ket{x^{-1}\bmod p}_X\ket{0}_A\ket{\Gamma(x)}_S,
    \qquad x\in\mathbb{F}_p^\times,
    \label{eq:inv_api}
\end{align}
where $A$ is an $n$-qubit workspace register, $S$ denotes the remaining ancillary registers, and $\ket{\Gamma(x)}$ is the forward EEA state retained for later uncomputation.

We first revisit the four-phase framework of Proos and Zalka~\cite{proos2003shor} (see Section~\ref{subsec:4phase}), which is well suited for quantum implementation because the number of steps is linear in $n$.
Rather than relying on the (inaccurate) $4.5n$ upper bound stated in~\cite{proos2003shor}, we establish a rigorous bound showing that a complete execution of the EEA requires at most $4\lceil cn\rceil$ steps for every input $x$, where $c = 3 / \log_2(2 + \sqrt{3})$. Specifically, we consider a different roadmap: instead of computing the upper
bound of step number $N$, we fix the total number of steps and then compute the minimum possible value of $p$. To do this, we formulate the problem as finding a slowest growth pattern for products of matrices $M(q)$ with a fixed weight sum $N/4 = \sum_{i = 1}^m (\lfloor log_{2}{q_i}\rfloor + 1).$ Motivated by the asymptotically extremal alternating quotient pattern $(1,2,1,2,\ldots)$ whose normalized growth rate is $(2+\sqrt 3)^{1/3}$ per unit weight, we construct a finite polyhedral antinorm certificate, following the lower spectral radius framework of~\cite{guglielmi2013exact}, to prove that every admissible quotient sequence grows at least this rapidly. Details can be found in Appendix~\ref{app:step_num}.

Section~\ref{subsec:regshare} then develops an explicit and refined register-sharing strategy at the algorithmic level.
In the original construction of~\cite{proos2003shor}, a uniform register allocation is used to accommodate all inputs $x \in \mathbb{F}_p$ in superposition.
This worst-case allocation incurs an $O(\sqrt{n})$ space overhead and necessitates truncating rare outlier cases, introducing an $O(n^{-2})$ fidelity loss.
In contrast, we present an exact allocation scheme that guarantees a deterministic register bound for all inputs $x \in \mathbb{F}_p$.
This reduces the additional space overhead to $O(\log n)$ while preserving exact correctness.

Finally, starting from a baseline reversible realization of this space-efficient EEA, we introduce a sequence of algorithmic optimizations in Section~\ref{subsec:opt} that reorganize the update rules and eliminate redundant arithmetic operations.
We provide complete pseudocode (Algorithm~\ref{alg:opt}) together with representative classical execution traces (Table~\ref{tab:run_example}), which illustrate the reversibility and correctness of each step and clarify how the procedure can be translated into a quantum circuit.

\paragraph{Explicit quantum circuit construction for modular inversion (Section~\ref{sec:circuit}).}
To show that our modular inversion algorithm admits an explicit quantum circuit implementation, we present a detailed construction based on Algorithm~\ref{alg:opt}.
We begin with the overall circuit architecture (Figures~\ref{fig:all_circuit_1} and~\ref{fig:all_circuit_2} in Section~\ref{subsec:all_circuit}), where location-controlled operations arise as key non-standard components.

Before giving their explicit implementations, we establish the necessary tools.
In Section~\ref{subsec:actwindow}, we introduce step-dependent active windows, which restrict the relevant index range at each step and thereby reduce gate complexity.
We then introduce unary iteration (Section~\ref{subsec:unary_iteration}) as a primitive for implementing controlled indexed operations, which enables the realization of these location-controlled components.

Finally, in Section~\ref{subsec:blocks}, we present explicit constructions of the location-controlled circuit blocks, including swap, addition/subtraction, and length-update operations (Figures~\ref{fig:lc_swap}, \ref{fig:lc_add}, \ref{fig:length-write}, and~\ref{fig:length-z-transform}).
In addition to the circuit diagrams, we provide detailed explanations of the underlying techniques and design rationale.

\paragraph{Affine point addition with mid-circuit measurements (Section~\ref{sec:measurement_point_addition}).}
We further show that affine point addition can be implemented using our EEA inversion circuit with only $3n+O(\log n)$ logical qubits by allowing mid-circuit measurements and classical feed-forward operations. The main issue is the in-place division step
\[
    \ket{x}_X\ket{y}_Y\ket{0}_A
    \longmapsto
    \ket{x}_X\ket{y/x}_Y\ket{0}_A .
\]
To perform this map, the EEA inversion circuit in Equation~\eqref{eq:inv_api} is first
applied to $x$. This requires one clean $n$-qubit register $A$ and leaves the forward EEA
state $\Gamma(x)$ to be cleared later. After the quotient $y/x$ has been written into $A$,
however, the register $A$ is no longer clean, so the inverse EEA circuit has no available
$n$-qubit workspace.

As shown in Figure~\ref{fig:measurement_idiv}, we resolve this problem by
measuring the register containing the old value $y$ in the Hadamard basis and resetting it to zero. This makes the register $Y$ available as the clean workspace for the inverse EEA circuit, which recovers $x$ and clears $\Gamma(x)$. 
This measurement is valid because, after $y/x$ has been written into $A$, the old value $y$ is redundant and can be recovered from $x$ and $y/x$. Measuring $\ket{y}$ in the Hadamard basis gives a uniformly random outcome $b\in\{0,1\}^n$, and the only effect on the remaining registers is the phase $(-1)^{b\cdot y}$. Thus measuring the register $Y$ does not destroy coherence. The phase is removed by recomputing $y$ from $x$ and $y/x$, applying the corresponding Pauli $Z$ corrections, and uncomputing the recomputation. 
The same
measurement and correction procedure gives an in-place multiplication circuit. Consequently, affine point addition can be implemented using three $n$-qubit registers $X$, $Y$, and $A$, together with the $O(\log n)$ auxiliary qubits required by the EEA circuit.

We note that mid-circuit measurements have previously been used to reduce either the qubit count or the $T$-gate cost of quantum arithmetic circuits~\cite{gidney2019spooky,kornerup2025tight,luongo2024measurement,kim2021measurement}.
In particular, the lowest-qubit quantum algorithm currently known for the elliptic-curve discrete logarithm problem also makes extensive use of measurements to reduce the number of
qubits~\cite{cryptoeprint:2026/280}. That work does not keep the full elliptic-curve point, nor does it reduce the index registers using a semiclassical Fourier transform.
Instead of computing the full point, it computes a one-bit Legendre symbol from the projective coordinates of that point using residue-number-system arithmetic. Its measurements are used to release intermediate nodes in a spooky pebbling strategy for a binary tree~\cite{gidney2019spooky,kornerup2025tight}. In contrast, we keep the standard affine point-addition structure and use measurements only inside the in-place division and multiplication blocks, where they recycle a register
needed as workspace for running the inverse EEA circuit.

\paragraph{Resource estimation (Section \ref{sec:resource_estimation}).} Putting everything together, we quantify the space complexity and gate costs of the complete construction of Shor's ECDLP circuit.
We first show in Section~\ref{subsec:resource_space} that the modular inversion circuit requires \(2n + 6\lfloor \log_2 n \rfloor + 19\) logical qubits, and that the affine point addition circuit of Section~\ref{sec:measurement_point_addition} can therefore be implemented using \(3n + 6\lfloor \log_2 n \rfloor + 19\) logical qubits.
Section~\ref{subsec:inversion_costs} then combines the per-block gate costs (Section~\ref{subsec:blocks}) with the step-dependent active-window bounds (Section~\ref{subsec:actwindow}), to obtain an upper bound of \(229n^2 + O(n\log_2 n)\) Toffoli gates and \(286n^2 + O(n\log_2 n)\) CNOT gates for modular inversion, as well as an upper bound of \(1051n^2 + O(n\log_2 n)\) Toffoli gates for one affine point addition.

We complement these asymptotic bounds with concrete circuit generation and
blockwise compilation using Qiskit in Section~\ref{subsec:numerical}, reporting numerical
Toffoli and CNOT gate counts for prime-field sizes ranging from \(64\) to
\(512\) bits. Finally, in Section~\ref{subsec:component_costs}, we incorporate signed-window scalar
multiplication and the associated table look-ups into the complete Shor's ECDLP circuit. This yields an overall Toffoli complexity of
$1056n^3/\log_2 n + O(n^2).$ For the \texttt{secp256k1} curve, our construction requires \(835\) logical
qubits and \(2^{30.88}\) Toffoli gates. 

\subsection{Open questions} 

Our work leaves several open questions for future investigation:

\begin{itemize}
    \item \textbf{Gate-count optimization.}
    Our construction prioritizes logical-qubit efficiency and gives explicit reversible circuits for all arithmetic components. However, several building blocks, especially location-controlled arithmetic, length-register updates, and controlled modular operations, may still admit substantial improvements of their gate counts.
    
    \item \textbf{Depth analysis and space-depth trade-offs.}
    This paper focuses primarily on the number of logical qubits and Toffoli/CNOT gate counts, and does not provide a detailed analysis of circuit depth. A natural next step is to study the depth of our modular inversion and point-addition circuits, identify which parts of the computation can be parallelized, and quantify the resulting trade-offs between space, gate count, and depth.

    \item \textbf{Applications beyond elliptic-curve discrete logarithms.}
    We note that modular inversion is a general-purpose arithmetic primitive, not only restricted to ECDLP. It would be useful to investigate whether the space-efficient reversible EEA and register-sharing techniques developed here can improve other quantum algorithms that require modular division or inversion, such as Decoded Quantum Interferometry (DQI)~\cite{khattar2025verifiable}.
    
    \item \textbf{AI-assisted discovery of quantum arithmetic circuits.} 
    An important open question is whether AI systems can enable the automated design of quantum arithmetic circuits, supporting both the discovery of new resource-efficient circuits and the optimization of existing ones. One possible approach is to encode techniques such as windowing, unary iteration, register sharing and mid-circuit measurement as composable modules, and to develop AI-assisted evolutionary frameworks, inspired by systems such as AlphaEvolve~\cite{novikov2025alphaevolvecodingagentscientific} and SimpleTES~\cite{ye2026structuredscalingaidiscovery}, that iteratively explore their replacement and composition under automated correctness verification and resource-aware evaluation. 
\end{itemize}
\section{Preliminaries}

This section provides a very brief discussion of the basic concepts used in this work. We assume that readers have basic knowledge of quantum computation. For a general survey of quantum algorithms for algebraic problems, we recommend the survey written by Childs and van Dam~\cite{childs2010quantum}. 

\subsection{Quantum computing and the Toffoli counts}
\label{subsec:qc}
We write basis quantum states as $\ket{x}$, where $x$ is a bit string, and we model a quantum algorithm as a circuit composed of elementary gates applied to one or more qubits.
A key feature is that circuits can create and transform superpositions of basis states, enabling interference effects that are exploited by quantum algorithms.

At the fault-tolerant logical level, it is common to express circuits over a universal gate set such as Clifford+$T$ \cite{nielsen2010quantum}.
In many architectures the $T$ gate is substantially more expensive than Clifford operations, so circuit cost is frequently summarized by $T$-count, $T$-depth, or related metrics \cite{fowler2012surface}.
For design and validation, however, it is often convenient to use a purely reversible gate basis.

Accordingly, we express the main reversible components as \emph{Toffoli networks}. The Toffoli gate maps
\[
    \ket{x,y,z}\ \longmapsto\ \ket{x,y, z \oplus (x y)},
\]
and it forms a universal primitive for classical reversible computation \cite{nielsen2010quantum}.
A practical benefit of working at the Toffoli-network level is that such networks admit exact realizations over Clifford+$T$ \cite{amy2013meet}, avoiding approximation overhead associated with synthesizing arbitrary unitaries.
Moreover, Toffoli-based reversible circuits can be efficiently simulated on classical inputs at scales far beyond what is possible for general quantum-state simulation, which makes testing and debugging large arithmetic circuits significantly more tractable \cite{haner2016factoring}.
For these reasons, we specify the core arithmetic and elliptic-curve routines using Toffoli and CNOT gates.
\subsection{Elliptic curves and the ECDLP}
\label{subsec:ecdlp}

Let $p$ be a large prime and let $\mathbb{F}_p$ denote the finite field with $p$ elements.
An elliptic curve $E/\mathbb{F}_p$ (in short Weierstrass form) is the set of affine solutions $(x,y)\in \mathbb{F}_p^2$ to an equation $y^2 = x^3 + ax + b$, together with a distinguished point $\mathcal{O}$ at infinity.
The set of $\mathbb{F}_p$-rational points, together with $\mathcal{O}$, is denoted by $E(\mathbb{F}_p)$.

The points on $E(\mathbb{F}_p)$ form an abelian group under the elliptic-curve addition law, with identity element $\mathcal{O}$.
For affine points $P_1=(x_1,y_1)$ and $P_2=(x_2,y_2)$ that are not inverse to each other, the sum $P_3=P_1+P_2$ can be computed via a slope parameter $\lambda$: 
\[
    x_3 = \lambda^2 - x_1 - x_2, \qquad y_3 = \lambda(x_1 - x_3) - y_1,
\]
where $\lambda$ equals the chord slope $\frac{y_2 - y_1}{x_2 - x_1}$ when $P_1\neq P_2$, and the tangent slope $\frac{3x_1^2 + a}{2y_1}$ when $P_1 = P_2$.
Exceptional cases (e.g., $P_i = \mathcal{O}$ or $P_2 = -P_1$) are handled according to the standard group properties.

For an integer $m\ge 1$, the notation $[m]P$ denotes the $m$-fold sum of a point $P$ with itself, i.e. $[m]P = P + P + \cdots + P$, where $P$ occurs $m$ times.
This $m$-fold sum can be extended to all $m\in \mathbb{Z}$ by defining $[0]P = \mathcal{O}$ and $[-m]P = [m](-P)$ for $m \ge 1$.
The operation $m\mapsto [m]P$ is called \emph{scalar multiplication} and is the core primitive used in elliptic-curve cryptography.

Let $P\in E(\mathbb{F}_p)$ generates a cyclic subgroup $\langle P\rangle$ of order $r$, and let $Q\in\langle P\rangle$.
The \emph{elliptic-curve discrete logarithm problem (ECDLP)} is to recover the unique $m\in\{0,1,\ldots,r-1\}$ such that
\[
    Q = [m]P.
\]
In classical settings, the best generic algorithms require on the order of $\Theta(\sqrt{p})$ group operations \cite{galbraith2016recent}, which is exponential in the bit length $\log_2 p$.
\subsection{Shor's quantum algorithm for ECDLP}
\label{subsec:shor}

Shor's discrete-logarithm algorithm for solving ECDLP applies to elliptic curve over any finite abelian group and therefore to $E(\mathbb{F}_p)$.
Let $n = \lfloor\log_2 p\rfloor + 1$ be the bit-length of $p$.
The quantum procedure uses two exponent registers and one register storing an elliptic curve point.

Firstly, initialize two $(n+1)$-qubit registers\footnote{Hasse's bound \cite{hasse1936theorie} indicates that $\mathrm{ord}(P) \le \#E(\F_p) \le p + 1 + 2\sqrt{p}$, which can be represented by at most $n + 1$ bits.} to be $\ket{0^{n+1}, 0^{n+1}}$, and apply Hadamard gate to all the qubits to obtain a uniform superposition over pairs $(k, \ell)\in \{0, 1, \ldots, 2^{n+1} - 1\}^2$.
Next, coherently compute the elliptic curve group action into an accumulator as
\begin{equation}\label{eqn:shor}
    \sum_{k, \ell = 0}^{2^{n+1} - 1}
    \ket{k, \ell}\ket{\mathcal{O}}
    \;\longmapsto\;
    \sum_{k, \ell = 0}^{2^{n+1} - 1}
    \ket{k, \ell}\ket{[k]P + [\ell]Q}.
\end{equation}
After this step, apply Quantum Fourier Transform $\QFT_{2^{n+1}}$ to both exponent registers and then measure them.
Finally, a classical post-processing procedure uses the measurement outcomes to reconstruct the discrete logarithm $m$ with high probability, as shown in \cite{shor1994algorithms}.

In terms of resource, the dominant cost arises from the coherent group evaluation, i.e., the double-scalar multiplication on elliptic curve points.
To reduce the qubit cost associated with QFT, its full circuit on the exponent registers can be replaced by a semiclassical variant \cite{griffiths1996semiclassical}.
This semiclassical Fourier transform performs measurements during the computation, reusing qubits and applying conditional phase rotations based on previously observed outcomes.
The overall quantum circuits of Shor's algorithm for ECDLP using QFT (resp. semiclassical QFT) are shown in Figure \ref{fig:shor_QFT} (resp. Figure \ref{fig:shor_semiQFT}).
\begin{figure}[ht]
\centering
\footnotesize
\scalebox{\iffull 1 \else 0.85 \fi}{
\(\Qcircuit @C=0.8em @R=0.8em {
    & \lstick{\ket{0}} & \gate{H} & \ctrl{10} & \qw & \qw
    & \cdots & & \qw & \qw & \qw 
    & \qw & \cdots & & \qw & \qw 
    & \multigate{4}{\QFT_{2^{n+1}}} & \meter \\
    & \lstick{\ket{0}} & \gate{H} & \qw & \ctrl{9} & \qw 
    & \cdots & & \qw & \qw & \qw 
    & \qw & \cdots & & \qw & \qw 
    & \ghost{\QFT_{2^{n+1}}} & \meter \\
    & \lstick{\vdots\ } & & & & 
    & \vdots & & & &
    & & \vdots & & & 
    & & \\
    & & & & & & & & & & & & & & & & & \\
    & \lstick{\ket{0}} & \gate{H} & \qw & \qw & \qw 
    & \cdots & & \ctrl{6} & \qw & \qw 
    & \qw & \cdots & & \qw & \qw 
    & \ghost{\QFT_{2^{n+1}}} & \meter \\
    & \lstick{\ket{0}} & \gate{H} & \qw & \qw & \qw 
    & \cdots & & \qw & \ctrl{5} & \qw 
    & \qw & \cdots & & \qw & \qw 
    & \multigate{4}{\QFT_{2^{n+1}}} & \meter \\
    & \lstick{\ket{0}} & \gate{H} & \qw & \qw & \qw 
    & \cdots & & \qw & \qw & \ctrl{4} 
    & \qw & \cdots & & \qw & \qw 
    & \ghost{\QFT_{2^{n+1}}} & \meter \\
    & \lstick{\vdots\ } & & & & 
    & \vdots & & & &
    & & \vdots & & & 
    & & \\
    & & & & & & & & & & & & & & & & \\
    & \lstick{\ket{0}} & \gate{H} & \qw & \qw & \qw 
    & \cdots & & \qw & \qw & \qw 
    & \qw & \cdots & & \ctrl{1} & \qw 
    & \ghost{\QFT_{2^{n+1}}} & \meter \\
    & \lstick{\ket{\mathcal{O}}} & \qw\slash^{2n} & \gate{+[2^0]P} & \gate{+[2^1]P} & \qw 
    & \cdots & & \gate{+[2^n]P} & \gate{+[2^0]Q} & \gate{+[2^1]Q}
    & \qw & \cdots & & \gate{+[2^n]Q} & \qw 
    & \qw & \qw \\
}\)}
\caption{The overall quantum circuit of Shor’s algorithm for solving ECDLP using QFT. The qubits, from top to bottom, correspond to the exponent registers containing $k$ and $\ell$ in Equation \ref{eqn:shor} (ordered from lower-order bits to higher-order bits), and to the register that stores the elliptic curve point accumulator.
}
\label{fig:shor_QFT}
\end{figure}

\begin{figure}[ht]
\centering
\footnotesize
\scalebox{\iffull 1 \else 0.8 \fi}{
\(\Qcircuit @C=0.6em @R=1.0em {
    & & & & & \ustick{\mu_0}\cwx[1] 
    & & & & &
    & & & \ustick{\mu_1}\cwx[1] & &
    & & & & &
    & & & & \ustick{\mu_{2n+1}}\cwx[1] \\
    & \lstick{\ket{0}} & \gate{H} & \ctrl{1} & \gate{H} & \meter 
    & & \ket{0} & & \gate{H} & \ctrl{1}
    & \gate{R_{\theta_1}} & \gate{H} & \meter & & \cdots
    & & & \ket{0} & & \gate{H}
    & \ctrl{1} & \gate{R_{\theta_{2n+1}}} & \gate{H} & \meter \\
    & \lstick{\ket{\mathcal{O}}} & \qw\slash^{2n} & \gate{+[2^0]P} & \qw & \qw
    & \qw & \qw & \qw & \qw & \gate{+[2^1]P} 
    & \qw & \qw & \qw & & \cdots
    & & & \qw & \qw & \qw 
    & \gate{+[2^n]Q} & \qw & \qw & \qw \\
}\)}
\caption{The overall quantum circuit of Shor’s algorithm for solving ECDLP using semiclassical QFT. The gates $R_{\theta_i}$ denote rotation gates with rotation angle $\theta_i = \sum_{j = 0}^{i-1} 2^{i-j}\mu_j$, where the values $\mu_j\in \{0, 1\}$ are outcomes obtained in previous measurements. By employing the semiclassical QFT, the exponent register requires only a single qubit, resulting in a reduction of $2n+1$ qubits compared to the standard implementation.}
\label{fig:shor_semiQFT}
\end{figure}
\section{Space-Efficient Extended Euclidean Algorithm}
\label{sec:EEA}

As discussed in previous sections, the dominant cost in designing quantum circuits for solving ECDLP via Shor's algorithm arises from coherent group additions on elliptic curve points. Among these operations, modular inversion is the most resource-consuming component. In this section, we present a space-efficient and reversible algorithm for modular inversion based on the Extended Euclidean Algorithm. This design is suitable for further translation into a quantum circuit, as detailed in Section \ref{sec:circuit}.

Suppose that $p > x$ are two positive integers. The well-known \emph{Euclidean algorithm} can be used to compute the greatest common divisor (GCD) of $p$ and $x$. The algorithm initializes $r_0 := p$ and $r_1 := x$, and for iterations $i = 1, 2, \ldots$, it repeatedly divides $r_{i-1}$ by $r_i$ to obtain the quotient $q_i = \lfloor r_{i-1} / r_i \rfloor$ and the remainder $r_{i+1} = r_{i-1} - q_i r_i$. The process terminates when $r_{k+1} = 0$, at which point the greatest common divisor of $p$ and $x$ equals $r_k$.

The \emph{Extended Euclidean Algorithm (EEA)} not only computes the GCD of $p$ and $x$, but also outputs the modular inverse $x^{-1} \bmod p$ when $p$ and $x$ are coprime. In addition to computing the sequence $r_i$, it also initializes $t_0 := 0$ and $t_1 := 1$, and for each iteration $i = 1, 2, \ldots$, updates $t_{i+1} = t_{i-1} + q_i t_i$ after obtaining $q_i = \lfloor r_{i-1} / r_i \rfloor$. When the algorithm first encounters $r_k = 1$, the modular inverse is given by $(-1)^{k-1} t_k \bmod p$.

When realizing the Extended Euclidean algorithm on a quantum computer to compute the modular inverse of $x$ with a prime modulus $p$ of $n$ binary digits, the primary challenge lies in handling superpositions of $x$. The number of iterations required varies from $1$ to $O(n)$ depending on the input, implying that a straightforward implementation would require $O(n)$ iterations. Each iteration involves divisions and multiplications on $n$-digit numbers, leading to an overall gate complexity of $O(n^3)$.

In what follows, we describe the dedicated four-phase algorithm framework first proposed in \cite{proos2003shor}, followed by our own algorithmic optimizations and the details of our optimized circuit implementation.

\subsection{The four-phase algorithm framework}
\label{subsec:4phase}

The four-phase algorithm framework was originally introduced in \cite{proos2003shor}. It operates on six quantum registers, denoted as $r_{i-1}$, $r_i$, $t_{i-1}$, $t_i$, $q_i$, and $\ell$, where the first five registers correspond to the variables used in the $i$-th iteration of EEA. To compute the quotient $q_i = \lfloor r_{i-1} / r_i \rfloor$, the algorithm performs a binary long-division procedure consisting solely of bit shifts and long additions or subtractions. The algorithm proceeds through the following four phases:
\begin{itemize}
    \item \textbf{Phase 1.} The register containing $r_i$ is repeatedly shifted left by one bit, while the shift counter $\ell$ is incremented at each step. This continues until the inequality $2^\ell r_i > r_{i-1}$ holds. Conceptually, this phase determines the largest power of two by which $r_i$ can be multiplied without exceeding $r_{i-1}$. Equivalently, after the loop terminates, $\ell_0 := \ell - 1$ gives the highest bit position in the binary representation of $q_i$.

    \item \textbf{Phase 2.} The register containing $r_i$ is then shifted right by one bit per step, with $\ell$ decremented accordingly. At each step, the registers corresponding to $r_{i-1}$ and $q_i$ are updated as
    \[
        (r_{i-1} - q' r_i, q') 
        \to \left(r_{i-1} - (q' + 2^\ell q_{i, \ell}) r_i, q' + 2^\ell q_{i, \ell}\right),
    \]
    depending on whether $r_{i-1} - q' r_i$ is at least $2^{\ell} r_i$. Here $q'$ denotes the current partial quotient, given by $q' = 2^{\ell_0} q_{i, \ell_0} + 2^{\ell_0 - 1} q_{i, \ell_0 - 1} + \cdots + 2^{\ell + 1} q_{i, \ell + 1}$, representing the most significant bits accumulated so far in the binary expansion
    \[
        q_i = 2^{\ell_0} q_{i, \ell_0} + 2^{\ell_0 - 1} q_{i, \ell_0 - 1} + \cdots + q_{i, 0}.
    \]
    This iterative process ultimately yields the transformation
    \[
        (r_{i-1}, 0) 
        \to (r_{i-1} - q_i r_i, q_i),
    \]
    completing the computation of $q_i$.

    \item \textbf{Phase 3.} The register containing $t_i$ is shifted left by one bit per step, and $\ell$ is incremented correspondingly. The registers associated with $t_{i-1}$ and $q_i$ are updated as
    \[
        (t_{i-1} + q'' t_i, q') 
        \to \left(t_{i-1} + (q'' + 2^\ell q_{i, \ell}) t_i, q'\right),
    \]
    where the update depends on the bit value $q_{i, \ell}$ of $q_i$. After each step, $q_{i, \ell}$ is reset to $0$, depending on whether $t_{i-1} + (q'' + 2^\ell q_{i, \ell}) t_i > 2^{\ell} t_i$. Here $q'$ and $q''$ represent the most and least significant portions of the quotient, respectively:
    \[
        q' = 2^{\ell_0} q_{i, \ell_0} + 2^{\ell_0 - 1} q_{i, \ell_0 - 1} + \cdots + 2^{\ell} q_{i, \ell}, \quad
        q'' = 2^{\ell - 1} q_{i, \ell - 1} + 2^{\ell - 2} q_{i, \ell - 2} + \cdots + q_{i, 0}.
    \]
    After completing all steps, the registers corresponding to $t_{i-1}$ and $q_i$ are updated as
    \[
        (t_{i-1}, q_i) 
        \to (t_{i-1} + q_i t_i, 0).
    \]

    \item \textbf{Phase 4.} Finally, the register containing $t_i$ is repeatedly shifted right by one bit, and $\ell$ is decremented at each step. The process terminates once $\ell = 0$.
\end{itemize}

Upon completion of all four phases, SWAP operations are applied between the register pairs corresponding to $(r_{i-1}, r_i)$ (which is currently $(r_{i+1}, r_i)$) and the register pairs corresponding to  $(t_{i-1}, t_i)$ (which is currently $(t_{i+1}, t_i)$), respectively. This completes one full iteration of the EEA implementation.

We emphasize that, during the execution of this four-phase algorithm, the correspondence between the iteration index in EEA and the phase number may vary depending on the input value $x$. In other words, the algorithm’s internal progression through iterations and phases is input-dependent. To avoid ambiguity in the subsequent descriptions, we use ``step'' to represent any sub-iteration occurring within any phase of an EEA iteration. The example in Table \ref{tab:step_status} illustrates how the step status evolves for different input values under this four-phase framework.
\begin{table}[htbp]
\centering
\footnotesize
\scalebox{\iffull 1 \else 0.85 \fi}{
\begin{tabular}{cccc}
\hline
\textbf{Input}  & $x_1$ & $x_2$ & $x_3$ \\
\hline
\textbf{Step 1} & Iteration 1, Phase 1 & Iteration 1, Phase 1      & Iteration 1, Phase 1 \\
\hline
\textbf{Step 2} & Iteration 1, Phase 1 & Iteration 1, Phase 1      & Iteration 1, Phase 1 \\
\hline
\textbf{Step 3} & Iteration 1, Phase 2 & Iteration 1, Phase 1      & Iteration 1, Phase 1 \\
\hline
\textbf{Step 4} & Iteration 1, Phase 2 & Iteration 1, Phase 1      & Iteration 1, Phase 2 \\
\hline
\textbf{Step 5} & Iteration 1, Phase 3 & Iteration 1, Phase 1      & Iteration 1, Phase 2 \\
\hline
\textbf{Step 6} & Iteration 1, Phase 3 & Iteration 1, Phase 2      & Iteration 1, Phase 2 \\
\hline
\textbf{Step 7} & Iteration 1, Phase 4 & Iteration 1, Phase 2      & Iteration 1, Phase 3 \\
\hline
\textbf{Step 8} & Iteration 1, Phase 4 (and SWAP) & Iteration 1, Phase 2      & Iteration 1, Phase 3 \\
\hline
\textbf{Step 9} & Iteration 2, Phase 1 & Iteration 1, Phase 2      & Iteration 1, Phase 3 \\
\hline
\textbf{Step 10} & Iteration 2, Phase 1 & Iteration 1, Phase 2      & Iteration 1, Phase 4 \\
\hline
\textbf{Step 11} & Iteration 2, Phase 1 & Iteration 1, Phase 3      & Iteration 1, Phase 4 \\
\hline
\textbf{Step 12} & Iteration 2, Phase 2 & Iteration 1, Phase 3      & Iteration 1, Phase 4 (and SWAP) \\
\hline
\textbf{Step 13} & Iteration 2, Phase 2 & Iteration 1, Phase 3      & Iteration 2, Phase 1 \\
\hline
\textbf{Step 14} & Iteration 2, Phase 2 & Iteration 1, Phase 3      & Iteration 2, Phase 1 \\
\hline
\textbf{Step 15} & Iteration 2, Phase 3 & Iteration 1, Phase 3      & Iteration 2, Phase 2 \\
\hline
\textbf{Step 16} & Iteration 2, Phase 3 & Iteration 1, Phase 4      & Iteration 2, Phase 2 \\
\hline
\textbf{$\cdots$} & $\cdots$ & $\cdots$ & $\cdots$ \\
\hline
\textbf{Step 100} & Iteration 9, Phase 3 & Iteration 5, Phase 2      & Iteration 8, Phase 4 \\
\hline
\textbf{$\cdots$} & $\cdots$ & $\cdots$ & $\cdots$ \\
\hline
\end{tabular}}
\caption{An example illustrating that the EEA iteration index and phase number may vary depending on the input value $x$.}
\label{tab:step_status}
\end{table}

Although the sequence of step statuses differs for each input $x$, the total number of steps required to reach an index $k$ such that $r_{k-1} = 1, r_k = 0$ can be bounded within the interval $[4n, 4\lceil cn\rceil]$ with $c = 3 / \log_2(2 + \sqrt3)$, as formally proven in Appendix \ref{app:step_num}. This bounded-step property is crucial for quantum implementation: it implies that a quantum circuit corresponding to one ``step'' can be repeated at most $4\lceil cn\rceil < 6.316n$ times to definitely obtain the modular inverse of any input $x$ in superposition. Since each step involves arithmetic operations such as shifts, additions, and subtractions, which require $O(n)$ quantum gates, the overall gate complexity of the algorithm is reduced to $O(n^2)$. 

Keeping the above discussion in mind, the forward space-efficient EEA subroutine for computing the modular inverse $x^{-1} \bmod p$, where $p$ is a prime represented with $n$ binary digits and $x \in \{1, 2, \ldots, p - 1\}$, is summarized in Algorithm \ref{alg:full_EEA}. In this context: (i) the register allocation strategy is described in Section \ref{subsec:regshare}; (ii) Algorithm \ref{alg:opt} provides the optimized stepwise iteration designed to minimize gate complexity and ensure quantum reversibility, as further discussed in Sections \ref{subsec:regshare} and \ref{subsec:opt}; (iii) when $x > p/2$, we run the EEA on $p-x$ instead of $x$ and record this choice in the iteration-parity qubit.
\begin{algorithm}[htbp]
\small
\caption{Forward space-efficient EEA inversion subroutine}
\begin{algorithmic}
\Require Quantum registers $\ket{x}\ket{0^n}\ket{0^m}$, where the first $n$-qubit register is the input/output register, the second $n$-qubit register is a large workspace, and the remaining $m$ qubits serve as auxiliary qubits.
\Ensure Output state $\ket{x^{-1}\bmod p}\ket{0^n}\ket{\Gamma(x)}$, where $\ket{\Gamma(x)}$ denotes the remaining forward EEA state.

\State Initialize register \texttt{Work1} with $(n + 3)$ qubits as $\ket{100, p}$ \Comment{Using $(n + 3)$ auxiliary qubits}
\State Initialize register \texttt{Work2} with $(n + 3)$ qubits as $\ket{000, x}$ \Comment{Pad $x$ to $n$ bits on the left}
\State Initialize \texttt{Length} registers to $\ket{\ell_t := 1, \ell_q := 0, \ell_{r'} := \text{length of } x, \ell_s := 0}$
\State Initialize \texttt{Control} registers to $\ket{\texttt{Phase1} := 0, \texttt{Phase2} := 0, \texttt{Sign} := 0, \texttt{Iter} := 0}$

\If{$x > p / 2$} \Comment{Ensures that the algorithm is invertible}
    \State $\texttt{Iter} \gets \texttt{Iter} \oplus 1$
\EndIf
\If{$\texttt{Iter} = 1$}
    \State $x \gets p - x$
\EndIf

\For{$i = 1, 2, \cdots, 4\lceil cn\rceil$} \Comment{Main iterative loop performing the optimized algorithm steps}
    \State Execute Algorithm~\ref{alg:opt} on the quantum registers
\EndFor

\State Keep the $t'$ part of \texttt{Work2} as the output value
\If{$\texttt{Iter} = 0$} \Comment{Convert $t'$ to the positive inverse modulo $p$}
    \State $t' \gets p - t'$
\EndIf
\State Reset the input-independent terminal contents of the large workspace to $\ket{0^n}$
\State Denote the remaining forward EEA state by $\ket{\Gamma(x)}$

\end{algorithmic}
\label{alg:full_EEA}
\end{algorithm}

\subsection{Space-efficient algorithm by register sharing}
\label{subsec:regshare}

In \cite{proos2003shor}, the authors introduced the concept of \emph{register sharing}. The central idea of register sharing is that, during the execution of the EEA, the sequence $\{r_i\}$ is monotonically decreasing while the sequence $\{t_i\}$ is monotonically increasing. This complementary behavior allows both values to occupy the same quantum register at different stages of computation, thereby reducing the overall space requirement. The identity
\[
    r_{i-1}t_i + r_it_{i-1} = p
\]
implies that two quantum registers, each consisting of $(n + 2)$ qubits, are sufficient: one to store the pair $(r_{i-1}, t_i)$ and the other to store $(r_i, t_{i-1})$.

We extend this register-sharing idea further. Specifically, we observe that a single quantum register of $(n + 2)$ qubits can be used to store the triple $(r_{i-1}, t_i, q_i)$ simultaneously, including the intermediate states that arise during phases 2 and 3 of the algorithm. These phases are responsible for computing $q_i = \lfloor r_{i-1} / r_i \rfloor$ and updating the variables
\[
    r_{i-1} \leftarrow r_{i-1} - q_i r_i, t_{i-1} \leftarrow t_{i-1} + q_i t_i.
\]
The detailed allocation of quantum registers is described as follows, and illustrated in Figure \ref{fig:regshare}.
\begin{itemize}
    \item \textbf{\texttt{Work1} register (of $(n + 3)$ qubits).}  
    The left-most $\ell_t + 1$ qubits encode the value $t := t_i$ in little-endian order (here we append a zero to the rightmost position of $t$ in order to simplify the circuit implementation). The following $\ell_q$ qubits store the intermediate quotient value $q := q'$ (only the effective most significant bits) in phases 2 and 3, in big-endian order. The remaining qubits hold the value $r := r_{i-1}$ (or its updated form $r := r_{i-1} - q'r_i$) in big-endian order.

    \item \textbf{\texttt{Work2} register (of $(n + 3)$ qubits).}
    The right-most $\ell_{r'}$ qubits store $r' := r_i$ in big-endian order, while the remaining qubits store the value $t' := t_{i-1}$ (and its intermediate update $t' := t_{i-1} + q''t_i$) in little-endian order. During phases 2 and 3, this register may be circularly shifted left by $\ell_s$ positions. In such cases, the value $t'$ may span both ends of the register, effectively splitting into two contiguous parts.

    \item \textbf{\texttt{Length} registers.}   
    Four auxiliary registers are used to manage variable-length storage. Each of them contains $\lfloor \log_2 n\rfloor + 1$ qubits. They store the length indicators $\ell_t, \ell_q, \ell_{r'}$, and the shift counter $\ell_s$ respectively. These \texttt{Length} registers mark the logical boundaries within the \texttt{Work1} and \texttt{Work2} registers and enable coherent manipulation of superposed inputs of varying lengths.

    \item \textbf{\texttt{Control} registers.}  
    Two single-qubit \texttt{Phase} registers indicate the current phase of the four-phase algorithmic framework, and one single-qubit \texttt{Iter} register that records the parity of the current EEA iteration number. Additionally, there are two auxiliary single-qubit registers, \texttt{Sign} and \texttt{Ctrl}, used for intermediate flag and control operations.
\end{itemize}

\begin{figure}[ht]
\centering
\scalebox{\iffull 0.9 \else 0.75 \fi}{
\begin{tikzpicture}[
    scale=0.5,
    transform shape,
    font=\Huge,
    >=Stealth,
    segR/.style={draw=myred, line width=6pt, line cap=rect},
    segB/.style={draw=myblue, line width=6pt, line cap=rect},
    segG/.style={draw=mygreen, line width=6pt, line cap=rect},
    braceR1/.style={decorate, decoration={brace, amplitude=4pt, mirror}, draw=myred, line width=1pt},
    braceR2/.style={decorate, decoration={brace, amplitude=4pt}, draw=myred, line width=1pt},
    braceB/.style={decorate, decoration={brace, amplitude=4pt}, draw=myblue, line width=1pt},
    braceG/.style={decorate, decoration={brace, amplitude=4pt, mirror}, draw=mygreen, line width=1pt},
    phaseArrow/.style={->, line width=1pt},
]

\usetikzlibrary{decorations.pathreplacing,arrows.meta,positioning,calc}
\usetikzlibrary{arrows.meta,decorations.pathreplacing,calc}

\definecolor{myred}{RGB}{240,40,40}
\definecolor{myblue}{RGB}{40,40,240}
\definecolor{mygreen}{RGB}{0,160,0}

\def\Xl{0.0}
\def\Xre{10.0}
\def\Ytop{0.0}
\def\Ybot{-1.8}
\def\rDist{1.4}

\def\Lt{4.0}
\def\Ltp{3.0}
\def\Lr{5.0}
\def\Lrp{3.0}
\def\Lrs{2.5}
\def\Lq{1.5}
\def\Ls{1.4}
\def\Gap{0.6}

\pgfmathsetmacro{\Xr}{\Xre-\Lr}     
\pgfmathsetmacro{\Xrp}{\Xre-\Lrp}   
\pgfmathsetmacro{\Xrs}{\Xre-\Lrs}   

\coordinate (A) at (0,0);        
\coordinate (B) at (16,0);      
\coordinate (C) at (16,-6);   
\coordinate (D) at (0,-6);    
\coordinate (E) at (0,-12);      
\coordinate (F) at (16,-12);    

\begin{scope}[shift={(A)}]
    \draw[line width=1.0pt] (\Xl-\rDist,\Ytop+\rDist) rectangle (\Xre+\rDist,\Ybot-\rDist);
    
    \coordinate (tA1) at (\Xl,\Ytop);
    \coordinate (tA2) at (\Xl+\Lt,\Ytop);
    \draw[segR] (tA1) -- (tA2);
    \node[above=6pt, text=myred] at ($(tA1)!0.5!(tA2)$) {$t$};
    \draw[braceR1] (\Xl,\Ytop-0.3) -- (\Xl+\Lt,\Ytop-0.3) node[midway, below=6pt, text=myred] {\huge{$\ell_t$}};
    
    \coordinate (rA1) at (\Xr,\Ytop);
    \coordinate (rA2) at (\Xre,\Ytop);
    \draw[segB] (rA1) -- (rA2);
    \node[above=6pt, text=myblue] at ($(rA1)!0.5!(rA2)$) {$r$};
    
    \coordinate (tpA1) at (\Xl,\Ybot);
    \coordinate (tpA2) at (\Xl+\Ltp,\Ybot);
    \draw[segR] (tpA1) -- (tpA2);
    \node[below=6pt, text=myred] at ($(tpA1)!0.5!(tpA2)$) {$t'$};
    
    \coordinate (rpA1) at (\Xrp,\Ybot);
    \coordinate (rpA2) at (\Xre,\Ybot);
    \draw[segB] (rpA1) -- (rpA2);
    \node[below=6pt, text=myblue] at ($(rpA1)!0.5!(rpA2)$) {$r'$};
    \draw[braceB] (\Xrp,\Ybot+0.3) -- (\Xre,\Ybot+0.3) node[midway, above=6pt, text=myblue] {\huge{$\ell_{r'}$}};
\end{scope}

\draw[phaseArrow] ($(A)+(11.7,-0.9)$) -- ($(B)+(-1.7,-0.9)$)
node[midway, above=4pt] {\huge{\texttt{Phase 1}}};

\begin{scope}[shift={(B)}]
    \draw[line width=1.0pt] (\Xl-\rDist,\Ytop+\rDist) rectangle (\Xre+\rDist,\Ybot-\rDist);
    
    \coordinate (tB1) at (\Xl,\Ytop);
    \coordinate (tB2) at (\Xl+\Lt,\Ytop);
    \draw[segR] (tB1) -- (tB2);
    \node[above=6pt, text=myred] at ($(tB1)!0.5!(tB2)$) {$t$};
    
    \coordinate (rB1) at (\Xr,\Ytop);
    \coordinate (rB2) at (\Xre,\Ytop);
    \draw[segB] (rB1) -- (rB2);
    \node[above=6pt, text=myblue] at ($(rB1)!0.5!(rB2)$) {$r$};
    
    \coordinate (sL1) at (\Xl,\Ybot);
    \coordinate (sL2) at (\Xl+\Ltp-\Ls,\Ybot);
    \draw[segR] (sL1) -- (sL2);
    \node[below=6pt, text=myred] at ($(sL1)!0.5!(sL2)$) {$t'$(h)};

    \coordinate (rpB1) at (\Xr,\Ybot);               
    \coordinate (rpB2) at (\Xre-\Ls-\Gap,\Ybot); 
    \draw[segB] (rpB1) -- (rpB2);
    \node[below=6pt, text=myblue] at ($(rpB1)!0.5!(rpB2)$) {$r'$};

    \coordinate (sR1) at (\Xre-\Ls,\Ybot);
    \coordinate (sR2) at (\Xre,\Ybot);
    \draw[segR] (sR1) -- (sR2);
    \node[below=6pt, text=myred] at ($(sR1)!0.5!(sR2)$) {$t'$(l)};
    \draw[braceR2] (\Xre-\Ls,\Ybot+0.3) -- (\Xre,\Ybot+0.3) node[midway, above=6pt, text=myred] {\huge{$\ell_s$}};
\end{scope}

\draw[phaseArrow] ($(B)+(5.0,-3.4)$) -- ($(C)+(5.0,1.6)$);
\node[anchor=east] at ($(B)+(4.8,-3.8)$) {\huge{\texttt{Phase 2}}};

\begin{scope}[shift={(C)}]
    \draw[line width=1.0pt] (\Xl-\rDist,\Ytop+\rDist) rectangle (\Xre+\rDist,\Ybot-\rDist);
    
    \coordinate (tC1) at (\Xl,\Ytop);
    \coordinate (tC2) at (\Xl+\Lt,\Ytop);
    \draw[segR] (tC1) -- (tC2);
    \node[above=6pt, text=myred] at ($(tC1)!0.5!(tC2)$) {$t$};

    \coordinate (rsC1) at (\Xrs,\Ytop);
    \coordinate (rsC2) at (\Xre,\Ytop);
    \draw[segB] (rsC1) -- (rsC2);
    \node[above=6pt, text=myblue] at ($(rsC1)!0.5!(rsC2)$) {$r-qr'$};

    \coordinate (qC1) at (\Xl+\Lt+\Gap,\Ytop);
    \coordinate (qC2) at (\Xl+\Lt+\Gap+\Lq,\Ytop);
    \draw[segG] (qC1) -- (qC2);
    \node[above=6pt, text=mygreen] at ($(qC1)!0.5!(qC2)$) {$q$};
    \draw[braceG] ($(qC1)+(0,-0.3)$) -- ($(qC2)+(0,-0.3)$) node[midway, below=6pt, text=mygreen] {\huge{$\ell_q$}};

    \coordinate (tpC1) at (\Xl,\Ybot);
    \coordinate (tpC2) at (\Xl+\Ltp,\Ybot);
    \draw[segR] (tpC1) -- (tpC2);
    \node[below=6pt, text=myred] at ($(tpC1)!0.5!(tpC2)$) {$t'$};

    \coordinate (rpC1) at (\Xrp,\Ybot);
    \coordinate (rpC2) at (\Xre,\Ybot);
    \draw[segB] (rpC1) -- (rpC2);
    \node[below=6pt, text=myblue] at ($(rpC1)!0.5!(rpC2)$) {$r'$};
\end{scope}

\draw[phaseArrow] ($(C)+(-1.7,-0.9)$) -- ($(D)+(11.7,-0.9)$)
node[midway, above=4pt] {\huge{\texttt{Phase 3}}};

\begin{scope}[shift={(D)}]
    \draw[line width=1.0pt] (\Xl-\rDist,\Ytop+\rDist) rectangle (\Xre+\rDist,\Ybot-\rDist);

    \coordinate (tD1) at (\Xl,\Ytop);
    \coordinate (tD2) at (\Xl+\Lt,\Ytop);
    \draw[segR] (tD1) -- (tD2);
    \node[above=6pt, text=myred] at ($(tD1)!0.5!(tD2)$) {$t$};

    \coordinate (rsD1) at (\Xrs,\Ytop);
    \coordinate (rsD2) at (\Xre,\Ytop);
    \draw[segB] (rsD1) -- (rsD2);
    \node[above=6pt, text=myblue] at ($(rsD1)!0.5!(rsD2)$) {$r-qr'$};

    \coordinate (lD1) at (\Xl,\Ybot);
    \coordinate (lD2) at (\Xre-\Ls-\Gap-\Lrp-1.0,\Ybot);
    \draw[segR] (lD1) -- (lD2);
    \node[below=6pt, text=myred] at ($(lD1)!0.5!(lD2)$) {$t'$(h)};

    \coordinate (rpD1) at (\Xre-\Ls-\Gap-\Lrp,\Ybot);     
    \coordinate (rpD2) at (\Xre-\Ls-\Gap,\Ybot);
    \draw[segB] (rpD1) -- (rpD2);
    \node[below=6pt, text=myblue] at ($(rpD1)!0.5!(rpD2)$) {$r'$};

    \coordinate (sD1) at (\Xre-\Ls,\Ybot);
    \coordinate (sD2) at (\Xre,\Ybot);
    \draw[segR] (sD1) -- (sD2);
    \node[below=6pt, text=myred] at ($(sD1)!0.5!(sD2)$) {$t'$(l)};
\end{scope}

\draw[phaseArrow] ($(D)+(5.0,-3.4)$) -- ($(E)+(5.0,1.6)$);
\node[anchor=west] at ($(D)+(5.2,-3.8)$) {\huge{\texttt{Phase 4}}};

\begin{scope}[shift={(E)}]
    \draw[line width=1.0pt] (\Xl-\rDist,\Ytop+\rDist) rectangle (\Xre+\rDist,\Ybot-\rDist);

    \coordinate (tE1) at (\Xl,\Ytop);
    \coordinate (tE2) at (\Xl+\Lt,\Ytop);
    \draw[segR] (tE1) -- (tE2);
    \node[above=6pt, text=myred] at ($(tE1)!0.5!(tE2)$) {$t$};

    \coordinate (rsE1) at (\Xrs,\Ytop);
    \coordinate (rsE2) at (\Xre,\Ytop);
    \draw[segB] (rsE1) -- (rsE2);
    \node[above=6pt, text=myblue] at ($(rsE1)!0.5!(rsE2)$) {$r-qr'$};

    \coordinate (bigE1) at (\Xl,\Ybot);
    \coordinate (bigE2) at (\Xrp-1.0,\Ybot);
    \draw[segR] (bigE1) -- (bigE2);
    \node[below=6pt, text=myred] at ($(bigE1)!0.5!(bigE2)$) {$t'+qt$};

    \coordinate (rpE1) at (\Xrp,\Ybot);
    \coordinate (rpE2) at (\Xre,\Ybot);
    \draw[segB] (rpE1) -- (rpE2);
    \node[below=6pt, text=myblue] at ($(rpE1)!0.5!(rpE2)$) {$r'$};
\end{scope}

\draw[phaseArrow] ($(E)+(11.7,-0.9)$) -- ($(F)+(-1.7,-0.9)$)
node[midway, above=4pt] {\huge{\texttt{Swap}}};

\begin{scope}[shift={(F)}]
    \draw[line width=1.0pt] (\Xl-\rDist,\Ytop+\rDist) rectangle (\Xre+\rDist,\Ybot-\rDist);

    \coordinate (bigF1) at (\Xl,\Ytop);
    \coordinate (bigF2) at (\Xrp-1.0,\Ytop);
    \draw[segR] (bigF1) -- (bigF2);
    \node[above=6pt, text=myred] at ($(bigF1)!0.5!(bigF2)$) {$t'+qt$};

    \coordinate (rpF1) at (\Xrp,\Ytop);
    \coordinate (rpF2) at (\Xre,\Ytop);
    \draw[segB] (rpF1) -- (rpF2);
    \node[above=6pt, text=myblue] at ($(rpF1)!0.5!(rpF2)$) {$r'$};

    \coordinate (tF1) at (\Xl,\Ybot);
    \coordinate (tF2) at (\Xl+\Lt,\Ybot);
    \draw[segR] (tF1) -- (tF2);
    \node[below=6pt, text=myred] at ($(tF1)!0.5!(tF2)$) {$t$};

    \coordinate (rsF1) at (\Xrs,\Ybot);
    \coordinate (rsF2) at (\Xre,\Ybot);
    \draw[segB] (rsF1) -- (rsF2);
    \node[below=6pt, text=myblue] at ($(rsF1)!0.5!(rsF2)$) {$r-qr'$};
\end{scope}
\end{tikzpicture}}

\caption{An illustration of how the two \texttt{Work} registers are allocated for temporary variables within a single iteration of the EEA. The upper and lower stripes represent the \texttt{Work1} and \texttt{Work2} registers, respectively. The symbols (h) and (l) indicate that the value $t'$ is split into its higher-order bits (h) and lower-order bits (l), which are placed at the corresponding positions on the two sides of the \texttt{Work2} register.}
\label{fig:regshare}
\end{figure}

We next describe the behavior of the system when performing comparison, addition, and subtraction operations on the register pairs $(r, r')$ and $(t, t')$. The arithmetic between $r$ and $r'$ is relatively straightforward: a left shift by $\ell_s$ positions corresponds to multiplying $r'$ by $2^{\ell_s}$. In contrast, the arithmetic involving $t$ and $t'$ requires a more careful interpretation. A convenient way to view this operation is that a left shift of $\ell_s$ positions can be regarded as dividing $t'$ by $2^{\ell_s}$, where the integer part is placed in the left-most portion of \texttt{Work2}, and the fractional part occupies the right-most portion. When $t$ is added to the integer part of $t'$ (which is properly aligned), the arithmetic relation can be expressed as $2^{-\ell_s}t' + t = 2^{-\ell_s}(t' + 2^{\ell_s}t)$.
Thus, the expression $t' + 2^{\ell_s}t$ represents the desired result, corresponding to a left shift by $\ell_s$ positions.
In Phase 4, the comparison between $t'$ and $2^{\ell_s}t$ acts on the left-most $(n + 3 - \ell_{r'} - \ell_s)$ qubits. Here $\ell_s$ denotes its current value before the subsequent right shift of \texttt{Work2}.

With this register allocation, one step of our space-efficient EEA proceeds as in Algorithm \ref{alg:regshare}. To illustrate how the proposed space-efficient EEA operates in practice, we also provide a concrete execution example for $p = 37$ and $x = 13$, shown in Table \ref{tab:run_example}.
\begin{algorithm}[htbp]
\small
\caption{One step of our EEA implementation with register sharing}
\begin{algorithmic}
\Require \texttt{Work1} register stores $\ket{t, q, r}$, \texttt{Work2} register stores $\ket{t', r'}$
\Require \texttt{Length} registers store $\ket{\ell_t, \ell_q, \ell_{r'}, \ell_s}$
\Require \texttt{Control} registers store $\ket{\texttt{Phase1}, \texttt{Phase2}, \texttt{Iter}, \texttt{Sign}}$.

\If{$(\texttt{Phase1}, \texttt{Phase2}) = (0, 0)$} \Comment{Arithmetic logic for the four phases}
    \State Perform a one-position left shift on \texttt{Work2}
    \State $\ell_s \gets \ell_s + 1$
    \State $\texttt{Sign} \gets \texttt{Sign} \oplus (r < 2^{\ell_s}r')$ \Comment{Act on qubits $(\ell_t + \ell_q + 2)$ through $(n + 3 - \ell_s)$}
\ElsIf{$(\texttt{Phase1}, \texttt{Phase2}) = (0, 1)$}
    \State Perform a one-position right shift on \texttt{Work2}
    \State $\ell_s \gets \ell_s - 1$, $\ell_q \gets \ell_q + 1$
    \State $(\texttt{Sign}, r) \gets (\texttt{Sign}, r) - 2^{\ell_s}r'$ \Comment{Use register \texttt{Sign} to store the sign of the subtraction result}
    \If{$\texttt{Sign} = 1$}
        \State $r \gets r + 2^{\ell_s}r'$ \Comment{Ignore overflows}
    \EndIf
    \State $\texttt{Sign} \gets \texttt{Sign} \oplus 1$
    \State Swap $\texttt{Sign}$ with the $(\ell_t + \ell_q + 1)$-th qubit of \texttt{Work1} \Comment{The least significant work qubit of $q$}
\ElsIf{$(\texttt{Phase1}, \texttt{Phase2}) = (1, 0)$}
    \State Swap $\texttt{Sign}$ with the $(\ell_t + \ell_q + 1)$-th qubit of \texttt{Work1}
    \If{$\texttt{Sign} = 0$}
        \State $t' \gets t' - 2^{\ell_s}t$ \Comment{Act on left-most $(\ell_t + 1)$ qubits}
    \EndIf
    \State $\texttt{Sign} \gets \texttt{Sign} \oplus 1$
    \State $(\texttt{Sign}, t') \gets (\texttt{Sign}, t') + 2^{\ell_s}t$ \Comment{Use register \texttt{Sign} to store the sign of the addition result}
    \State Perform a one-position left shift on \texttt{Work2}
    \State $\ell_s \gets \ell_s + 1$, $\ell_q \gets \ell_q - 1$
\ElsIf{$(\texttt{Phase1}, \texttt{Phase2}) = (1, 1)$}
    \State $\texttt{Sign} \gets \texttt{Sign} \oplus (t' \ge 2^{\ell_s}t)$ \Comment{Act on left-most $(n + 3 - \ell_{r'} - \ell_s)$ qubits}
    \State Perform a one-position right shift on \texttt{Work2}
    \State $\ell_s \gets \ell_s - 1$
\EndIf

\If{$\ell_q = 0$ \textbf{and} $\ell_{r'} > 0$} \Comment{Phase update logic; $\ell_{r'} = 0$ indicates algorithm termination}
    \State $\texttt{Phase2} \gets \texttt{Phase2} \oplus \texttt{Sign} \oplus \texttt{Phase1}$, $\texttt{Sign} \gets \texttt{Sign} \oplus \texttt{Phase2}$
\EndIf
\If{$\ell_s = 0$}
    \State $\texttt{Phase1} \gets \texttt{Phase1} \oplus 1$, $\texttt{Phase2} \gets \texttt{Phase2} \oplus 1$
\EndIf

\If{$\ell_q = 0$ \textbf{and} $\ell_s = 0$} \Comment{Register swapping at the end of one EEA iteration}
    \State Swap \texttt{Work1} and \texttt{Work2}
    \State Update $\ell_t$ to the bit length of the new $t$ \Comment{Act on the left-most $(n + 3 - \ell_{r'})$ qubits of \texttt{Work1} and \texttt{Work2}}
    \State Update $\ell_{r'}$ to the bit length of the new $r'$ \Comment{Act on the right-most $(n + 2 - \ell_t)$ qubits of \texttt{Work1} and \texttt{Work2}}
    \State $\texttt{Iter} \gets \texttt{Iter} \oplus 1$
\EndIf
\end{algorithmic}
\label{alg:regshare}
\end{algorithm}

\begin{table}[htbp]
\centering
\footnotesize
\begin{tabular}{c c c c c c c c c c c c c c c c c}
\hline
\textbf{Step} & \texttt{Work1} & \texttt{Work2} & $t$ & $q$ & $r$ & $t'$ & $r'$ & $\ell_t$ & $\ell_q$ & $\ell_{r'}$ & $\ell_s$ & \texttt{Phase1} & \texttt{Phase2} & \texttt{Iter} & \texttt{Sign} \\
\hline
0 & \verb+10||0100101+ & \verb+00000|1101|+ & 1 & 0 & 37 & 0 & 13 & 1 & 0 & 4 & 0 & 0 & 0 & 0 & 0 \\
\hline
1 & \verb+10||0100101+ & \verb+0000|1101|0+ & 1 & 0 & 37 & 0 & 13 & 1 & 0 & 4 & 1 & 0 & 0 & 0 & 0 \\
\hline
2 & \verb+10||0100101+ & \verb+000|1101|00+ & 1 & 0 & 37 & 0 & 13 & 1 & 0 & 4 & 2 & 0 & 1 & 0 & 0 \\
\hline
3 & \verb+10|1|001011+ & \verb+0000|1101|0+ & 1 & 2 & 11 & 0 & 13 & 1 & 1 & 4 & 1 & 0 & 1 & 0 & 0 \\
\hline
4 & \verb+10|10|01011+ & \verb+00000|1101|+ & 1 & 2 & 11 & 0 & 13 & 1 & 2 & 4 & 0 & 1 & 0 & 0 & 0 \\
\hline
5 & \verb+10|1|001011+ & \verb+0000|1101|0+ & 1 & 2 & 11 & 0 & 13 & 1 & 1 & 4 & 1 & 1 & 0 & 0 & 0 \\
\hline
6 & \verb+10||0001011+ & \verb+000|1101|01+ & 1 & 0 & 11 & 2 & 13 & 1 & 0 & 4 & 2 & 1 & 1 & 0 & 1 \\
\hline
7 & \verb+10||0001011+ & \verb+1000|1101|0+ & 1 & 0 & 11 & 2 & 13 & 1 & 0 & 4 & 1 & 1 & 1 & 0 & 0 \\
\hline
8 & \verb+010||001101+ & \verb+10000|1011|+ & 2 & 0 & 13 & 1 & 11 & 2 & 0 & 4 & 0 & 0 & 0 & 1 & 0 \\
\hline
9 & \verb+010||001101+ & \verb+0000|1011|1+ & 2 & 0 & 13 & 1 & 11 & 2 & 0 & 4 & 1 & 0 & 1 & 1 & 0 \\
\hline
10 & \verb+010|1|00010+ & \verb+10000|1011|+ & 2 & 1 & 2 & 1 & 11 & 2 & 1 & 4 & 0 & 1 & 0 & 1 & 0 \\
\hline
11 & \verb+010||000010+ & \verb+1000|1011|1+ & 2 & 0 & 2 & 3 & 11 & 2 & 0 & 4 & 1 & 1 & 1 & 1 & 1 \\
\hline
12 & \verb+110||001011+ & \verb+0100000|10|+ & 3 & 0 & 11 & 2 & 2 & 2 & 0 & 2 & 0 & 0 & 0 & 0 & 0 \\
\hline
13 & \verb+110||001011+ & \verb+100000|10|0+ & 3 & 0 & 11 & 2 & 2 & 2 & 0 & 2 & 1 & 0 & 0 & 0 & 0 \\
\hline
14 & \verb+110||001011+ & \verb+00000|10|01+ & 3 & 0 & 11 & 2 & 2 & 2 & 0 & 2 & 2 & 0 & 0 & 0 & 0 \\
\hline
15 & \verb+110||001011+ & \verb+0000|10|010+ & 3 & 0 & 11 & 2 & 2 & 2 & 0 & 2 & 3 & 0 & 1 & 0 & 0 \\
\hline
16 & \verb+110|1|00011+ & \verb+00000|10|01+ & 3 & 4 & 3 & 2 & 2 & 2 & 1 & 2 & 2 & 0 & 1 & 0 & 0 \\
\hline
17 & \verb+110|10|0011+ & \verb+100000|10|0+ & 3 & 4 & 3 & 2 & 2 & 2 & 2 & 2 & 1 & 0 & 1 & 0 & 0 \\
\hline
18 & \verb+110|101|001+ & \verb+0100000|10|+ & 3 & 5 & 1 & 2 & 2 & 2 & 3 & 2 & 0 & 1 & 0 & 0 & 0 \\
\hline
19 & \verb+110|10|0001+ & \verb+010000|10|1+ & 3 & 4 & 1 & 5 & 2 & 2 & 2 & 2 & 1 & 1 & 0 & 0 & 0 \\
\hline
20 & \verb+110|1|00001+ & \verb+10000|10|10+ & 3 & 4 & 1 & 5 & 2 & 2 & 1 & 2 & 2 & 1 & 0 & 0 & 0 \\
\hline
21 & \verb+110||000001+ & \verb+0100|10|100+ & 3 & 0 & 1 & 17 & 2 & 2 & 0 & 2 & 3 & 1 & 1 & 0 & 1 \\
\hline
22 & \verb+110||000001+ & \verb+00100|10|10+ & 3 & 0 & 1 & 17 & 2 & 2 & 0 & 2 & 2 & 1 & 1 & 0 & 0 \\
\hline
23 & \verb+110||000001+ & \verb+000100|10|1+ & 3 & 0 & 1 & 17 & 2 & 2 & 0 & 2 & 1 & 1 & 1 & 0 & 0 \\
\hline
24 & \verb+100010||010+ & \verb+11000000|1|+ & 17 & 0 & 2 & 3 & 1 & 5 & 0 & 1 & 0 & 0 & 0 & 1 & 0 \\
\hline
25 & \verb+100010||010+ & \verb+1000000|1|1+ & 17 & 0 & 2 & 3 & 1 & 5 & 0 & 1 & 1 & 0 & 0 & 1 & 0 \\
\hline
26 & \verb+100010||010+ & \verb+000000|1|11+ & 17 & 0 & 2 & 3 & 1 & 5 & 0 & 1 & 2 & 0 & 1 & 1 & 0 \\
\hline
27 & \verb+100010|1|00+ & \verb+1000000|1|1+ & 17 & 2 & 0 & 3 & 1 & 5 & 1 & 1 & 1 & 0 & 1 & 1 & 0 \\
\hline
28 & \verb+100010|10|0+ & \verb+11000000|1|+ & 17 & 2 & 0 & 3 & 1 & 5 & 2 & 1 & 0 & 1 & 0 & 1 & 0 \\
\hline
29 & \verb+100010|1|00+ & \verb+1000000|1|1+ & 17 & 2 & 0 & 3 & 1 & 5 & 1 & 1 & 1 & 1 & 0 & 1 & 0 \\
\hline
30 & \verb+100010||000+ & \verb+100100|1|10+ & 17 & 0 & 0 & 37 & 1 & 5 & 0 & 1 & 2 & 1 & 1 & 1 & 1 \\
\hline
31 & \verb+100010||000+ & \verb+0100100|1|1+ & 17 & 0 & 0 & 37 & 1 & 5 & 0 & 1 & 1 & 1 & 1 & 1 & 0 \\
\hline
32 & \verb+Terminated+ & \verb+Terminated+ & 37 & 0 & 1 & 17 & 0 & 6 & 0 & 0 & 0 & 0 & 0 & 0 & 0 \\
\hline
33 & \verb+Terminated+ & \verb+Terminated+ & 37 & 0 & 1 & 17 & 0 & 6 & 0 & 0 & 1 & 0 & 0 & 0 & 0 \\
\hline
34 & \verb+Terminated+ & \verb+Terminated+ & 37 & 0 & 1 & 17 & 0 & 6 & 0 & 0 & 2 & 0 & 0 & 0 & 0 \\
\hline
35 & \verb+Terminated+ & \verb+Terminated+ & 37 & 0 & 1 & 17 & 0 & 6 & 0 & 0 & 3 & 0 & 0 & 0 & 0 \\
\hline
36 & \verb+Terminated+ & \verb+Terminated+ & 37 & 0 & 1 & 17 & 0 & 6 & 0 & 0 & 4 & 0 & 0 & 0 & 0 \\
\hline
\end{tabular}
\caption{An execution example of our space-efficient EEA for $p = 37$ and $x = 13$.
The table mainly illustrates how the two $(n+3)$-qubit registers \texttt{Work1} and \texttt{Work2} are allocated during the execution to store the variables $t, q, r$ and $t', r'$, respectively.}
\label{tab:run_example}
\end{table}

For each step, Table \ref{tab:run_example} shows how the contents of the two \texttt{Work} registers evolve and how different variables are mapped onto \texttt{Work1} and \texttt{Work2} under the register-sharing strategy.
In particular, the variable $t'$ stored in \texttt{Work2} is divided into two parts that occupy the left-most and right-most portions of the register. The division symbols are not explicitly implemented in the quantum circuit; they are all illustrative and inferred from the corresponding \texttt{Length} registers, used to visualize the register allocation.
In the actual quantum implementation, all registers are in superposition, and the location of these division symbols may vary across different input values.

\subsection{Algorithmic optimizations}
\label{subsec:opt}

The baseline implementation of one step, as described in the previous subsection, was constructed to be fully reversible, ensuring that every state transformation can be realized as a unitary operation on a quantum computer. However, this direct implementation contains multiple redundant operations that unnecessarily increase both the circuit depth and gate count. In this subsection, we present a series of algorithmic optimizations applied to the baseline design in order to reduce the quantum circuit depth and overall gate complexity.

\paragraph{Removing redundant arithmetic and shift operations.}
In the baseline version, several arithmetic and shift operations between different phases are functionally redundant. Specifically, we have:
\[
\begin{split}
    \text{In phases 1 and 2:} \quad & \text{comparison, addition, subtraction between } (r, 2^{\ell_s}r'); \\
    \text{In phases 3 and 4:} \quad & \text{comparison, addition, subtraction between } (t', 2^{\ell_s}t); \\
    \text{In phases 1 and 3:} \quad & \text{a one-position left shift on } \texttt{Work2}; \\
    \text{In phases 2 and 4:} \quad & \text{a one-position right shift on } \texttt{Work2}. \\
\end{split}
\]
By combining the control conditions of these operations (or equivalently, the corresponding quantum control wires), we can eliminate redundant operations from the circuit. This optimization results in a total of two groups of location-controlled addition/subtraction operations and two location-controlled swaps in one step. Here, \emph{location-control} refers to arithmetic operations applied conditionally on specific positions of the two \texttt{Work} registers, with the active positions indicated by the \texttt{Length} registers. For the merged arithmetic on $t$'s, the active boundary remains phase dependent: it is $\ell_t+1$ in Phase 3 and $n+3-\ell_{r'}-\ell_s$ in Phase 4. We prepare these two boundaries in the $\ell_t$ and $\ell_{r'}$ registers, respectively, and conditionally swap the registers under the control of \texttt{Phase2}, so that the arithmetic operations always use $\ell_t$ as their boundary. The original length values are restored before the post-shift operations.

\paragraph{Merging location-controlled swaps.}
To further decrease the circuit depth and gate count, we observe that the two location-controlled swaps appearing in phases 2 and 3 act on the same qubit position of \texttt{Work1}. We merge them into a single swap controlled by $\texttt{Phase1}\oplus\texttt{Phase2}$. The required update of $\ell_q$ depends on the phase: $\ell_q$ is increased in phase 2 and decreased in phase 3.

\paragraph{Algorithm pseudocode and overall complexity.}
After applying the above optimizations, the overall asymptotic quantum resource estimation per step is summarized as follows:
\begin{itemize}
    \item Five location-controlled arithmetic operations, including one swap, two additions, and two subtractions; along with length updates at the termination step of an EEA iteration. Each of these operations costs $O(n)$ gates in the implementation presented in Section \ref{sec:circuit}.
    \item Four position shifts and one $(n + 3)$-qubit SWAP, each requiring $O(n)$ quantum gates.
\end{itemize}

With these optimizations, one step of our space-efficient EEA proceeds as in Algorithm~\ref{alg:opt}.
\begin{algorithm}[htbp]
\small
\caption{Optimized one step of our EEA implementation}
\begin{algorithmic}
\Require \texttt{Work1} register stores $\ket{t, q, r}$, \texttt{Work2} register stores $\ket{t', r'}$;
\Require \texttt{Length} registers store $\ket{\ell_t, \ell_q, \ell_{r'}, \ell_s}$;
\Require \texttt{Control} registers store $\ket{\texttt{Phase1}, \texttt{Phase2}, \texttt{Iter}, \texttt{Sign}}$.

\If{$\texttt{Phase1} = 0$} \Comment{Pre-shift operations}
    \State Perform a one-position left shift on \texttt{Work2}, and update $\ell_s \gets \ell_s + 1$
    \State \textbf{if} $\texttt{Phase2} = 1$ \textbf{then} perform a two-position right shift on \texttt{Work2}, and update $\ell_s \gets \ell_s - 2$ \textbf{end if}
    
    \State $(\texttt{Sign}, r) \gets (\texttt{Sign}, r) - 2^{\ell_s}r'$ \Comment{Arithmetic block 1: location-controlled subtraction on $r$'s}
    \State \textbf{if} $\texttt{Phase2} = 1$ \textbf{then} $\texttt{Sign} \gets \texttt{Sign} \oplus 1$ \textbf{end if}
    \State \textbf{if} $\texttt{Phase2} = 0$ \textbf{or} $\texttt{Sign} = 0$ \textbf{then} $r \gets r + 2^{\ell_s}r'$ \textbf{end if}
\EndIf

\If{$\texttt{Phase1} \oplus \texttt{Phase2} = 1$} \Comment{Arithmetic block 2: location-controlled swap}
    \State Swap $\texttt{Sign}$ with the $(\ell_t + \ell_q + 1)$-th qubit of \texttt{Work1}
    \If{$\texttt{Phase1} = 1$}
        \State $\ell_q \gets \ell_q - 1$
    \Else
        \State $\ell_q \gets \ell_q + 1$
    \EndIf
\EndIf

\If{$\texttt{Phase1} = 1$} \Comment{Arithmetic block 3 on $t$'s}
    \State $\ell_t \gets \ell_t + 1$, $\ell_{r'} \gets n + 3 - \ell_{r'} - \ell_s$
    \State \textbf{if} $\texttt{Phase2} = 1$ \textbf{then} swap $\ell_t$ and $\ell_{r'}$ \textbf{end if}

    \State \textbf{if} $\texttt{Phase2} = 1$ \textbf{or} $\texttt{Sign} = 0$ \textbf{then} $t' \gets t' - 2^{\ell_s}t$ \textbf{end if} \Comment{Act on the left-most $\ell_t$ qubits}
    \State $\texttt{Sign} \gets \texttt{Sign} \oplus 1$
    \State $(\texttt{Sign}, t') \gets (\texttt{Sign}, t') + 2^{\ell_s}t$ \Comment{Use the same boundary}

    \State \textbf{if} $\texttt{Phase2} = 1$ \textbf{then} swap $\ell_t$ and $\ell_{r'}$ \textbf{end if}
    \State $\ell_{r'} \gets n + 3 - \ell_{r'} - \ell_s$, $\ell_t \gets \ell_t - 1$
    
    \State Perform a one-position left shift on \texttt{Work2}, and update $\ell_s \gets \ell_s + 1$ \Comment{Post-shift operations}
    \State \textbf{if} $\texttt{Phase2} = 1$ \textbf{then} perform a two-position right shift on \texttt{Work2}, and update $\ell_s \gets \ell_s - 2$ \textbf{end if}
\EndIf

\If{$\ell_q = 0$ \textbf{and} $\ell_{r'} > 0$} \Comment{Phase update logic; $\ell_{r'} = 0$ indicates algorithm termination}
    \State $\texttt{Phase2} \gets \texttt{Phase2} \oplus \texttt{Sign} \oplus \texttt{Phase1}$, $\texttt{Sign} \gets \texttt{Sign} \oplus \texttt{Phase2}$
\EndIf
\State \textbf{if} $\ell_s = 0$ \textbf{then} $\texttt{Phase1} \gets \texttt{Phase1} \oplus 1$, $\texttt{Phase2} \gets \texttt{Phase2} \oplus 1$ \textbf{end if}

\If{$\ell_q = 0$ \textbf{and} $\ell_s = 0$} \Comment{Register swapping at the end of one EEA iteration}
    \State Swap \texttt{Work1} and \texttt{Work2}
    \State Update $\ell_t$ to the bit length of the new $t$, $\ell_{r'}$ to the bit length of the new $r'$
    \State $\texttt{Iter} \gets \texttt{Iter} \oplus 1$
\EndIf
\end{algorithmic}
\label{alg:opt}
\end{algorithm}

\section{Quantum Circuit Implementation of Modular Inversion}
\label{sec:circuit}

In this section, we present the explicit quantum circuit implementation of a single step of the space-efficient EEA with register sharing, as described in Algorithm \ref{alg:opt}.
Throughout the section, we use the same quantum register notation as in the algorithm.

The full circuit construction is, however, too complex to be depicted within a small number of circuit diagrams.
This difficulty is mainly due to the presence of location-controlled arithmetic operations, which are highly non-standard.
To present the construction in a relatively more explicit manner, we first describe the overall circuit structure in Section \ref{subsec:all_circuit}, where the location-controlled arithmetic operations are treated as circuit blocks; we then provide a detailed implementation of these operations in Section \ref{subsec:blocks}. In all circuit diagrams, the black triangle denotes the output wire that carries the result of the corresponding circuit block; and we denote $\ell = \lfloor \log_2 n \rfloor$ for convenience.

\subsection{Overall circuit implementation of one step}
\label{subsec:all_circuit}

According to Algorithm \ref{alg:opt}, the overall circuit implementation of a single step of our space-efficient EEA is shown in Figures \ref{fig:all_circuit_1} and \ref{fig:all_circuit_2}, presented in a continuous layout.
Both circuit diagrams are divided into several algorithmic components, indicated by dashed boxes.
Each component corresponds exactly to one arithmetic block specified in Algorithm \ref{alg:opt}.

In both circuit diagrams, $\mathsf{Shift}$ denotes a position shift of the workspace: a positive value represents a left shift, while a negative value represents a right shift.
The circuit blocks $\mathsf{Add}$, $\mathsf{Sub}$, and $\mathsf{Swap}$ denote location-controlled addition, subtraction, and swap operations, respectively.
At the end of each EEA iteration, the circuit block labeled $\mathsf{SWAP}$ represents a full SWAP operation applied to all qubits in the registers \texttt{Work1} and \texttt{Work2}; this block is distinct from the $\mathsf{Swap}$ block to avoid ambiguity.
Finally, the $\mathsf{Len}$ block denotes the update operations on the length registers.

\begin{figure}[ht]
\footnotesize
\centering
\scalebox{\iffull 1 \else 0.8 \fi}{
\(\Qcircuit @C=1.0em @R=0.8em {
    & \lstick{\ket{\texttt{Phase1}}} & \qw & \qw\slash^1 & \qw & \qw 
    & \ctrlo{5} & \ctrlo{1} & \qw & \qw & \ctrlo{3} 
    & \ctrlo{1} & \qw & \ctrlo{2} & \qw & \qw 
    & \ctrl{1} & \qw & \ctrl{1} & \ctrlo{1} & \ctrl{1} 
    & \qw \\
    & \lstick{\ket{\texttt{Phase2}}} & \qw & \qw\slash^1 & \qw & \qw 
    & \qw & \ctrl{4} & \qw & \qw & \qw 
    & \ctrl{2} & \ctrl{1} & \qw & \ctrl{1} & \qw 
    & \targ & \ctrl{2} & \ctrl{7} & \ctrl{7} & \targ 
    & \qw \\
    & \lstick{\ket{0}} & \qw & \qw\slash^1 & \qw & \qw 
    & \qw & \qw & \qw & \qw & \qw 
    & \qw & \targ & \ctrlo{2} & \targ & \qw 
    & \qw & \qw & \qw & \qw & \qw 
    & \qw \\
    & \lstick{\ket{\texttt{Sign}}} & \qw & \qw\slash^1 & \qw & \qw 
    & \qw & \qw & \qw & \qw & \multigateouttwohi{5}{\mathsf{Sub}} 
    & \targ & \ctrl{-1} & \qw & \ctrl{-1} & \qw 
    & \qw & \multigateouttwo{1}{\mathsf{Swap}} & \qw & \qw & \qw 
    & \qw \\
    & \lstick{\ket{\texttt{Work1}}} & \qw & \qw\slash^{n+3} & \qw & \qw 
    & \qw & \qw & \qw & \qw & \ghost{\mathsf{Sub}} 
    & \qw & \qw & \multigateout{4}{\mathsf{Add}} & \qw & \qw 
    & \qw & \ghost{\mathsf{Swap}}\qwx[3] & \qw & \qw & \qw 
    & \qw \\
    & \lstick{\ket{\texttt{Work2}}} & \qw & \qw\slash^{n+3} & \qw & \qw 
    & \gate{\mathsf{Shift}_1}\qwx[1] & \gate{\mathsf{Shift}_{-2}}\qwx[1] & \qw & \qw & \ghost{\mathsf{Sub}} 
    & \qw & \qw & \ghost{\mathsf{Add}} & \qw & \qw
    & \qw & \qw & \qw & \qw & \qw 
    & \qw \\
    & \lstick{\ket{\ell_s}} & \qw & \qw\slash^{\ell + 1} & \qw & \qw 
    & \gate{+1} & \gate{-2} & \qw & \qw & \ghost{\mathsf{Sub}} 
    & \qw & \qw & \ghost{\mathsf{Add}} & \qw & \qw
    & \qw & \qw & \qw & \qw & \qw 
    & \qw \\
    & \lstick{\ket{\ell_t}} & \qw & \qw\slash^{\ell + 1} & \qw & \qw 
    & \qw & \qw & \qw & \qw & \ghost{\mathsf{Sub}} 
    & \qw & \qw & \ghost{\mathsf{Add}} & \qw & \qw 
    & \qw & \multigate{1}{\mathsf{Swap}} & \qw & \qw & \qw 
    & \qw \\
    & \lstick{\ket{\ell_q}} & \qw & \qw\slash^{\ell + 1} & \qw & \qw 
    & \qw & \qw & \qw & \qw & \ghost{\mathsf{Sub}} 
    & \qw & \qw & \ghost{\mathsf{Add}} & \qw & \qw 
    & \qw & \ghost{\mathsf{Swap}} & \gate{-1} & \gate{+1} & \qw 
    & \qw \\
    & \lstick{\ket{\ell_{r'}}} & \qw & \qw\slash^{\ell + 1} & \qw & \qw 
    & \qw & \qw & \qw & \qw & \qw 
    & \qw & \qw & \qw & \qw & \qw 
    & \qw & \qw & \qw & \qw & \qw 
    & \qw \\
    \push{\rule{0em}{1.2em}} & \lstick{\ket{\texttt{Iter}}} & \qw & \qw\slash^{1} & \qw & \qw 
    & \qw & \qw & \qw & \qw & \qw 
    & \qw & \qw & \qw & \qw & \qw 
    & \qw & \qw & \qw & \qw & \qw 
    & \qw
    \gategroup{1}{6}{11}{9}{0.6em}{--}
    \gategroup{1}{10}{11}{16}{0.6em}{--}
    \gategroup{1}{17}{11}{21}{0.9em}{--}
}\)}
\caption{Overall circuit implementation (first half) of a single step of our space-efficient EEA.
The three dashed boxes, from left to right, represent:
(1) pre-shift operations;
(2) location-controlled subtraction on $r$'s;
(3) location-controlled swap.}
\label{fig:all_circuit_1}
\end{figure}

\begin{figure}[ht]
\footnotesize
\centering
\scalebox{\iffull 1 \else 0.8 \fi}{
\(\Qcircuit @C=0.4em @R=0.8em {
    & \lstick{\ket{\texttt{Phase1}}} & \qw & \qw\slash^1 & \qw & \qw
    & \qw & \qw & \qw
    & \qw & \ctrl{2} & \qw & \ctrl{3} & \ctrl{3}
    & \qw & \qw & \qw & \qw
    & \qw & \qw
    & \ctrl{5} & \ctrl{1} & \qw
    & \qw & \ctrl{1} & \qw & \targ & \qw
    & \qw & \qw & \qw & \qw & \qw \\
    & \lstick{\ket{\texttt{Phase2}}} & \qw & \qw\slash^1 & \qw & \qw
    & \qw & \qw & \ctrl{6}
    & \ctrlo{1} & \qw & \ctrlo{1} & \qw & \qw
    & \qw & \ctrl{6} & \qw & \qw
    & \qw & \qw
    & \qw & \ctrl{4} & \qw
    & \targ & \targ & \ctrl{2} & \qw & \targ
    & \qw & \qw & \qw & \qw & \qw \\
    & \lstick{\ket{0}} & \qw & \qw\slash^1 & \qw & \qw
    & \qw & \qw & \qw
    & \targ & \ctrlo{2} & \targ & \qw & \qw
    & \qw & \qw & \qw & \qw
    & \qw & \qw
    & \qw & \qw & \qw
    & \qw & \qw & \qw & \qw & \qw
    & \qw & \qw & \qw & \qw & \qw \\
    & \lstick{\ket{\texttt{Sign}}} & \qw & \qw\slash^1 & \qw & \qw
    & \qw & \qw & \qw
    & \ctrl{-1} & \qw & \ctrl{-1} & \targ & \multigateouttwo{2}{\mathsf{Add}}
    & \qw & \qw & \qw & \qw
    & \qw & \qw
    & \qw & \qw & \qw
    & \ctrl{-2} & \qw & \targ & \qw & \qw
    & \qw & \qw & \qw & \qw & \qw \\
    & \lstick{\ket{\texttt{Work1}}} & \qw & \qw\slash^{n+3} & \qw & \qw
    & \qw & \qw & \qw
    & \qw & \ghost{\mathsf{Sub}} & \qw & \qw & \ghost{\mathsf{Add}}
    & \qw & \qw & \qw & \qw
    & \qw & \qw
    & \qw & \qw & \qw
    & \qw & \qw & \qw & \qw & \qw
    & \qw & \multigate{1}{\mathsf{SWAP}} & \multigate{1}{\mathsf{Len}} & \qw & \qw \\
    & \lstick{\ket{\texttt{Work2}}} & \qw & \qw\slash^{n+3} & \qw & \qw
    & \qw & \qw & \qw
    & \qw & \multigateout{-1}{\mathsf{Sub}}\qwx[2] & \qw & \qw & \ghost{\mathsf{Add}}\qwx[2]
    & \qw & \qw & \qw & \qw
    & \qw & \qw
    & \gate{\mathsf{Shift}_1}\qwx[1] & \gate{\mathsf{Shift}_{-2}}\qwx[1] & \qw
    & \qw & \qw & \qw & \qw & \qw
    & \qw & \ghost{\mathsf{SWAP}} & \ghost{\mathsf{Len}} & \qw & \qw \\
    & \lstick{\ket{\ell_s}} & \qw & \qw\slash^{\ell + 1} & \qw & \qw
    & \qw & \gate{\substack{n+3-\ell_{r'}\\[-0.2em]-\ell_s}}\qwx[3] & \qw
    & \qw & \qw & \qw & \qw & \qw
    & \qw & \qw & \gate{\substack{n+3-\ell_{r'}\\[-0.2em]-\ell_s}}\qwx[3] & \qw
    & \qw & \qw
    & \gate{+1} & \gate{-2} & \qw
    & \qw & \qw & \qw & \ctrlo{-6} & \ctrlo{-5}
    & \qw & \ctrlo{-1} & \ctrlo{-1} & \ctrlo{2} & \qw \\
    & \lstick{\ket{\ell_t}} & \qw & \qw\slash^{\ell + 1} & \qw & \qw
    & \gate{+1} & \qw & \qswap\qwx[2]
    & \qw & \gate{\mathsf{Sub}} & \qw & \qw & \gate{\mathsf{Add}}
    & \qw & \qswap\qwx[2] & \qw & \gate{-1}
    & \qw & \qw
    & \qw & \qw & \qw
    & \qw & \qw & \qw & \qw & \qw
    & \qw & \qw & \gateout{\mathsf{Len}}\qwx[-1] & \qw & \qw \\
    & \lstick{\ket{\ell_q}} & \qw & \qw\slash^{\ell + 1} & \qw & \qw
    & \qw & \qw & \qw
    & \qw & \qw & \qw & \qw & \qw
    & \qw & \qw & \qw & \qw
    & \qw & \qw
    & \qw & \qw & \qw
    & \ctrlo{-5} & \ctrlo{-7} & \ctrlo{-5} & \qw & \qw
    & \qw & \ctrlo{-2} & \ctrlo{-1} & \ctrlo{2} & \qw \\
    & \lstick{\ket{\ell_{r'}}} & \qw & \qw\slash^{\ell + 1} & \qw & \qw
    & \qw & \gate{\substack{n+3-\ell_{r'}\\[-0.2em]-\ell_s}} & \qswap
    & \qw & \qw & \qw & \qw & \qw
    & \qw & \qswap & \gate{\substack{n+3-\ell_{r'}\\[-0.2em]-\ell_s}} & \qw
    & \qw & \qw
    & \qw & \qw & \qw
    & \ctrl{-1} & \ctrl{-1} & \ctrl{-1} & \qw & \qw
    & \qw & \qw & \gateout{\mathsf{Len}}\qwx[-1] & \qw & \qw \\
    & \lstick{\ket{\texttt{Iter}}} & \qw & \qw\slash^1 & \qw & \qw
    & \qw & \qw & \qw
    & \qw & \qw & \qw & \qw & \qw
    & \qw & \qw & \qw & \qw
    & \qw & \qw
    & \qw & \qw & \qw
    & \qw & \qw & \qw & \qw & \qw
    & \qw & \qw & \qw & \targ & \qw
    \gategroup{1}{6}{11}{19}{0.6em}{--}
    \gategroup{1}{20}{11}{23}{0.8em}{--}
    \gategroup{1}{24}{11}{28}{1.0em}{--}
    \gategroup{1}{29}{11}{33}{0.8em}{--}
}\)}
\caption{Overall circuit implementation (second half) of a single step of our space-efficient EEA, continued from Figure~\ref{fig:all_circuit_1}.
The four dashed boxes, from left to right, represent:
(1) location-controlled addition on $t$'s;
(2) post-shift operations;
(3) phase update;
(4) swapping \texttt{Work1} and \texttt{Work2}, together with the corresponding length updates at the end of one EEA iteration.}
\label{fig:all_circuit_2}
\end{figure}

\subsection{Step-dependent active windows}
\label{subsec:actwindow}

To reduce the quantum gate complexity for implementing our space-efficient EEA, we introduce step-dependent active windows for each location-controlled operation. As illustrated in the figures of the previous subsection, we define $k$ and $K$ as known bounds on the active indices of the \texttt{Work} registers.

For instance, at step $T$, a location-controlled Swap circuit acts on the $(\ell_t + \ell_q + 1)$-th qubit of the \texttt{Work1} register. Over all possible inputs $x$ used to compute the modular inverse $x^{-1} \bmod p$, the quantity $\ell_t + \ell_q + 1$ is bounded below by $k := k(T)$ and above by $K := K(T)$. Therefore, applying location-controlled operations to qubits in the \texttt{Work1} register with indices outside the interval $[k, K]$ is unnecessary. We refer to this interval as the \emph{active window}.

We next specify the step-dependent active windows for each type of location-controlled operation. All bounds are taken over all \emph{possible inputs} $x$, namely, those inputs $x$ for which the location-controlled circuits are activated by the outer control qubits. 
We defer the proofs to Appendix \ref{app:actwindow}. Here the constants are set to be
\[
\lambda := (2 + \sqrt3)^{1/3}, 
\quad c := 1/\log_2\lambda = 3/\log_2(2 + \sqrt3),
\quad \delta:=\log_\lambda\left(\frac{4\sqrt{3}-3}{13}\lambda^2\right)
=0.726208\ldots .
\]

\begin{itemize}

\item \textbf{Location-controlled addition/subtraction on $r$'s.}  
These operations act on qubits indexed from $(\ell_t + \ell_q + 2)$ to $(n + 3 - \ell_s)$. 
The upper bound $K_1 = K_1(T) = n + 3$; the lower bound $k_1 = k_1(T)$ is defined by
\[
    \ell_t + \ell_q + 2 \ge k_1(T)
    := \max\left\{\ceil{\dfrac{T-n-1-4\delta}{4c - 1}}, 1\right\} + 2.
\]

\item \textbf{Location-controlled swap.} 
These operations act on the $(\ell_t + \ell_q + 1)$-th qubit. 
The two-sided bounds are defined by
\[
\begin{split}
    \ell_t + \ell_q + 1 
    & \ge k_2(T) := \max\left\{\ceil{\dfrac{T - 3(n + 1) - 4\delta}{4c - 3}}, 1\right\} + 1, \\
    \ell_t + \ell_q + 1 
    & \le K_2(T) := \min\left\{\floor{T/2} + 2, n + 2\right\}.
\end{split}
\]

\item \textbf{Location-controlled addition/subtraction on $t$'s.}  
These are one-sided controlled operations. So the left bound $k_3(T) = 1.$ The right boundary is $\ell_t+1$ in Phase~3 and $n+3-\ell_{r'}-\ell_s$ in Phase~4. Therefore, their common right boundary satisfies
\[
\max\left\{\ell_t+1,\,n+3-\ell_{r'}-\ell_s\right\}
\le K_3(T):=\min\left\{\floor{T/4}+2,\,n+1\right\}.
\]

\item \textbf{Length-update circuit for updating $\ell_t$.}  
We execute the length-update circuit only at step indices $T$ that are multiples of four.
This circuit acts on qubits indexed from $\ell_t$ to $(n + 3 - \ell_{r'})$, and the corresponding two-sided active window is defined by
\[
\begin{split}
    \ell_t 
    & \ge k_4(T) := \max\left\{\ceil{\frac{T - 4(n + 1) - 4\delta}{4c - 4}}, 1\right\}, \\
    n + 3 - \ell_{r'} 
    & \le K_4(T) := \min\left\{T/4 + 3, n + 2\right\}.
\end{split}
\]

\item \textbf{Length-update circuit for updating $\ell_{r'}$.}  
Let $\ell_{t}^*, \ell_{r'}^*$ denote the updated value of $\ell_t, \ell_{r'}$ after the length update. 
This circuit acts on qubits indexed from $\ell_t^* + 2$ to $(n + 4 - \ell_{r'}^*)$, with upper bound $K_5(T) = n + 3$, and lower bound defined by
\[
\begin{split}
    \ell_t^* + 2 
    & \ge k_5(T) := \max\left\{\ceil{\dfrac{T-4\delta}{4c}}, 1\right\}.
\end{split}
\]

\end{itemize}

\subsection{Unary iteration}
\label{subsec:unary_iteration}

We will use unary iteration~\cite{babbush2018encoding} to implement location-controlled operations on a known interval of the two \texttt{Work} registers.
Let $\mathsf{L}$ be a \texttt{Length} register promised to store a value $a\in\{k,k+1,\ldots,K\}$, let $\mathsf{W}$ be a \texttt{Work} register, and let $c$ be an optional external control qubit.
The operation has the form
\begin{align*}
    \mathcal{U}_{[k,K]}
    &=
    \ket{0}\!\bra{0}_{c}\otimes I
    +
    \ket{1}\!\bra{1}_{c}\otimes
    \sum_{j=k}^{K}
    \ket{j}\!\bra{j}_{\mathsf{L}}\otimes U_j ,
\end{align*}
where $U_j$ is a unitary operation on $\mathsf{W}$ selected by the value $j$.
Equivalently, the operation performs $\ket{c}\ket{a}\ket{\psi} \mapsto \ket{c}\ket{a} U_a^{\,c}\ket{\psi}$ for all $a\in [k, K]$.

The main primitive used by unary iteration is a reversible AND.
The compute and uncompute forms of this gate are shown in Figure~\ref{fig:and_compute_uncompute}.
In the Toffoli/CNOT counting convention used here, computing an AND costs one Toffoli gate, while uncomputing an AND using measurement costs one CNOT gate with no Toffoli gates.
\begin{figure}[htbp]
\centering
\includegraphics[width=0.65\linewidth]{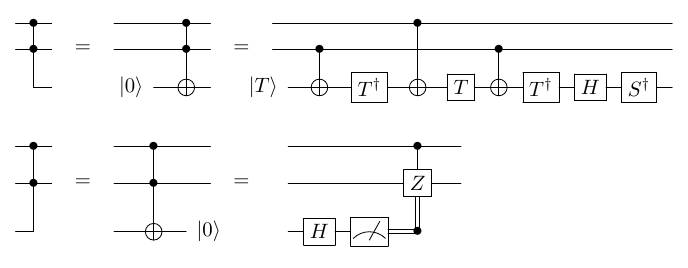}
\caption{Compute and uncompute forms of the reversible AND primitive, reproduced from~\cite{babbush2018encoding}.}
\label{fig:and_compute_uncompute}
\end{figure}

For a general interval, unary iteration can be viewed as a binary tree whose leaves are the labels $k,k+1,\ldots,K$.
Each internal node stores a subset of labels.
If the current node has control bit $g$, we choose the most significant bit position $b$ that is not constant on its labels, thereby separating them into two nonempty subsets according to the value of $A_b$.
For an increasing-order traversal, we first compute $h=g(1-A_b)$ with one reversible AND, use $h$ as the control for the $A_b=0$ child, then switch the same temporary bit by a CNOT, $h\gets h\oplus g$,
so that $h=gA_b$ controls the $A_b=1$ child.
After the second child is processed, $h$ is uncomputed.
At a leaf $j$, the current control bit is exactly
\begin{align*}
    e_j=c[a=j],
\end{align*}
and the circuit applies $U_j$ controlled on $e_j$.
For a decreasing-order traversal, the two children are visited in the opposite order.
Figure~\ref{fig:interval_unary_iteration} illustrates this construction for the interval $\{5,6,7,8\}$.

\begin{figure}[htbp]
\centering
\begin{tikzpicture}[scale=0.800000,x=1pt,y=1pt]
\filldraw[color=white] (0.000000, -7.500000) rectangle (286.000000, 127.500000);

\draw[color=black] (0.000000,120.000000) -- (286.000000,120.000000);
\draw[color=black] (0.000000,120.000000) node[left] {$\ket{c}$};
\draw[color=black] (0.000000,105.000000) -- (286.000000,105.000000);
\draw[color=black] (0.000000,105.000000) node[left] {$\ket{\mathsf{L}_3}$};
\draw[color=black] (0.000000,90.000000) -- (286.000000,90.000000);
\draw[color=black] (0.000000,90.000000) node[left] {$\ket{\mathsf{L}_2}$};
\draw[color=black] (0.000000,75.000000) -- (286.000000,75.000000);
\draw[color=black] (0.000000,75.000000) node[left] {$\ket{\mathsf{L}_1}$};
\draw[color=black] (0.000000,60.000000) -- (286.000000,60.000000);
\draw[color=black] (0.000000,60.000000) node[left] {$\ket{\mathsf{L}_0}$};
\draw[color=white] (0.000000,45.000000) -- (6.000000,45.000000);
\draw[color=black] (6.000000,45.000000) -- (280.000000,45.000000);
\draw[color=white] (280.000000,45.000000) -- (286.000000,45.000000);
\draw[color=white] (0.000000,30.000000) -- (18.000000,30.000000);
\draw[color=black] (18.000000,30.000000) -- (210.000000,30.000000);
\draw[color=white] (210.000000,30.000000) -- (286.000000,30.000000);
\draw[color=white] (0.000000,15.000000) -- (88.000000,15.000000);
\draw[color=black] (88.000000,15.000000) -- (198.000000,15.000000);
\draw[color=white] (198.000000,15.000000) -- (286.000000,15.000000);
\draw[color=black] (0.000000,0.000000) -- (286.000000,0.000000);
\draw[color=black] (0.000000,0.000000) node[left] {$\ket{\mathsf{W}}$};
\draw (6.000000,120.000000) -- (6.000000,45.000000);
\begin{scope}
\draw[fill=none] (6.000000, 45.000000) +(-45.000000:0.000000pt) -- +(45.000000:0.000000pt) -- +(135.000000:0.000000pt) -- +(225.000000:0.000000pt) -- cycle;
\clip (6.000000, 45.000000) +(-45.000000:0.000000pt) -- +(45.000000:0.000000pt) -- +(135.000000:0.000000pt) -- +(225.000000:0.000000pt) -- cycle;
\filldraw (6.000000, 45.000000) circle(0.250000pt);
\end{scope}
\filldraw (6.000000, 120.000000) circle(1.500000pt);
\draw[fill=white] (6.000000, 105.000000) circle(2.250000pt);
\draw (18.000000,75.000000) -- (18.000000,30.000000);
\begin{scope}
\draw[fill=none] (18.000000, 30.000000) +(-45.000000:0.000000pt) -- +(45.000000:0.000000pt) -- +(135.000000:0.000000pt) -- +(225.000000:0.000000pt) -- cycle;
\clip (18.000000, 30.000000) +(-45.000000:0.000000pt) -- +(45.000000:0.000000pt) -- +(135.000000:0.000000pt) -- +(225.000000:0.000000pt) -- cycle;
\filldraw (18.000000, 30.000000) circle(0.250000pt);
\end{scope}
\filldraw (18.000000, 45.000000) circle(1.500000pt);
\draw[fill=white] (18.000000, 75.000000) circle(2.250000pt);
\draw (44.000000,30.000000) -- (44.000000,0.000000);
\begin{scope}
\draw[fill=white] (44.000000, 0.000000) +(-45.000000:19.798990pt and 8.485281pt) -- +(45.000000:19.798990pt and 8.485281pt) -- +(135.000000:19.798990pt and 8.485281pt) -- +(225.000000:19.798990pt and 8.485281pt) -- cycle;
\clip (44.000000, 0.000000) +(-45.000000:19.798990pt and 8.485281pt) -- +(45.000000:19.798990pt and 8.485281pt) -- +(135.000000:19.798990pt and 8.485281pt) -- +(225.000000:19.798990pt and 8.485281pt) -- cycle;
\draw (44.000000, 0.000000) node {$U_5$};
\end{scope}
\filldraw (44.000000, 30.000000) circle(1.500000pt);
\draw (73.000000,45.000000) -- (73.000000,30.000000);
\begin{scope}
\draw[fill=white] (73.000000, 30.000000) circle(3.000000pt);
\clip (73.000000, 30.000000) circle(3.000000pt);
\draw (70.000000, 30.000000) -- (76.000000, 30.000000);
\draw (73.000000, 27.000000) -- (73.000000, 33.000000);
\end{scope}
\filldraw (73.000000, 45.000000) circle(1.500000pt);
\draw (88.000000,60.000000) -- (88.000000,15.000000);
\begin{scope}
\draw[fill=none] (88.000000, 15.000000) +(-45.000000:0.000000pt) -- +(45.000000:0.000000pt) -- +(135.000000:0.000000pt) -- +(225.000000:0.000000pt) -- cycle;
\clip (88.000000, 15.000000) +(-45.000000:0.000000pt) -- +(45.000000:0.000000pt) -- +(135.000000:0.000000pt) -- +(225.000000:0.000000pt) -- cycle;
\filldraw (88.000000, 15.000000) circle(0.250000pt);
\end{scope}
\filldraw (88.000000, 30.000000) circle(1.500000pt);
\draw[fill=white] (88.000000, 60.000000) circle(2.250000pt);
\draw (114.000000,15.000000) -- (114.000000,0.000000);
\begin{scope}
\draw[fill=white] (114.000000, 0.000000) +(-45.000000:19.798990pt and 8.485281pt) -- +(45.000000:19.798990pt and 8.485281pt) -- +(135.000000:19.798990pt and 8.485281pt) -- +(225.000000:19.798990pt and 8.485281pt) -- cycle;
\clip (114.000000, 0.000000) +(-45.000000:19.798990pt and 8.485281pt) -- +(45.000000:19.798990pt and 8.485281pt) -- +(135.000000:19.798990pt and 8.485281pt) -- +(225.000000:19.798990pt and 8.485281pt) -- cycle;
\draw (114.000000, 0.000000) node {$U_6$};
\end{scope}
\filldraw (114.000000, 15.000000) circle(1.500000pt);
\draw (143.000000,30.000000) -- (143.000000,15.000000);
\begin{scope}
\draw[fill=white] (143.000000, 15.000000) circle(3.000000pt);
\clip (143.000000, 15.000000) circle(3.000000pt);
\draw (140.000000, 15.000000) -- (146.000000, 15.000000);
\draw (143.000000, 12.000000) -- (143.000000, 18.000000);
\end{scope}
\filldraw (143.000000, 30.000000) circle(1.500000pt);
\draw (172.000000,15.000000) -- (172.000000,0.000000);
\begin{scope}
\draw[fill=white] (172.000000, 0.000000) +(-45.000000:19.798990pt and 8.485281pt) -- +(45.000000:19.798990pt and 8.485281pt) -- +(135.000000:19.798990pt and 8.485281pt) -- +(225.000000:19.798990pt and 8.485281pt) -- cycle;
\clip (172.000000, 0.000000) +(-45.000000:19.798990pt and 8.485281pt) -- +(45.000000:19.798990pt and 8.485281pt) -- +(135.000000:19.798990pt and 8.485281pt) -- +(225.000000:19.798990pt and 8.485281pt) -- cycle;
\draw (172.000000, 0.000000) node {$U_7$};
\end{scope}
\filldraw (172.000000, 15.000000) circle(1.500000pt);
\draw (198.000000,60.000000) -- (198.000000,15.000000);
\begin{scope}
\draw[fill=none] (198.000000, 15.000000) +(-45.000000:0.000000pt) -- +(45.000000:0.000000pt) -- +(135.000000:0.000000pt) -- +(225.000000:0.000000pt) -- cycle;
\clip (198.000000, 15.000000) +(-45.000000:0.000000pt) -- +(45.000000:0.000000pt) -- +(135.000000:0.000000pt) -- +(225.000000:0.000000pt) -- cycle;
\filldraw (198.000000, 15.000000) circle(0.250000pt);
\end{scope}
\filldraw (198.000000, 30.000000) circle(1.500000pt);
\filldraw (198.000000, 60.000000) circle(1.500000pt);
\draw (210.000000,75.000000) -- (210.000000,30.000000);
\begin{scope}
\draw[fill=none] (210.000000, 30.000000) +(-45.000000:0.000000pt) -- +(45.000000:0.000000pt) -- +(135.000000:0.000000pt) -- +(225.000000:0.000000pt) -- cycle;
\clip (210.000000, 30.000000) +(-45.000000:0.000000pt) -- +(45.000000:0.000000pt) -- +(135.000000:0.000000pt) -- +(225.000000:0.000000pt) -- cycle;
\filldraw (210.000000, 30.000000) circle(0.250000pt);
\end{scope}
\filldraw (210.000000, 45.000000) circle(1.500000pt);
\filldraw (210.000000, 75.000000) circle(1.500000pt);
\draw (225.000000,120.000000) -- (225.000000,45.000000);
\begin{scope}
\draw[fill=white] (225.000000, 45.000000) circle(3.000000pt);
\clip (225.000000, 45.000000) circle(3.000000pt);
\draw (222.000000, 45.000000) -- (228.000000, 45.000000);
\draw (225.000000, 42.000000) -- (225.000000, 48.000000);
\end{scope}
\filldraw (225.000000, 120.000000) circle(1.500000pt);
\draw (254.000000,45.000000) -- (254.000000,0.000000);
\begin{scope}
\draw[fill=white] (254.000000, 0.000000) +(-45.000000:19.798990pt and 8.485281pt) -- +(45.000000:19.798990pt and 8.485281pt) -- +(135.000000:19.798990pt and 8.485281pt) -- +(225.000000:19.798990pt and 8.485281pt) -- cycle;
\clip (254.000000, 0.000000) +(-45.000000:19.798990pt and 8.485281pt) -- +(45.000000:19.798990pt and 8.485281pt) -- +(135.000000:19.798990pt and 8.485281pt) -- +(225.000000:19.798990pt and 8.485281pt) -- cycle;
\draw (254.000000, 0.000000) node {$U_8$};
\end{scope}
\filldraw (254.000000, 45.000000) circle(1.500000pt);
\draw (280.000000,120.000000) -- (280.000000,45.000000);
\begin{scope}
\draw[fill=none] (280.000000, 45.000000) +(-45.000000:0.000000pt) -- +(45.000000:0.000000pt) -- +(135.000000:0.000000pt) -- +(225.000000:0.000000pt) -- cycle;
\clip (280.000000, 45.000000) +(-45.000000:0.000000pt) -- +(45.000000:0.000000pt) -- +(135.000000:0.000000pt) -- +(225.000000:0.000000pt) -- cycle;
\filldraw (280.000000, 45.000000) circle(0.250000pt);
\end{scope}
\filldraw (280.000000, 120.000000) circle(1.500000pt);
\filldraw (280.000000, 105.000000) circle(1.500000pt);
\end{tikzpicture}
\caption{Pruned unary iteration on the promised interval $A\in\{5,6,7,8\}$.
The labels are split by the bits of $A$ only when the split separates the remaining candidates.
The temporary control is reused to visit the two children of each split, and the leaves apply $U_5,U_6,U_7,U_8$ in increasing order.}
\label{fig:interval_unary_iteration}
\end{figure}

If $M=K-k+1$, the tree has $M$ leaves and $M-1$ internal nodes.
Thus the selection logic contributes $M-1=K-k$ Toffoli gates, while the leaf operations contribute the cost of the controlled $U_j$ gates themselves.
For $M\ge2$, the recursion depth, and hence the number of temporary control qubits, is at most $\left\lceil \log_2(M-1)\right\rceil+1$.
For the trivial case $M=1$, no temporary control qubit is needed.
This replaces independent equality tests for all $j$, which would cost $O(M\log_2 n)$ Toffoli gates for an $O(\log_2 n)$-bit index register, by a linear-size selection circuit using only $O(\log_2 M)$ temporary control qubits.

\subsection{Location-controlled operations}
\label{subsec:blocks}

In this subsection, the two \texttt{Work} registers are indexed, from left to right, by $u_1, u_2, \ldots, u_{n+3}$ and $v_1, v_2, \ldots, v_{n+3}$, respectively.

\paragraph{Location-controlled $\mathsf{Swap}$ circuit.}
We first consider the location-controlled swap operation appearing in Figure \ref{fig:all_circuit_1}.
The goal is to swap the \texttt{Sign} qubit with the $(\ell_t+\ell_q+1)$-th qubit of \texttt{Work1}, while the whole operation is controlled by an external qubit \texttt{Ctrl}.
Let $J = \ell_t+\ell_q+1$, where $k$ and $K$ are known constants such that $k\le J\le K$.
After temporarily storing $J$ in the $\ell_q$ register, the location-controlled swap can be written as
\begin{align*}
    \mathsf{Swap}_{[k,K]}
    &=
    \ket{0}\!\bra{0}_{\texttt{Ctrl}}\otimes I
    +
    \ket{1}\!\bra{1}_{\texttt{Ctrl}}
    \otimes
    \sum_{j=k}^{K}
    \ket{j}\!\bra{j}_{\ell_q}
    \otimes
    \mathsf{SWAP}\!\left(\texttt{Sign},u_j\right),
\end{align*}
where the subscript $\ell_q$ denotes the temporary register storing $J$.

This operation can be implemented directly using unary iteration.
We first reversibly transform the $\ell_q$ register as $\ket{\ell_t}\ket{\ell_q} \longmapsto \ket{\ell_t}\ket{\ell_t+\ell_q+1}$.
Using \texttt{Ctrl} as the global control, pruned unary iteration over the labels $k,k+1,\ldots,K$ then produces, one value at a time, the control bit $e_j=\texttt{Ctrl}\land [J=j]$.
When this unary control is active, we swap \texttt{Sign} and $u_j$.
After it finishes we restore the $\ell_q$ register to its original value.
Figure~\ref{fig:lc_swap} illustrates this construction for the example $k=5$ and $K=8$.

\begin{figure}[htbp]
    \centering
    \resizebox{0.75\textwidth}{!}{
        \input{qpic/lc-swap}
    }
    \caption{Location-controlled swap circuit for $k=5$ and $K=8$. After computing $J=\ell_t+\ell_q+1$, pruned unary iteration over the labels $5,6,7,8$ selects the work qubit to be swapped with \texttt{Sign}.}
    \label{fig:lc_swap}
\end{figure}

The unary iterator contributes $K-k+O(1)$ Toffoli gates and $2(K-k)+O(1)$ CNOT gates, the controlled swaps contribute $K-k+O(1)$ Toffoli gates and $2(K-k)+O(1)$ CNOT gates. The reversible transformations of the $\ell_q$ register contribute $O(\log_2 n)$ Toffoli/CNOT gates.
Thus the location-controlled $\mathsf{Swap}$ circuit uses $2(K-k)+O(\log_2 n)$ Toffoli gates and $4(K-k)+O(\log_2 n)$ CNOT gates.

\paragraph{Location-controlled $\mathsf{Add, Sub}$ circuit.}
We next describe the location-controlled addition and subtraction circuits.
As a representative example, consider the operation
\begin{align}
    (\texttt{Sign},r)\gets (\texttt{Sign},r)-2^{\ell_s}r',
    \label{eq:lc_sub_example}
\end{align}
where \texttt{Sign} records the sign of the subtraction result.
In terms of physical positions in the two \texttt{Work} registers, this operation acts only on the interval
\begin{align*}
    L &= \ell_t+\ell_q+2, &
    R &= n+3-\ell_s .
\end{align*}
That is, the relevant qubits are $u_L,\ldots,u_R$ in \texttt{Work1} and $v_L,\ldots,v_R$ in \texttt{Work2}; outside this interval all qubits are left unchanged.
The shift by $\ell_s$ is already reflected in the endpoint $R=n+3-\ell_s$, so the aligned bits $v_L,\ldots,v_R$ represent the contribution of $2^{\ell_s}r'$ to the subtraction.
For a known active window $[k,K]$, we promise that $k\le L\le R\le K$.

If the endpoints $L$ and $R$ were fixed, the subtraction could be implemented by a ripple-carry subtractor on the qubit pairs $(u_j,v_j)$ for $j\in[L,R]$ with a dedicated carry qubit $c$ initialized to $\ket{0}$.
For the subtraction in Equation~\eqref{eq:lc_sub_example}, we first apply the $\mathsf{UMA}^{\dagger}$ cells from high to low, update \texttt{Sign} from the carry qubit, and then apply the $\mathsf{MAJ}^{\dagger}$ cells from low to high to uncompute the carry.

It remains to coherently select the interval $[L,R]$.
The endpoint registers storing $L$ and $R$ are obtained from the length registers by reversible affine transformations, which are undone after the arithmetic block.
The circuit uses one accumulator bit $s$, initialized to $\ket{0}$.
In the $\mathsf{UMA}^{\dagger}$ pass, the circuit scans the physical positions from high to low, namely $j=K,K-1,\ldots,k$.
For each scanned position $j$, pruned unary iteration on the endpoint registers coherently produces the two local controls
\begin{align*}
    \lambda_j &= \texttt{Ctrl}\land [R=j], &
    \mu_j &= \texttt{Ctrl}\land [L=j].
\end{align*}

The circuit first updates the accumulator by $s \gets s\oplus \lambda_j$.
If the updated value of $s$ is $1$, the circuit applies $\mathsf{UMA}^{\dagger}(c,u_j,v_j)$; otherwise it does nothing on $(c,u_j,v_j)$.
After this controlled arithmetic cell, the circuit updates the accumulator by $s \gets s\oplus \mu_j$.
Therefore, when the scan reaches $j=R$, the bit $\lambda_j$ flips $s$ from $0$ to $1$ before the arithmetic cell, so the $\mathsf{UMA}^{\dagger}$ cell at $j=R$ is applied.
For positions strictly inside the interval, the accumulator remains equal to $1$, so the corresponding arithmetic cells are also applied.
When the scan reaches $j=L$, the $\mathsf{UMA}^{\dagger}$ cell is still applied, and then $\mu_j$ flips $s$ back to $0$.
The same update rule is used for every scanned position $j\in[k,K]$.
Before the active interval is reached, all arithmetic cells are disabled, so the carry qubit $c$ remains equal to $0$ at the interval entrance.

After the $\mathsf{UMA}^{\dagger}$ pass, $c$ stores the carry bit of the selected subtraction.
We update \texttt{Sign} by
\begin{align*}
    \texttt{Sign}\gets \texttt{Sign}\oplus c.
\end{align*}

The $\mathsf{MAJ}^{\dagger}$ pass visits $j=k,k+1,\ldots,K$. 
At each position, the accumulator is first updated by $s \gets s\oplus \mu_j$.
If the updated value of $s$ is $1$, the circuit applies $\mathsf{MAJ}^{\dagger}(c,u_j,v_j)$.
It then updates the accumulator by $s \gets s\oplus \lambda_j$.
Therefore the $\mathsf{MAJ}^{\dagger}$ cells are applied exactly for $j\in[L,R]$, and $s$ and $c$ are restored to $\ket{0}$ at the end.

Figure~\ref{fig:lc_add} illustrates the complete pattern for the example $k=5$ and $K=7$.

\begin{figure}[htbp]
    \centering
    \resizebox{1\textwidth}{!}{
        \input{qpic/lc-add}
    }
    \caption{Location-controlled subtraction circuit for $k=5$ and $K=7$. The $\mathsf{UMA}^{\dagger}$ pass computes the carry, \texttt{Sign} is updated from $c$, and the $\mathsf{MAJ}^{\dagger}$ pass restores the carry qubit.}
    \label{fig:lc_add}
\end{figure}

Finally, we explain how the controlled $\mathsf{MAJ}$ and $\mathsf{UMA}$ cells used above are implemented.
We do not add the location-control bit to every gate in the cell; only the Toffoli gate that updates the carry qubit is controlled.
Figure~\ref{fig:controlled_maj_uma_cells} shows the corresponding controlled cells.

\begin{figure}[ht]
    \centering
    \begin{subfigure}{0.45\linewidth}
        \centering
        \includegraphics[height=2.8cm]{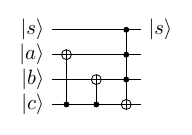}
        \caption{Controlled $\mathsf{MAJ}$ cell.}
        \label{fig:controlled_maj_cell}
    \end{subfigure}
    \hfill
    \begin{subfigure}{0.45\linewidth}
        \centering
        \includegraphics[height=2.8cm]{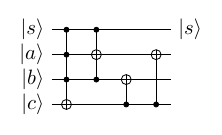}
        \caption{Controlled $\mathsf{UMA}$ cell.}
        \label{fig:controlled_uma_cell}
    \end{subfigure}
    \caption{Controlled arithmetic cells used in the location-controlled add/sub circuits.}
    \label{fig:controlled_maj_uma_cells}
\end{figure}

The gate count consists of the following contributions:
\begin{itemize}
    \item Generating $[L=j]$ and $[R=j]$ during both scans requires four applications of
    unary iteration. They use $4(K-k)+O(1)$ Toffoli gates and
    $8(K-k)+O(1)$ CNOT gates in total.
    \item Applying one controlled $\mathsf{UMA}^{\dagger}$ cell at each position in the
    high-to-low scan uses
    $4(K-k)+O(1)$ Toffoli gates and $2(K-k)+O(1)$ CNOT gates.
    \item Applying one controlled $\mathsf{MAJ}^{\dagger}$ cell at each position in the
    low-to-high scan uses
    $3(K-k)+O(1)$ Toffoli gates and $2(K-k)+O(1)$ CNOT gates.
    \item Computing $L$ and $R$ from $\ell_q$ and $\ell_s$, and subsequently restoring
    these registers, uses
    $O(\log_2 n)$ Toffoli and CNOT gates.
\end{itemize}
Adding these terms, the location-controlled $\mathsf{Sub}$ circuit implementing
Equation~\eqref{eq:lc_sub_example} uses $11(K-k)+O(\log_2 n)$ Toffoli gates and
$12(K-k)+O(\log_2 n)$ CNOT gates.

\paragraph{Location-controlled length-update circuit.}
Finally, we describe the length-update circuit.
We focus on updating $\ell_t$; the circuit for updating $\ell_{r'}$ is analogous.
At the end of one EEA iteration, after the full swap of \texttt{Work1} and \texttt{Work2}, the left part of \texttt{Work1} stores the current value of $t$, while the left part of \texttt{Work2} stores the current value of $t'$.
Let $L_{\mathrm{new}} = \operatorname{len}(t), L_{\mathrm{old}} = \operatorname{len}(t')$.
Since the value stored in $\ell_t$ before the update is $L_{\mathrm{old}}$, it suffices to implement
\begin{align*}
    \ell_t
    \gets
    \ell_t\oplus L_{\mathrm{old}}\oplus L_{\mathrm{new}} .
\end{align*}
Thus the length update is reduced to two serial constant-XOR writes: first XOR the old length, and then XOR the new length.

We first describe the subroutine for XORing $L_{\mathrm{new}}$ into $\ell_t$.
This subroutine finds the highest valid nonzero position among the qubits $u_k,\ldots,u_K$.
Only positions in the left region not occupied by $r'$ should be inspected.
Let $J=n+3-\ell_{r'}$ be the right boundary of this valid region with the promise $k\le \ell_t \le J\le K$ from the active window under consideration.
For each $j\in[k,K]$, define the masked bit
\begin{align*}
    a_j=\texttt{Ctrl}\land [j\le J]\land u_j .
\end{align*}
Also define bits
\begin{align*}
    z_j=\bigwedge_{h=j}^{K}\neg a_h .
\end{align*}
Thus $z_j=1$ means that no valid nonzero bit occurs among positions $j,j+1,\ldots,K$.
If the highest valid nonzero position is $m$, then
\begin{align*}
    z_K=\cdots=z_{m+1}=1,
    \qquad
    z_m=\cdots=z_k=0 .
\end{align*}
If no valid nonzero bit exists in the window, then all $z_j$ are equal to $1$.

The highest position can be written using a telescoping XOR.
We first XOR the classical constant $K$ into the length register.
Then, for $j=K,K-1,\ldots,k+1$, whenever $z_j=1$, we change the current candidate from $j$ to $j-1$ by XORing the classical constant $j\oplus(j-1)$.
Finally, if $z_k=1$, then the entire window contains no valid nonzero bit, and we clear the remaining candidate $k$ by XORing $k$.
Equivalently, the value written by the subroutine is
\begin{align*}
    L_{\mathrm{new}}
    =
    K
    \oplus
    \bigoplus_{j=k+1}^{K}
        z_j\cdot \bigl(j\oplus(j-1)\bigr)
    \oplus
    z_k\cdot k ,
\end{align*}
where a term such as $z_j\cdot C$ means that the classical bit string $C$ is XORed into the length register controlled on $z_j$.
If the highest valid nonzero position is $m$, the expression telescopes to $m$.
If no such position exists, it telescopes to zero.

It remains to implement the constant writes controlled by the bits $z_j$ without storing all of them in clean ancillae.
Borrow dirty qubits $g_k,\ldots,g_K$, which will be restored at the end.
For $j>k$, let $V_j$ denote the Clifford operation that XORs the classical constant $j\oplus(j-1)$ into the length register, and let $V_k$ denote the operation that XORs $k$.
Since each $V_j$ is a product of CNOT gates, $V_j^2=I$.
Assume for the moment that we can perform the transformation
\begin{align*}
    g_j\gets g_j\oplus z_j
    \qquad (j=k,\ldots,K).
\end{align*}
Then the desired controlled write is implemented by placing this transformation between two uses of $V_j$ controlled by the borrowed qubit:
\begin{align*}
    V_j^{g_j},\qquad
    g_j\gets g_j\oplus z_j,\qquad
    V_j^{g_j}.
\end{align*}
Indeed, if the initial value of the borrowed qubit is $G_j$, then the two controlled writes are
\begin{align*}
    V_j^{G_j}V_j^{G_j\oplus z_j}=V_j^{z_j},
\end{align*}
so the unknown initial value cancels.
After all constant writes are completed, we perform the same transformation again to restore every $g_j$ to its original value.
Figure~\ref{fig:length-write} illustrates this method for the example $k=5$ and $K=7$.
\begin{figure}[htbp]
    \centering
    \resizebox{1\textwidth}{!}{
        \begin{tikzpicture}[scale=0.800000,x=1pt,y=1pt]
\filldraw[color=white] (0.000000, -7.500000) rectangle (594.000000, 142.500000);
\draw[color=black] (0.000000,135.000000) -- (594.000000,135.000000);
\draw[color=black] (0.000000,135.000000) node[left] {$\ket{\ell_t}$};
\draw[color=black] (0.000000,120.000000) -- (594.000000,120.000000);
\draw[color=black] (0.000000,120.000000) node[left] {$\ket{\texttt{Ctrl}}$};
\draw[color=black] (0.000000,105.000000) -- (594.000000,105.000000);
\draw[color=black] (0.000000,105.000000) node[left] {$\ket{\ell_{r^\prime}}$};
\draw[color=black] (0.000000,90.000000) -- (594.000000,90.000000);
\draw[color=black] (0.000000,90.000000) node[left] {$\ket{u_7}$};
\draw[color=black] (0.000000,75.000000) -- (594.000000,75.000000);
\draw[color=black] (0.000000,75.000000) node[left] {$\ket{u_6}$};
\draw[color=black] (0.000000,60.000000) -- (594.000000,60.000000);
\draw[color=black] (0.000000,60.000000) node[left] {$\ket{u_5}$};
\draw[color=black] (0.000000,45.000000) -- (594.000000,45.000000);
\draw[color=black] (0.000000,45.000000) node[left] {$\ket{0^4}$};
\draw[color=black] (0.000000,30.000000) -- (594.000000,30.000000);
\draw[color=black] (0.000000,30.000000) node[left] {$\ket{g_7}$};
\draw[color=black] (0.000000,15.000000) -- (594.000000,15.000000);
\draw[color=black] (0.000000,15.000000) node[left] {$\ket{g_6}$};
\draw[color=black] (0.000000,0.000000) -- (594.000000,0.000000);
\draw[color=black] (0.000000,0.000000) node[left] {$\ket{g_5}$};
\draw (21.000000, 39.000000) -- (29.000000, 51.000000);
\begin{scope}
\draw[fill=white] (25.000000, 135.000000) +(-45.000000:26.870058pt and 8.485281pt) -- +(45.000000:26.870058pt and 8.485281pt) -- +(135.000000:26.870058pt and 8.485281pt) -- +(225.000000:26.870058pt and 8.485281pt) -- cycle;
\clip (25.000000, 135.000000) +(-45.000000:26.870058pt and 8.485281pt) -- +(45.000000:26.870058pt and 8.485281pt) -- +(135.000000:26.870058pt and 8.485281pt) -- +(225.000000:26.870058pt and 8.485281pt) -- cycle;
\draw (25.000000, 135.000000) node {$\oplus 7$};
\end{scope}
\draw (87.000000,135.000000) -- (87.000000,30.000000);
\begin{scope}
\draw[fill=white] (87.000000, 135.000000) +(-45.000000:43.840620pt and 8.485281pt) -- +(45.000000:43.840620pt and 8.485281pt) -- +(135.000000:43.840620pt and 8.485281pt) -- +(225.000000:43.840620pt and 8.485281pt) -- cycle;
\clip (87.000000, 135.000000) +(-45.000000:43.840620pt and 8.485281pt) -- +(45.000000:43.840620pt and 8.485281pt) -- +(135.000000:43.840620pt and 8.485281pt) -- +(225.000000:43.840620pt and 8.485281pt) -- cycle;
\draw (87.000000, 135.000000) node {$\oplus(7\oplus6)$};
\end{scope}
\filldraw (87.000000, 30.000000) circle(1.500000pt);
\draw (161.000000,135.000000) -- (161.000000,15.000000);
\begin{scope}
\draw[fill=white] (161.000000, 135.000000) +(-45.000000:43.840620pt and 8.485281pt) -- +(45.000000:43.840620pt and 8.485281pt) -- +(135.000000:43.840620pt and 8.485281pt) -- +(225.000000:43.840620pt and 8.485281pt) -- cycle;
\clip (161.000000, 135.000000) +(-45.000000:43.840620pt and 8.485281pt) -- +(45.000000:43.840620pt and 8.485281pt) -- +(135.000000:43.840620pt and 8.485281pt) -- +(225.000000:43.840620pt and 8.485281pt) -- cycle;
\draw (161.000000, 135.000000) node {$\oplus(6\oplus5)$};
\end{scope}
\filldraw (161.000000, 15.000000) circle(1.500000pt);
\draw (223.000000,135.000000) -- (223.000000,0.000000);
\begin{scope}
\draw[fill=white] (223.000000, 135.000000) +(-45.000000:26.870058pt and 8.485281pt) -- +(45.000000:26.870058pt and 8.485281pt) -- +(135.000000:26.870058pt and 8.485281pt) -- +(225.000000:26.870058pt and 8.485281pt) -- cycle;
\clip (223.000000, 135.000000) +(-45.000000:26.870058pt and 8.485281pt) -- +(45.000000:26.870058pt and 8.485281pt) -- +(135.000000:26.870058pt and 8.485281pt) -- +(225.000000:26.870058pt and 8.485281pt) -- cycle;
\draw (223.000000, 135.000000) node {$\oplus 5$};
\end{scope}
\filldraw (223.000000, 0.000000) circle(1.500000pt);
\draw (285.000000,120.000000) -- (285.000000,0.000000);
\begin{scope}
\draw[fill=white] (285.000000, 60.000000) +(-45.000000:43.840620pt and 93.338095pt) -- +(45.000000:43.840620pt and 93.338095pt) -- +(135.000000:43.840620pt and 93.338095pt) -- +(225.000000:43.840620pt and 93.338095pt) -- cycle;
\clip (285.000000, 60.000000) +(-45.000000:43.840620pt and 93.338095pt) -- +(45.000000:43.840620pt and 93.338095pt) -- +(135.000000:43.840620pt and 93.338095pt) -- +(225.000000:43.840620pt and 93.338095pt) -- cycle;
\draw (285.000000, 60.000000) node {$g_j\oplus z_j$};
\end{scope}
\draw (359.000000,135.000000) -- (359.000000,30.000000);
\begin{scope}
\draw[fill=white] (359.000000, 135.000000) +(-45.000000:43.840620pt and 8.485281pt) -- +(45.000000:43.840620pt and 8.485281pt) -- +(135.000000:43.840620pt and 8.485281pt) -- +(225.000000:43.840620pt and 8.485281pt) -- cycle;
\clip (359.000000, 135.000000) +(-45.000000:43.840620pt and 8.485281pt) -- +(45.000000:43.840620pt and 8.485281pt) -- +(135.000000:43.840620pt and 8.485281pt) -- +(225.000000:43.840620pt and 8.485281pt) -- cycle;
\draw (359.000000, 135.000000) node {$\oplus(7\oplus6)$};
\end{scope}
\filldraw (359.000000, 30.000000) circle(1.500000pt);
\draw (433.000000,135.000000) -- (433.000000,15.000000);
\begin{scope}
\draw[fill=white] (433.000000, 135.000000) +(-45.000000:43.840620pt and 8.485281pt) -- +(45.000000:43.840620pt and 8.485281pt) -- +(135.000000:43.840620pt and 8.485281pt) -- +(225.000000:43.840620pt and 8.485281pt) -- cycle;
\clip (433.000000, 135.000000) +(-45.000000:43.840620pt and 8.485281pt) -- +(45.000000:43.840620pt and 8.485281pt) -- +(135.000000:43.840620pt and 8.485281pt) -- +(225.000000:43.840620pt and 8.485281pt) -- cycle;
\draw (433.000000, 135.000000) node {$\oplus(6\oplus5)$};
\end{scope}
\filldraw (433.000000, 15.000000) circle(1.500000pt);
\draw (495.000000,135.000000) -- (495.000000,0.000000);
\begin{scope}
\draw[fill=white] (495.000000, 135.000000) +(-45.000000:26.870058pt and 8.485281pt) -- +(45.000000:26.870058pt and 8.485281pt) -- +(135.000000:26.870058pt and 8.485281pt) -- +(225.000000:26.870058pt and 8.485281pt) -- cycle;
\clip (495.000000, 135.000000) +(-45.000000:26.870058pt and 8.485281pt) -- +(45.000000:26.870058pt and 8.485281pt) -- +(135.000000:26.870058pt and 8.485281pt) -- +(225.000000:26.870058pt and 8.485281pt) -- cycle;
\draw (495.000000, 135.000000) node {$\oplus 5$};
\end{scope}
\filldraw (495.000000, 0.000000) circle(1.500000pt);
\draw (557.000000,120.000000) -- (557.000000,0.000000);
\begin{scope}
\draw[fill=white] (557.000000, 60.000000) +(-45.000000:43.840620pt and 93.338095pt) -- +(45.000000:43.840620pt and 93.338095pt) -- +(135.000000:43.840620pt and 93.338095pt) -- +(225.000000:43.840620pt and 93.338095pt) -- cycle;
\clip (557.000000, 60.000000) +(-45.000000:43.840620pt and 93.338095pt) -- +(45.000000:43.840620pt and 93.338095pt) -- +(135.000000:43.840620pt and 93.338095pt) -- +(225.000000:43.840620pt and 93.338095pt) -- cycle;
\draw (557.000000, 60.000000) node {$g_j\oplus z_j$};
\end{scope}
\draw[color=black] (594.000000,135.000000) node[right] {$\ket{\ell_t\oplus L_{\mathrm{new}}}$};
\draw[color=black] (594.000000,120.000000) node[right] {$\ket{\texttt{Ctrl}}$};
\draw[color=black] (594.000000,105.000000) node[right] {$\ket{\ell_{r^\prime}}$};
\draw[color=black] (594.000000,90.000000) node[right] {$\ket{u_7}$};
\draw[color=black] (594.000000,75.000000) node[right] {$\ket{u_6}$};
\draw[color=black] (594.000000,60.000000) node[right] {$\ket{u_5}$};
\draw[color=black] (594.000000,45.000000) node[right] {$\ket{0^4}$};
\draw[color=black] (594.000000,30.000000) node[right] {$\ket{g_7}$};
\draw[color=black] (594.000000,15.000000) node[right] {$\ket{g_6}$};
\draw[color=black] (594.000000,0.000000) node[right] {$\ket{g_5}$};
\end{tikzpicture}
    }
    \caption{Writing $L_{\mathrm{new}}$ using borrowed qubits for $k=5$ and $K=7$. The boxes labeled $g_j\oplus z_j$ denote the transformation $g_j\gets g_j\oplus z_j$ for all $j$ in the window.}
    \label{fig:length-write}
\end{figure}

We now explain how to implement the transformation $g_j\gets g_j\oplus z_j$ for all $j$.
The recurrence is
\begin{align*}
    z_K &= \neg a_K, &
    z_j &= \neg a_j\, z_{j+1}
    \qquad (j<K).
\end{align*}
To implement this transformation, first apply, for $j=k,k+1,\ldots,K-1$, the Toffoli gates $g_j\gets g_j\oplus (\neg a_j)g_{j+1}$.
Then apply the base CNOT gate $g_K\gets g_K\oplus \neg a_K$.
Finally apply the same Toffoli gates in reverse order, for $j=K-1,K-2,\ldots,k$.
For a fixed $j<K$, the two Toffoli gates acting on $g_j$ are separated by the transformation $g_{j+1}\gets g_{j+1}\oplus z_{j+1}$.
Therefore their combined action on $g_j$ is
\begin{align*}
    g_j
    &\gets
    g_j
    \oplus
    (\neg a_j)g_{j+1}
    \oplus
    (\neg a_j)(g_{j+1}\oplus z_{j+1}) \notag\\
    &=
    g_j\oplus (\neg a_j)z_{j+1}
    =
    g_j\oplus z_j .
\end{align*}
The bit $a_j$ is produced only locally when the corresponding gate is applied.
During a scan of the window, unary iteration on the boundary register supplies the temporary range control
\begin{align*}
    b_j=\texttt{Ctrl}\land [j\le J].
\end{align*}
The circuit then computes $a_j=b_j\land u_j$ into one clean temporary qubit, uses it as the negative control in the corresponding Toffoli or CNOT, and immediately uncomputes it.
The required range controls are supplied by one scan in the order $k,k+1,\ldots,K$ and one scan in the reverse order $K,K-1,\ldots,k$.

Figure~\ref{fig:length-z-transform} shows this transformation for the example $k=5$ and $K=7$.
The boundary value is computed in place on the $\ell_{r'}$ register, and the temporary bit $a$ is created only when the corresponding update gate is applied.
\begin{figure}[htbp]
    \centering
    \resizebox{1\textwidth}{!}{
        \input{qpic/lc-length-update-dirty-z}
    }
    \caption{Computing $g_j\gets g_j\oplus z_j$ for $k=5$ and $K=7$.}
    \label{fig:length-z-transform}
\end{figure}

To update $\ell_t$, we invoke this subroutine twice.
First we use the analogous construction with $v_j$ in place of $u_j$ and borrow the qubits $u_j$ as the dirty register $g_j$, which XORs $L_{\mathrm{old}}$ into $\ell_t$.
Then we use the construction described above and borrow the qubits $v_j$ as the dirty register $g_j$, which XORs $L_{\mathrm{new}}$ into $\ell_t$.
The two calls are serial, and the borrowed dirty qubits are restored after each call.

We now give the explicit resource count.
First consider one application of $g_j\gets g_j\oplus z_j$ for all $j\in[k,K]$.
Its gate count has three contributions:
\begin{itemize}
    \item The forward and reverse unary iterations use $2(K-k)+O(1)$ Toffoli gates and
    $4(K-k)+O(1)$ CNOT gates.
    \item The forward and reverse applications of
    $g_j\gets g_j\oplus(\neg a_j)g_{j+1}$ contribute $2(K-k)$ Toffoli gates in total,
    while the update at $j=K$ contributes one CNOT gate.
    \item The temporary bit $a_j=b_j\land u_j$ is computed once in each direction and
    uncomputed immediately after use. Over both directions, this uses
    $2(K-k)+O(1)$ Toffoli gates and $2(K-k)+O(1)$ CNOT gates.
\end{itemize}
Thus one application of $g_j\gets g_j\oplus z_j$ uses
$6(K-k)+O(\log_2n)$ Toffoli gates and $6(K-k)+O(\log_2n)$ CNOT gates.

To XOR either $L_{\mathrm{old}}$ or $L_{\mathrm{new}}$ into $\ell_t$, the transformation
$g_j\gets g_j\oplus z_j$ is first applied between the two sets of controlled constant XORs
and is then applied again to restore the qubits $g_j$. These two applications use
$12(K-k)+O(\log_2n)$ Toffoli gates and $12(K-k)+O(\log_2n)$ CNOT gates. Each set of
controlled constant XORs uses $2(K-k)+O(\log_2 n)$ CNOT gates and no Toffoli gates.
Therefore, XORing one of the two lengths into $\ell_t$ uses
$12(K-k)+O(\log_2n)$ Toffoli gates and $16(K-k)+O(\log_2n)$ CNOT gates.

The complete length update performs this procedure once for $L_{\mathrm{old}}$ and once for
$L_{\mathrm{new}}$. Its total cost is $24(K-k)+O(\log_2n)$ Toffoli gates and
$32(K-k)+O(\log_2n)$ CNOT gates.

The construction uses exactly $K-k+1$ dirty auxiliary qubits, all of which are restored to
their initial states. The same clean auxiliary qubits can be reused throughout the circuit:
one stores $a_j$, one stores $b_j$, and unary iteration requires at most
$\lceil\log_2(K-k)\rceil+1$ additional clean qubits for $K-k\ge 1$. Thus the construction
uses at most $\lceil\log_2(K-k)\rceil+3$ clean auxiliary qubits.

\section{Affine Point Addition with Mid-Circuit Measurement}
\label{sec:measurement_point_addition}

In the previous sections, we constructed a space-efficient modular inversion circuit based on the extended Euclidean algorithm.
The inversion circuit uses two $n$-qubit field registers together with $O(\log_2 n)$ auxiliary qubits.
In this section, we explain how to incorporate this component into an affine point-addition circuit.
By allowing mid-circuit measurements and classical feed-forward operations, the affine point-addition block can be implemented using only three $n$-qubit field registers, plus the $O(\log_2 n)$ auxiliary qubits required by the EEA inversion subroutine.

The overall circuit is shown in Figure~\ref{fig:measurement_point_addition}.
The three main quantum registers are denoted by $X$, $Y$, and $A$, each storing an element of $\mathbb{F}_p$.
The remaining $O(\log_2 n)$ EEA ancilla qubits are not shown explicitly in the figure.

\begin{figure}[ht]
    \centering
    \includegraphics[width=\linewidth]{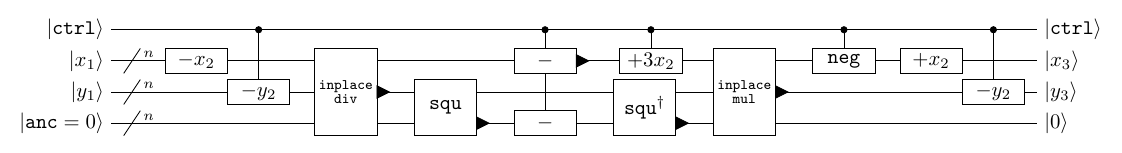}
    \caption{Affine point-addition circuit with mid-circuit measurements using three $n$-qubit field registers.}
    \label{fig:measurement_point_addition}
\end{figure}

The constant additions, subtractions, negations, and squarings in Figure~\ref{fig:measurement_point_addition} are standard in-place modular operations.
The only non-standard components are the in-place division and multiplication subroutines.
They have the following abstract interfaces:
\begin{align}
    \operatorname{IDiv}:\quad
    \ket{x}_X\ket{y}_Y\ket{0}_A
    &\longmapsto
    \ket{x}_X\ket{y/x}_Y\ket{0}_A,
    \qquad x\ne 0,
    \label{eq:idiv_interface}
    \\
    \operatorname{IMul}:\quad
    \ket{x}_X\ket{y}_Y\ket{0}_A
    &\longmapsto
    \ket{x}_X\ket{xy}_Y\ket{0}_A,
    \qquad x\ne 0.
    \label{eq:imul_interface}
\end{align}

It remains to explain how the in-place maps in Equations~\eqref{eq:idiv_interface} and~\eqref{eq:imul_interface} can be implemented with only three field registers.
The constructions are similar, so we describe the in-place division subroutine in detail.
Let $S$ denote the $O(\log_2 n)$ auxiliary register used by the EEA inversion circuit.
The goal is to implement
\begin{align*}
    \ket{x}_X\ket{y}_Y\ket{0}_A\ket{0}_S
    \longmapsto
    \ket{x}_X\ket{y/x}_Y\ket{0}_A\ket{0}_S,
    \qquad x\ne 0.
\end{align*}

First, run the forward EEA inversion circuit on $X$, using $A$ as the large clean workspace and $S$ as the small auxiliary register:
\begin{align*}
    \ket{x}_X\ket{y}_Y\ket{0}_A\ket{0}_S
    \longmapsto
    \ket{x^{-1}}_X\ket{y}_Y\ket{0}_A\ket{\Gamma(x)}_S.
\end{align*}
Here $\ket{\Gamma(x)}_S$ denotes the intermediate EEA state that is kept until the inverse EEA circuit is applied.
Next, use a standard modular multiplication to write the quotient into $A$:
\begin{align*}
    \ket{x^{-1}}_X\ket{y}_Y\ket{0}_A
    \longmapsto
    \ket{x^{-1}}_X\ket{y}_Y\ket{y/x}_A.
\end{align*}
At this point $A$ is no longer clean, and therefore cannot be used as the large workspace for the inverse EEA circuit.
Instead, we recycle the register $Y$ by measurement.
Apply $H^{\otimes n}$ to $Y$ and measure it in the computational basis.
If the classical outcome is $b\in\{0,1\}^n$, the remaining quantum state acquires the phase $(-1)^{b\cdot y}$, where the dot product is over $\mathbb{F}_2$.
The measured register $Y$ is then reset to $\ket{0}$.

Now $Y$ is a clean $n$-qubit register, so it can be used as the large workspace for the inverse EEA circuit.
Applying the inverse EEA circuit gives
\begin{align}
    (-1)^{b\cdot y}
    \ket{x^{-1}}_X\ket{0}_Y\ket{y/x}_A\ket{\Gamma(x)}_S
    \longmapsto
    (-1)^{b\cdot y}
    \ket{x}_X\ket{0}_Y\ket{y/x}_A\ket{0}_S.
    \label{eq:idiv_step_inverse_eea}
\end{align}
The phase in Equation~\eqref{eq:idiv_step_inverse_eea} is removed by recomputing the measured value $y$ from $x$ and $y/x$.
Specifically, apply the modular multiplication
\begin{align}
    \ket{x}_X\ket{0}_Y\ket{y/x}_A
    \longmapsto
    \ket{x}_X\ket{y}_Y\ket{y/x}_A,
    \label{eq:idiv_step_recompute_y}
\end{align}
then apply the classically controlled Pauli correction
\begin{align*}
    Z^b
    =
    Z^{b_0}\otimes Z^{b_1}\otimes\cdots\otimes Z^{b_{n-1}}
\end{align*}
to the register $Y$.
This contributes another factor $(-1)^{b\cdot y}$ and therefore cancels the measurement phase.
Finally, reverse the multiplication in Equation~\eqref{eq:idiv_step_recompute_y}, obtaining
\begin{align*}
    \ket{x}_X\ket{0}_Y\ket{y/x}_A.
\end{align*}
Swapping $Y$ and $A$ completes the in-place division:
\begin{align*}
    \ket{x}_X\ket{y/x}_Y\ket{0}_A.
\end{align*}
The corresponding circuit is shown in Figure~\ref{fig:measurement_idiv}.

\begin{figure}[ht]
    \centering
    \includegraphics[width=1\linewidth]{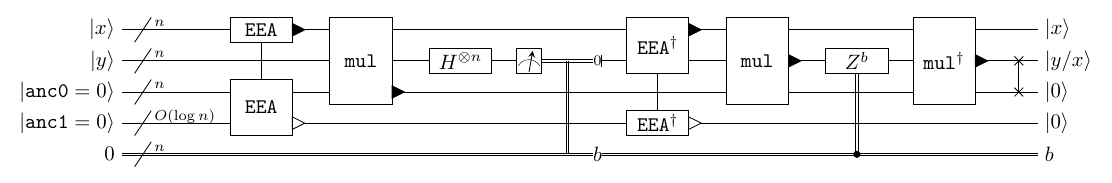}
    \caption{In-place division with mid-circuit measurement and classical feed-forward operations.}
    \label{fig:measurement_idiv}
\end{figure}

The in-place multiplication subroutine is implemented by the same mechanism.
One first computes $xy$ into the clean register $A$, measures and resets the old $Y=y$ register, uses the remaining registers to recompute $y$ for the $Z^b$ phase correction, uncomputes this recomputation, and finally swaps $Y$ with $A$.
Thus both $\operatorname{IDiv}$ and $\operatorname{IMul}$ require only the three field registers $X,Y,A$, together with the $O(\log_2 n)$ EEA auxiliary register $S$.
Therefore the affine point-addition circuit in Figure~\ref{fig:measurement_point_addition} uses $3n+O(\log_2 n)$
logical qubits.

\section{Resource Estimation and the ECDLP}
\label{sec:resource_estimation}

We adopt the resource-estimation methodology introduced by Roetteler et al.~\cite{roetteler2017quantum}.
In Shor’s algorithm for the ECDLP, the dominant quantum cost arises from the sequence of controlled elliptic-curve point additions performed in superposition. 
We consider the point additions in affine coordinates over the finite field $\mathbb{F}_p$, where $p$ is an $n$-bit prime.
At this level, point addition reduces to a fixed sequence of modular additions, subtractions, multiplications, squarings, and inversions.
Among these operations, modular inversion typically dominates both the space complexity and the Toffoli gate count.

\subsection{Space complexity improvements}
\label{subsec:resource_space}

In this work, we focus exclusively on reducing the space complexity required for modular inversion, rather than exploring trade-offs among Toffoli count, circuit depth, and space complexity as previous works \cite{roetteler2017quantum,haner2020improved,babbush2026ECC,schrottenloher2026optimized}.
Specifically, we present an exact, space-efficient, in-place reversible quantum circuit for modular inversion that implements Equation~\eqref{eq:inv_api}. The circuit is based on the register-sharing technique originally proposed by Proos and Zalka~\cite{proos2003shor}.
This approach reduces the number of qubits required for modular inversion to
\[
    2n + 6\floor{\log_2 n} + 19.
\]
Of this total, $2n + 4\floor{\log_2 n} + 15$ qubits correspond to the base circuit shown in Figures~\ref{fig:all_circuit_1} and~\ref{fig:all_circuit_2}, while the remaining $2\floor{\log_2 n} + 4$ qubits are required for the unary iteration described in Section~\ref{subsec:unary_iteration}.

For elliptic curve point addition shown in Figure~\ref{fig:measurement_point_addition}, the overall space complexity is then
\[
    3n + 6\floor{\log_2 n} + 19.
\]
\subsection{Asymptotic gate counts}
\label{subsec:inversion_costs}

\paragraph{Modular inversion.} We first derive the asymptotic Toffoli and CNOT gate counts of the proposed modular inversion circuit. Since the circuit consists of several types of location-controlled operations, we first recall (in Section~\ref{subsec:blocks}) the gate cost incurred by each circuit block at algorithmic step $T$. These per-step costs are summarized in Table~\ref{tab:inv_gate_cost}. We then combine them with the active-window analysis developed in Section~\ref{subsec:actwindow} to obtain the overall asymptotic gate complexity.

\begin{table}[htbp]
\centering
\begin{tabular}{|c|c|c|}
\hline
\textbf{Circuit block on step $T$}                                                          & \textbf{Toffoli cost} & \textbf{CNOT cost}    \\ \hline
\begin{tabular}[c]{@{}c@{}}Location-controlled\\ addition/subtraction on $r$'s\end{tabular} & $11(K_1(T) - k_1(T))$ & $12(K_1(T) - k_1(T))$ \\ \hline
Location-controlled swap                                                                    & $2(K_2(T) - k_2(T))$  & $4(K_2(T) - k_2(T))$  \\ \hline
\begin{tabular}[c]{@{}c@{}}Location-controlled\\ addition/subtraction on $t$'s\end{tabular} & $9(K_3(T) - k_3(T))$  & $10(K_3(T) - k_3(T))$ \\ \hline
\begin{tabular}[c]{@{}c@{}}Length-update circuit for\\ $\ell_t$ (only on $4|T$)\end{tabular}    & $24(K_4(T) - k_4(T))$ & $32(K_4(T) - k_4(T))$ \\ \hline
\begin{tabular}[c]{@{}c@{}}Length-update circuit for\\ $\ell_{r'}$ (only on $4|T$)\end{tabular} & $24(K_5(T) - k_5(T))$ & $32(K_5(T) - k_5(T))$ \\ \hline
\end{tabular}
\caption{Per-step Toffoli and CNOT gate counts of the major circuit blocks in the modular inversion algorithm.}
\label{tab:inv_gate_cost}
\end{table}

\begin{itemize}
\item \textbf{Location-controlled addition/subtraction on $r$'s.}  
These operations cost
\[
    2\sum_{T=1}^{4\ceil{cn}} 11\bigl(K_1(T) - k_1(T)\bigr) + O(n\log_2 n)
    = (44c + 11)n^2 + O(n\log_2 n)
\]
Toffoli gates, and $(48c + 12)n^2 + O(n\log_2 n)$ CNOT gates.

\item \textbf{Location-controlled swap.} 
These operations cost
\[
    \sum_{T=1}^{4\ceil{cn}} 2\bigl(K_2(T) - k_2(T)\bigr) + O(n\log_2 n)
    = (4c + 1)n^2 + O(n\log_2 n)
\]
Toffoli gates, and $(8c + 2)n^2 + O(n\log_2 n)$ CNOT gates.

\item \textbf{Location-controlled addition/subtraction on $t$'s.} 
These operations cost
\[
    2\sum_{T=1}^{4\ceil{cn}} 9\bigl(K_3(T) - k_3(T)\bigr) + O(n\log_2 n)
    = (72c - 36)n^2 + O(n\log_2 n)
\]
Toffoli gates, and $(80c - 40)n^2 + O(n\log_2 n)$ CNOT gates.

\item \textbf{Length-update circuit for updating $\ell_t$.} 
We execute the length-update circuit only at step indices $T$ that are multiples of four. These operations cost
\[
    \sum_{T'=1}^{\ceil{cn}} 24\bigl(K_4(4T') - k_4(4T')\bigr) + O(n\log_2 n)
    = 12c n^2 + O(n\log_2 n).
\]
Toffoli gates, and $16cn^2 + O(n\log_2 n)$ CNOT gates.

\item \textbf{Length-update circuit for updating $\ell_{r'}$.} 
We execute the length-update circuit only at step indices $T$ that are multiples of four. These operations cost
\[
    \sum_{T'=1}^{\ceil{cn}} 24\bigl(K_5(4T') - k_5(4T')\bigr) + O(n\log_2 n)
    = 12cn^2 + O(n\log_2 n).
\]
Toffoli gates, and $(16c - 4) n^2 + O(n\log_2 n)$ CNOT gates.

\end{itemize}

In addition to the circuit blocks analyzed above, the modular inversion circuit (illustrated in Figures \ref{fig:all_circuit_1} and \ref{fig:all_circuit_2}) contains four controlled $\mathrm{Shift}$ circuits. Each $\mathrm{Shift}$ circuit consists of $n+O(1)$ Toffoli gates and $2n+O(1)$ CNOT gates. Therefore, these four $\mathrm{Shift}$ circuits contribute an additional $16cn^2+O(n)$ Toffoli gates and $32cn^2+O(n)$ CNOT gates.

Combining all contributions, the total number of Toffoli gates required by our (one-sided without reversion) modular inversion circuit grows asymptotically as
\[
    (160c - 24)n^2 + O(n\log_2 n) < 228.636 n^2 + O(n\log_2 n).
\]
The asymptotic bound of CNOT gates required is
\[
    (200c - 30)n^2 + O(n\log_2 n) < 285.795 n^2 + O(n\log_2 n).
\]

\paragraph{Point addition.} As discussed in Section~\ref{sec:measurement_point_addition}, we count one affine point addition with mid-circuit measurement as consisting of 1 in-place division, 1 in-place multiplication, 2 modular squarings, and a constant number of modular additions/subtractions/negations. The corresponding circuit is shown schematically in Figure \ref{fig:measurement_point_addition}. 
Among these modular arithmetic operations, the modular additions/subtractions/negations contribute only $O(n\log_2 n)$ Toffoli gates in total and are therefore absorbed into the lower-order term. The in-place division, as well as the in-place multiplication, consists of two EEA inversion circuit, one forward and one inverse, and three modular multiplications as shown in Figure~\ref{fig:measurement_idiv}. 

We use an improved modular multiplication/squaring circuit described in Appendix \ref{app:mul}, which costs $17n^2 + O(n\log_2 n)$ Toffoli gates. Therefore, a single affine point addition with mid-circuit measurement can be implemented by at most
\[
    (4\cdot 228.636 + 6\cdot 17 + 2\cdot 17)n^2 + O(n\log_2 n) < 1051n^2 + O(n\log_2 n)
\]
Toffoli gates.
\subsection{Numerical experiments}
\label{subsec:numerical}
All resource-estimation experiments were conducted on a Linux server equipped with an AMD EPYC 7302 processor (up to 3.0\,GHz). Circuit generation, compilation and optimization were performed using IBM Qiskit. The constructed circuits were transpiled into a gate basis consisting of \texttt{CCX}, \texttt{CX}, and \texttt{X} gates, in order to obtain accurate Toffoli-gate and CNOT-gate counts.

The implementation follows the explicit circuit constructions in
Sections~\ref{sec:circuit} and~\ref{sec:measurement_point_addition}.   Since modular
inversion is the most involved subroutine and a main contributor to the
point-addition cost, we report its implementation separately.

\begin{itemize}
    \item \textbf{Modular-inversion subroutine.}
    We implement the one-step EEA circuit as a collection of Qiskit modules.  These modules include the pre-shift and
    post-shift blocks, the location-controlled arithmetic and swap blocks, the
    phase-update logic, and the length-update circuits.  A full inversion
    instance is generated by following the step schedule of
    Algorithm~\ref{alg:opt}.  Each step instantiates only its active window.
    The step circuits are counted in chunks and are then combined with the
    shared-register EEA wrapper.

    \item \textbf{Point-addition circuit.}
    We implement the point-addition circuit blockwise. Figure~\ref{fig:measurement_point_addition}
specifies the top level affine point addition schedule, including the constant
coordinate updates, the in-place division and multiplication calls, the
square-minus block, and the final corrections.  The point coordinates $x_2$ and
$y_2$ are treated as classically precomputed constants.  The in-place division
and multiplication blocks are assembled according to
Figure~\ref{fig:measurement_idiv}, from the shared-register EEA circuit,
modular multiplication blocks, inverse multiplication blocks, and the explicit
mid-circuit measurements and feed-forward phase corrections shown in
the figure.
\end{itemize}

To obtain the compiled gate counts of these implemented circuits, we use a
layered blockwise counting pipeline.  This is necessary because directly
expanding the full point-addition circuit into a single Qiskit circuit over
elementary operations is impractical.  We first recursively decompose and count
reusable Qiskit subblocks, including constant-coordinate updates, controlled
modular additions and subtractions, modular doubling and halving, and the
shared-register EEA wrapper.  We then assemble larger arithmetic blocks from
these counted components with their exact multiplicities.  For example, a
modular multiplication block contains $n$ controlled modular additions and
$n-1$ modular doublings, while its inverse contains $n$ controlled modular
subtractions and $n-1$ modular halvings.  The squaring block reuses the same
multiplication count and adds the explicit CNOT cost for copy and uncompute.
This yields a complete compiled gate count for the point-addition circuit while
preserving the block multiplicities.

In addition to compiling and counting the reusable subblocks, we apply a local
template-based optimization before each subblock count is stored and reused.
The optimizer performs local cancellation and Qiskit template replacement on
small reversible segments.  To make the template identities directly applicable,
we restrict these segments to the gate library
$\{X,\mathrm{CX},\mathrm{CCX}\}$.  A rewritten segment is accepted only if its gate count does not
increase.

To improve efficiency and preserve the circuit structure, we apply the
optimization segment by segment. We scan each decomposed subblock linearly and
collect consecutive $X$, $\mathrm{CX}$, and $\mathrm{CCX}$ gates
into one reversible segment.  When the scan reaches any other instruction, the
current segment is optimized, the out-of-library instruction is counted
directly, and a new segment is started.  Thus, in circuits with measurements and feed-forward phase corrections, these operations are not
rewritten.  They simply separate the neighboring reversible segments.  If a
segment exceeds the runtime limits, it is counted unchanged.

To validate the implementation, we developed test suites for both the modular-inversion and point-addition circuits. The tests include representative examples over prime fields with $p$ ranging from small values such as $p=3$ to 512-bit primes, and with circuit widths $n$ ranging up to $512$. They additionally verify register layouts, circuit schedules, arithmetic sub-blocks, and compiled block assembly.

The corresponding circuit implementation is included in our open-source codebase.\iffull
\footnote{\label{footnote:repo}GitHub repository: \href{https://github.com/ZeroWang030221/Space-Efficient-Quantum-Algorithm-for-Elliptic-Curve-Discrete-Logarithms-with-Resource-Estimation}{https://github.com/ZeroWang030221/Space-Efficient-Quantum-Algorithm-for-Elliptic-Curve-Discrete-Logarithms-with-Resource-Estimation}}
\else
\footnote{\label{footnote:repo}GitHub repository: \href{https://anonymous.4open.science/r/Space-Efficient-Quantum-Algorithm-for-Elliptic-Curve-Discrete-Logarithms-with-Resource-Estimation-D264}{https://anonymous.4open.science/r/Space-Efficient-Quantum-Algorithm-for-Elliptic-Curve-Discrete-Logarithms-with-Resource-Estimation-D264}}
\fi
For each problem size $n$, we generated the full circuit instance up to the theoretical upper bound on the number of steps derived in this work, transpiled it into the \texttt{CCX}, \texttt{CX}, and \texttt{X} basis, and recorded the resulting gate counts. We report results for $n \in \{64,128,160,192,224,256,384,512\}$. The numerical values are listed in Table~\ref{tab:gate-counts}. 

\begin{table}[htbp]
\centering
\begin{tabular}{c|cc|cc}
\hline
\multirow{2}{*}{$n$}
& \multicolumn{2}{c|}{Forward modular inversion}
& \multicolumn{2}{c}{Point addition} \\
\cline{2-5}
& Toffoli ($\times 10^6$)
& CNOT ($\times 10^6$)
& Toffoli ($\times 10^6$)
& CNOT ($\times 10^6$) \\
\hline
64  & 1.20  & 1.85  & 5.37   & 8.72   \\
128 & 4.24  & 6.60  & 19.20  & 31.72  \\
160 & 6.33  & 9.95  & 28.80  & 48.13  \\
192 & 8.87  & 14.01 & 40.51  & 68.03  \\
224 & 11.86 & 18.77 & 54.26  & 92.21  \\
256 & 15.35 & 24.30 & 70.29  & 118.50 \\
384 & 33.11 & 52.91 & 152.50 & 260.44 \\
512 & 58.15 & 92.94 & 268.26 & 456.81 \\
\hline
\end{tabular}
\caption{Gate counts for modular inversion and point addition circuit over $n$-bit prime fields.}
\label{tab:gate-counts}
\end{table}
\subsection{Resource estimates for Shor's ECDLP circuit}
\label{subsec:component_costs}

For the full ECDLP attack, we do not repeat the controlled point-addition circuit
bit-by-bit for all $2n$ control bits. Instead, we follow the signed-window technique
of~\cite[Section~5.1]{haner2020improved}. Namely, we partition the
$2n$ control bits arising in the double-scalar multiplication into windows of size $w$,
and implement each window by one signed windowed point addition.

In the signed-window construction, one of the $w$ bits is used as a sign bit, while the
remaining $w-1$ bits address a precomputed table of point multiples. As shown schematically
in Fig.~10 of~\cite{haner2020improved}, one such windowed point addition uses the arithmetic
core of an affine Weierstrass point addition. The coordinates of the selected precomputed point
are obtained through three table queries, and its sign is handled by one modular negation. In our space-optimized circuit,
these three queries are implemented by five sequential table look-ups.\footnote{The circuit
first queries $(x_2,y_2)$, then $x_2$, and finally $(x_2,y_2)$. If two free $n$-qubit registers were
available, the two coordinates in each $(x_2,y_2)$ query could be loaded together. Since only
one free $n$-qubit register is available in our space-optimized circuit, each $(x_2,y_2)$ query is split into two sequential
look-ups, giving $2+1+2=5$ look-ups in total.} Since the dominant $n^2$ term in our
Toffoli accounting comes from modular divisions, multiplications and squaring, replacing the
classically known addend by an addend obtained from the look-up table does not change the
asymptotic scaling.

To obtain a clean asymptotic bound, we choose window size $w = 2\log_2 n$. The five sequential table look-ups contribute $5\cdot 2^w$
Toffoli gates to each windowed point addition. The full double-scalar multiplication uses
approximately $2n/w$ windows, so its dominant Toffoli count becomes
\[
    \frac{2n}{w}\left(1051n^2 + 5\cdot 2^w + O(n\log_2n)\right)
    = 1056\frac{n^3}{\log_2n} + O(n^2).
\]

For $n=256$, following the analysis in \cite{babbush2026ECC}, the Toffoli count of the complete Shor ECDLP circuit is estimated as
\begin{equation}
28 \cdot \left(5 \cdot 2^{16} + Q_A\right).
\end{equation}
Here, $Q_A$ denotes the Toffoli count of a single controlled affine point-addition circuit. For a window size of $w=16$, the number of signed-window point additions is $\frac{2n}{w}-4=28$. The five sequential table look-ups described above incur an additional $5\cdot 2^w$ Toffoli gates per window, which becomes $5\cdot 2^{16}$ when $w=16$. Substituting our concrete value of $Q_A(256)$ shows that our algorithm requires $2^{30.88}$ Toffoli gates for the curve secp256k1.

Finally, we note that compression techniques and multi-run tradeoffs \cite{ekeraa2019revisiting} could potentially yield further improvements to the constant factors in our overall resource estimates.

\iffull
\section*{Acknowledgments}
We thank Pierre-Luc Dallaire-Demers for pointing out an error in an earlier version of the EEA step bound proof, which we have replaced with a matrix–continuant proof based on a finite polyhedral antinorm
certificate; see Appendix~\ref{app:step_num}. We thank Andreas Renz for pointing out a numerical inconsistency in an earlier version regarding the concrete Toffoli estimate for secp256k1. This work was supported by the Innovation Program for Quantum Science and Technology (Grant Number 2024ZD0300500). ZY, ZW, YS and TL are supported by the National Natural Science Foundation of China (Grant Numbers 62372006 and 92365117).
\fi

\ifllncs
    \bibliographystyle{splncs04}
\fi

\newcommand{\etalchar}[1]{$^{#1}$}
\newcommand{\arxiv}[1]{arXiv:\href{https://arxiv.org/abs/#1}{\ttfamily{#1}}\?}\newcommand{\arXiv}[1]{arXiv:\href{https://arxiv.org/abs/#1}{\ttfamily{#1}}\?}\def\?#1{\if.#1{}\else#1\fi}

\appendix
\section{Proof Details}

\subsection{Bounds on the total number of steps}
\label{app:step_num}

Suppose that after $m$ iterations of the Extended Euclidean Algorithm, we obtain $r_{m} = 1$ and $r_{m+1} = 0$, with intermediate quotients $q_1, q_2, \ldots, q_m$, where $q_i = \lfloor r_{i-1} / r_i \rfloor$ for $i = 1, 2, \ldots, m$. Then the total number of steps required in our four-phase algorithm can be expressed as
\begin{equation}
    N = 4\sum_{i=1}^{m} \left(\lfloor \log_2 q_i \rfloor + 1\right).
\label{eq:step}
\end{equation}
In this appendix, we prove that $4n \le N \le 4\lceil cn \rceil$, where $c=3/\log_2(2+\sqrt3)$.

\paragraph{The lower bound.}
The lower bound of $N$ can be derived directly. Since $r_{i-1} = q_i r_i + r_{i+1} < (q_i + 1)r_i$, it follows that
\[
    N \ge 4\sum_{i=1}^{m}\log_2(q_i + 1) > 4\sum_{i=1}^{m}\log_2\frac{r_{i-1}}{r_i} = 4\log_2 p \ge 4(n - 1).
\]
Because $N$ must be a multiple of $4$, we conclude that $N \ge 4n$.

\paragraph{Asymptotic sharpness of the upper bound.} Before we moving on to the proof of the upper bound, we first express why the constant $c=3/\log_2(2 + \sqrt 3)$ in the upper bound $N \le 4\lceil cn \rceil$ is asymptotically optimal. We consider a different roadmap from deriving the lower bound: instead of computing the upper bound of $N$, we fix the total number of steps $N$ and then compute the minimum possible value of $p$. 

For integer \(q \ge 1\), define the EEA transformer matrix $M(q) :=
\begin{pmatrix}
    q & 1 \\
    1 & 0
\end{pmatrix}.$
The EEA iteration can therefore be written as $(r_{i-1}, r_i)^T = M(q_i)(r_{i}, r_{i+1})^T.$ Iterating this identity and using \((r_m, r_{m+1})=(1, 0)\) gives
\begin{equation}
    \begin{pmatrix}
        p \\ x
    \end{pmatrix} = M(q_1) \cdots M(q_m)
    \begin{pmatrix}
        1 \\ 0
    \end{pmatrix} = M(q_1) \cdots M(q_{m-1})
    \begin{pmatrix}
        q_{m} \\ 1
    \end{pmatrix}.
    \label{eq:continuant}
\end{equation}
In particular, 
\begin{equation}
    p = K(q_1, \ldots, q_m) := e_1^T M(q_1) \ldots M(q_m) e_1.
    \label{eq:p_continuant}
\end{equation}

Given equation~\eqref{eq:p_continuant}, writing $d(q) := \lfloor\log_2 q \rfloor + 1$, our problem is to find a slowest growth pattern for products of the matrices $M(q)$ with a fixed weight sum $N/4 = \sum_{i = 1}^m d(q_i)$, where a factor $M(q)$ has weight $d(q)$. 
Experiments for small total quotient bit-lengths suggest that, with
\(\sum d(q)\) fixed, the quotient sequence $(1, 2)$ has the minimum averaged growing rate, i.e. the nearly alternating quotient sequence \((2, 1, 2, 1, \ldots, 2, 1, 2, 2)\) asymptotically minimizes the resulting value of \(p\).

Since \(p\) is the upper-left and largest entry of \(P := M(q_1) \cdots M(q_m)\), we have $p\le \rho(P)\le \lVert P\rVert_\infty\le 2p$, here $\rho(P)$ denotes the spectral radius, i.e. the maximum absolute value of eigenvalues, and $\lVert P\rVert_\infty$ denotes the maximum row sum. So \(p\) has the same order of magnitude with the spectral radius of the matrix product $P$. Therefore, the repeated block $(1, 2)$ in the quotient sequence suggests the asymptotic growth rate: it has total weight $3$, while $M(1) M(2) = \begin{pmatrix}
    3 & 1 \\
    2 & 1
\end{pmatrix}$ has spectral radius $2 + \sqrt 3$. This motivates the normalization $\rho := 2 + \sqrt3$ and $\lambda := \rho^{1/3}$, so that one unit of weight corresponds to a factor $\lambda$. 

We then give the formal argument on the asymptotic sharpness of the constant $c$. For $k \ge 0$, consider a quotient sequence 
\[
    Q_k = \bigl(2, \underbrace{1, 2, 1, 2, \ldots, 1, 2}_{k\text{ copies of }(1, 2)}, 2\bigr),
\]
its corresponding step number is $N_k = 12k + 16$. Let $M := M(1)M(2) = \begin{pmatrix}
    3 & 1 \\
    2 & 1
\end{pmatrix}$ be a two-iteration EEA transformation matrix with maximal eigenvalue $\rho = 2 + \sqrt3$. From basic linear algebra argument, the value of $p = e_1^T M(2) M^k M(2) e_1 = (2, 1) M^k (2, 1)^T$ will asymptotically be bounded by $\Theta(\rho^k)$. Consequently, $\log_2 p_k = k\log_2 \rho + O(1)$, and
\[
    \lim_{k\to\infty} \frac{N_k}{\log_2p_k}
    = \frac{12}{\log_2\rho} = 4c.
\]
Therefore, the coefficient $c$ in the upper bound for $N \le 4\lceil cn\rceil$ cannot be decreased for general admissible quotient sequences.

In the rest of this section, we will discuss the roadmap for proving this sharp upper bound, and then give the formal proof.

\paragraph{Roadmap for the proof of the upper bound.} We consider the invariant cone
\[
    \mathcal C := \left\{
    \begin{pmatrix}
        a \\ b
    \end{pmatrix}:
    a \ge b \ge 0 \right\},
\]
such that every EEA transformer matrix \(M(q)\) preserves \(\mathcal C\), since $M(q)(a, b)^T = (qa + b, a)^T \in\mathcal C$ for all $q \ge 1, a \ge b \ge 0$.

To derive the minimum possible value of $p$, we use the toolkit of a finite polyhedral antinorm certificate on the invariant cone $\mathcal{C}$, following the extremal-antinorm method for lower spectral radius from~\cite{guglielmi2013exact}. 
An antinorm on an invariant cone is a non-zero, non-negative, concave, and positively homogeneous function. It is positive everywhere inside the cone, except at the origin. In particular, the minimum of finitely many positive linear functions, $F(v) = \min_{u\in U} uv$ for some (positive-coordinated) vector set $U$, is a positive polyhedral antinorm.

Set $\rho = 2 + \sqrt{3}, \lambda = \rho^{1/3} = (2 + \sqrt{3})^{1/3}$ as discussed in the asymptotic sharpness of the upper bound. We want to find an antinorm $F = \min_{u\in U} uv$ with the property that the inequality $F(M(q)v) \ge \lambda^{d(q)}F(v)$ holds for every $q \ge 1$ and every $v$ in the invariant cone $\mathcal{C}$.
Iterating this inequality converts the total weight $N/4$ directly into the lower growth factor $\lambda^{N/4}$, which lower-bounds the possible value of $p$. 

For $d \ge 0$, let $B_d := \lambda^{-(d+1)} M(2^d)$ be the normalized transformation associated with the smallest quotient $q = 2^{d-1}$ of bit-length $(d+1)$. 
To achieve the inequality $F(M(q)v) \ge \lambda^{d(q)}F(v)$ for $d \le 4$, we need to construct a finite set \(U = \{u_0, \ldots, u_r\}\) such that 
\begin{equation}
    u_jB_d v\ge \min_{u\in U} uv = F(v),
    \text{  for all } u_j\in U, d\in \{0, 1, 2, 3\}\text{ and } v\in\mathcal C.
    \label{eq:antinorm}
\end{equation}
Once the condition~\eqref{eq:antinorm} is satisfied, we have $F(B_dv) \ge F(v)\text{, therefore }F(M(2^{d})v)\ge \lambda^{d+1}F(v)$. 

The first two vectors of $U$ encode the two-step return generated by the repeated block $(1, 2)$. Namely, we choose $u_0$ to be the left Perron eigenvector of $M = M(1)M(2)$. 
The $(1, 2)$ cycle determines $u_1 = \lambda^{-1}u_0M(1)$ and gives $u_0 = \lambda^{-2}u_1M(2)$. Thus, \(u_0\) and \(u_1\) are precisely the two supporting directions visited during one period of the optimal quotient sequence. 

However, although the two vectors explain the candidate periodic pattern, they do not yet give a certificate for arbitrary quotients. For a current row set $U$, we say that a transformed row $w = uB_d$ is already controlled by the antinorm $F_U(v) := \min_{u'\in U} u'v$ if there is
some $h\in\operatorname{conv}(U)$ such that $wv\ge hv$ for every $v\in\mathcal C$. Indeed, this implies $wv\ge hv\ge\min_{u'\in U}u'v = F_U(v)$. 

Starting from $U = \{u_0, u_1\}$, the transformation $B_2$, corresponding to the quotient $q = 4$, produces $u_2 := u_1B_2 = \lambda^{-3}u_1M(4)$. This row is not controlled by the two-row antinorm as $u_2(1, 1)^T
< \min_{0\le j\le 1}u_j(1,1)^T$, thus must be added to $U$. Applying $B_0$ to the new row similarly gives $u_3 := u_2B_0 = \lambda^{-1}u_2M(1)$, and $u_3(1,0)^T < \min_{0\le j\le 2}u_j(1,0)^T$, so a fourth row is required for $U$. For the resulting set $U = \{u_0, u_1, u_2, u_3\}$, we can verify that the row $u_jB_s$ dominates on $\mathcal C$ a row in $\operatorname{conv}(U)$, for every $j, s\in\{0,1,2,3\}$. Hence all normalized transformations $B_s$ associated with bit-lengths at most four are controlled, and the finite row-generation procedure stops.

For the case $d \ge 5$, the inequality $F(M(2^{d})v)\ge \lambda^{d+1}F(v)$ is not handled by this finite closure table, but instead covered by the coarser estimate $F(M(q)v) \ge (q/\lambda) F(v) \ge \lambda^{d+1}F(v)$, because $q\ge 2^{d-1}\ge\lambda^{d+1}$ holds for all $d\ge 5$.

\paragraph{Formal proof of the upper bound.} Set $\rho = 2 + \sqrt{3}, \lambda = \rho^{1/3} = (2 + \sqrt{3})^{1/3}$ and $\mathcal{C} = \{(a, b)^T: a\ge b\ge 0\}$. Define the following four positive row vectors:
\[
    u_0:=(1+\sqrt3,1),
    \qquad
    u_1:=\lambda^{-1}u_0M(1),
    \qquad
    u_2:=\lambda^{-3}u_1M(4),
    \qquad
    u_3:=\lambda^{-1}u_2M(1).
\]
Equivalently, using \(\lambda^3-1 = 1+\sqrt3\),
\[
    u_0=(\lambda^3-1,1),
    \quad
    u_1=\left(\lambda^2,\,
    \frac{\lambda^3-1}{\lambda}\right),
    \quad
    u_2=\left(
    \frac5\lambda-\frac1{\lambda^4},\,
    \frac1\lambda\right),
    \quad
    u_3=\left(
    \frac{6\lambda^3-1}{\lambda^5},\,
    \frac{5\lambda^3-1}{\lambda^5}\right).
\]
For \(v\in\mathcal C\), define
\[
    F(v):=\min_{0\le j\le3}u_jv.
\]
Since the coordinates of the four vectors are positive, it is easy to prove that the function \(F\) is positive on \(\mathcal C\setminus\{0\}\), homogeneous, concave, and coordinatewise nondecreasing. For the function $F$, we have the following lemma.

\begin{lemma}
    \label{lem:weighted-matrix-growth}
    Let \(q\ge 1\) be a positive integer, $d = \lfloor\log_2 q\rfloor + 1$. For any vector \(v \in\mathcal C\), we have
    \begin{equation}
        F(M(q)v)\ge \lambda^d F(v).
    \label{eq:lemma A.1.1}
    \end{equation}
    Moreover, if \(d\ge 2\), then
    \begin{equation}
        F\left((q, 1)^T\right) \ge \lambda^{d-2} F\left((2, 1)^T\right).
    \label{eq:lemma A.1.2}
    \end{equation}
\end{lemma}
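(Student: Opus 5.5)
The first inequality~\eqref{eq:lemma A.1.1} reduces to finitely many linear inequalities checked on the two extreme rays of $\mathcal C$. The starting point is monotonicity. For $v=(a,b)^T\in\mathcal C$ we have $M(q)v=(qa+b,a)^T$, which is coordinatewise nondecreasing in $q$. Since every $u_j$ has positive entries, $F$ is coordinatewise nondecreasing. Hence for fixed $d$ it suffices to prove~\eqref{eq:lemma A.1.1} for the smallest quotient $q=2^{d-1}$ of bit-length $d$. Note also that $M(q)$ maps $\mathcal C$ into $\mathcal C$, so everything stays inside the cone.

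For the small bit-lengths $d\in\{1,2,3,4\}$ I will carry out the certificate described in the roadmap. For each $j\in\{0,1,2,3\}$ and each such $d$, set $w_{j,d}:=\lambda^{-d}u_jM(2^{d-1})$. I will exhibit explicit convex weights $h_{j,d}=\sum_i\alpha_{i}u_i\in\operatorname{conv}(U)$ such that $(w_{j,d}-h_{j,d})v\ge0$ on $\mathcal C$.

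Because this difference is a linear functional and $\mathcal C$ is generated by $(1,0)^T$ and $(1,1)^T$, the domination reduces to two scalar inequalities per pair $(j,d)$. That gives $32$ inequalities in total, all in the algebraic number $\lambda$ with $\lambda^3=2+\sqrt3$. Several of them hold with equality by construction, namely $w_{0,1}=u_1$, $w_{1,3}=u_2$, $w_{2,1}=u_3$, and $u_1M(2)=\lambda^2 u_0$. Those cases need $h$ equal to a single row.

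Given all the dominations, the conclusion is immediate:
\[
u_jM(q)v\ \ge\ u_jM(2^{d-1})v\ \ge\ \lambda^d h_{j,d}v\ \ge\ \lambda^d F(v),
\]
and taking the minimum over $j$ gives $F(M(q)v)\ge\lambda^dF(v)$.

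Choosing the weights $\alpha_i$ and verifying the $32$ inequalities rigorously, for example by reducing modulo $\lambda^3=2+\sqrt3$ or by interval arithmetic, is the main obstacle. My first attempt will use $h$ equal to a single $u_i$, namely the row minimizing $u_i(1,1)^T$ or $u_i(1,0)^T$. I will only pass to a two-row convex combination when neither single row works on both rays.

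For large bit-lengths $d\ge5$ I will use a crude bound. Since $a\ge b\ge 0$, we have $u_jM(q)v\ge q\,a\,u_{j,1}$, where $u_{j,1}$ denotes the first coordinate of $u_j$. In the other direction, $F(v)\le u_0v\le(\lambda^3-1+1)a=\lambda^3a$. Writing $\mu:=\min_j u_{j,1}$, which is about $2.38$ (attained at $u_3$), this gives
\[
F(M(q)v)\ \ge\ \frac{\mu}{\lambda^3}\,2^{d-1}F(v).
\]
It therefore suffices that $\mu\,2^{d-1}\ge\lambda^{d+3}$. I will check this at $d=5$, where it reads $38.1\gtrsim 33.4$, and then propagate to all larger $d$ because $2>\lambda$. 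If the margin at $d=5$ turns out too thin, I will instead add $d=5$ to the finite certificate table.

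For the second claim~\eqref{eq:lemma A.1.2}, the vector $(q,1)^T$ is again coordinatewise nondecreasing in $q$, so I may take $q=2^{d-1}$. I will compute $F((2,1)^T)=\min_j u_j(2,1)^T$ explicitly; this is the case $d=2$, which holds with equality. Then I will check $F((2^{d-1},1)^T)\ge\lambda^{d-2}F((2,1)^T)$ directly for $d=3,4,5$, which amounts to finitely many scalar comparisons.

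For $d\ge6$ I will use $F((q,1)^T)\ge\mu\,2^{d-1}$ together with $F((2,1)^T)\le u_0(2,1)^T=2\lambda^3-1$. The resulting inequality $\mu\,2^{d-1}\ge\lambda^{d-2}(2\lambda^3-1)$ holds at $d=6$ with a large margin and persists for all larger $d$ since $2>\lambda$.
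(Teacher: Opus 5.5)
Your proposal is correct and follows the paper's proof essentially step for step. The paper also reduces to $q=2^{d-1}$ by monotonicity and certifies $d\le 4$ by dominating each row $\lambda^{-d}u_jM(2^{d-1})$ on the rays $(1,0)^T,(1,1)^T$ by an element of $\operatorname{conv}(U)$; this is a single row everywhere except at $j=0$, $d=2$, where, as you anticipated, no single row works and the paper uses $\tfrac45u_0+\tfrac15u_1$. It handles the second inequality with the same finite checks plus a first-coordinate bound. The only difference is cosmetic, in the case $d\ge5$ of the first inequality: the paper uses $F((1,0)^T)=u_3(1,0)^T$ and $F((1,1)^T)=\lambda F((1,0)^T)$ to get the slightly sharper $F(M(q)v)\ge (q/\lambda)F(v)$. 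Your cruder constant $\mu/\lambda^3$ also suffices, since at $d=5$ the comparison is about $38.1$ versus $\lambda^8\approx 33.5$.
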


The overall upper bound of EEA step number follows directly from Lemma~\ref{lem:weighted-matrix-growth}. More precisely, we have that
\begin{alignat*}{2}
    \frac{p}{2} \cdot F \left((2, 1)^T\right) 
    & = F \left((p, p / 2)^T\right) \ge F \left((p, x)^T\right) && \qquad (\text{By Algorithm~\ref{alg:full_EEA}, }x < p/2) \\
    & = F \left(M(q_1)\cdots M(q_{m-1})
    \begin{pmatrix}
        q_m \\ 1
    \end{pmatrix}
    \right) && \qquad (\text{By Equation~\ref{eq:continuant}}) \\
    & \ge \lambda^{N / 4 - d_m} F\left((q_m, 1)^T \right) && \qquad (\text{By Inequality~\ref{eq:lemma A.1.1}}) \\
    & \ge \lambda^{N / 4 - 2} F\left((2, 1)^T \right). && \qquad (\text{By Inequality~\ref{eq:lemma A.1.2}, where }r_{m-1} \ge 2 \text{ implies } d_m\ge 2) \\
\end{alignat*}
This implies that
\[
    N/4 \le 2 + \frac{\log_2 p - 1}{\log_2\lambda} 
    < 2 + \frac{n - 1}{\log_2\lambda} < 1 + \frac{n}{\log_2\lambda}.
\]
As $N/4$ is an integer, we can conclude that $N \le 4\lceil{cn}\rceil$, where $c = 1/\log_2\lambda = 3/\log_2(2 + \sqrt{3})$. 

What left is to prove Lemma~\ref{lem:weighted-matrix-growth}.

\begin{proof}[Proof of Lemma~\ref{lem:weighted-matrix-growth}]
For row vectors \(a,b\), write $a\succeq_{\mathcal C}b$ if
$(a-b)(1,0)^T
\ge0$ and $(a-b)(1,1)^T
\ge0$. Since each element in $\mathcal C$ can be represented as a linear combination of $(1,0)^T$ and $(1,1)^T$ with nonnegative coefficients, the relation \(a\succeq_{\mathcal C}b\) implies $av\ge bv$ for every $v\in\mathcal C$.

We first prove \eqref{eq:lemma A.1.1}. Define the matrix \(G = (G_{j,s})\) for \(j=0,\dots,3\) and \(s=0,\dots,3\) by
\[
G_{j,s} := \lambda^{-s-1}u_jM(2^{s}).
\] and the matrix \(H = (H_{j,s})\) with entries
\[
H = 
\begin{pmatrix}
u_1 & \frac45 u_0+\frac15 u_1 & u_0 & u_0 \\[2mm]
u_1 & u_0 & u_2 & u_0 \\
u_3 & u_0 & u_0 & u_0 \\
u_0 & u_0 & u_2 & u_0
\end{pmatrix}.
\]
It is trivial to verify that
\[
G_{j,s} \succeq_{\mathcal C} H_{j,s}\quad \text{and} \quad H_{j,s}v\ge F(v), \quad\text{for all } j,s\in\left\{0,...,3\right\}.
\]
Consequently,
\[
F(M(2^{d})v)
=
\lambda^{d+1}\min_{0\le j\le3}G_{j,d}v
\ge \lambda^{d+1}F(v),\quad \text{for } 0\le d\le3.
\]

Since \(F\) is coordinatewise nondecreasing, $F(M(q)v)\ge F(M(2^{d-1})v) \ge \lambda^dF(v).$ Thus we prove \eqref{eq:lemma A.1.1} for $1\le d \le 4$.

It remains to consider \(d\ge5\).  Write $v=(a,b)^T, a\ge b\ge0.$ A direct check of the four defining rows gives
$F((1,0)^T)=u_3(1,0)^T,F((1,1)^T)=u_2(1,1)^T=\lambda F((1,0)^T).$
By monotonicity and homogeneity,
\begin{equation}
F(M(q)v)
=
F
\begin{pmatrix}
qa+b\\a
\end{pmatrix}
\ge
F
\begin{pmatrix}
qa\\0
\end{pmatrix}
=
qaF
\begin{pmatrix}
1\\0
\end{pmatrix}
=\frac {qa}\lambda F
\begin{pmatrix}
1\\1
\end{pmatrix}
=\frac q\lambda F
\begin{pmatrix}
a\\a
\end{pmatrix}
\ge \frac q\lambda F(v).
\label{eq:lemma A.1.d_ge_5}
\end{equation}

For \(d=5\), $2^4=16 >\lambda^6=(2+\sqrt3)^2=7+4\sqrt3,$ and the inequality propagates to all larger \(d\), since
\(2>\lambda\).  Hence $q/\lambda \ge 2^{d-1}/\lambda \ge \lambda^d$. Together with \eqref{eq:lemma A.1.d_ge_5} this proves \eqref{eq:lemma A.1.1} for every \(d\ge5\).

We next prove \eqref{eq:lemma A.1.2}. Since \(q\ge2^{d-1}\),
it suffices by monotonicity to consider \(q=2^{d-1}\):
\begin{itemize}
    \item For \(d=2\), the claim is immediate.
    \item For \(d=3\), direct
substitution gives $F((4,1)^T)\ge\lambda F((2,1)^T)$. 
    \item For \(d\ge4\), the same four-row calculation gives $8F(1,0)^T\ge\lambda^2F(2,1)^T.$ Therefore,
    \[
    F
\begin{pmatrix}
2^{d-1}\\1
\end{pmatrix}\ge
2^{d-1}
F
\begin{pmatrix}
1\\0
\end{pmatrix}=
2^{d-4}
\left(
8F
\begin{pmatrix}
1\\0
\end{pmatrix}\right)\ge
2^{d-4}\lambda^2
F
\begin{pmatrix}
2\\1
\end{pmatrix}\ge
\lambda^{d-2}
F
\begin{pmatrix}
2\\1
\end{pmatrix},
    \]
where the last inequality uses \(2>\lambda\).
\end{itemize}
 
This completes the proof.
\end{proof}
\subsection{Step-dependent active windows}
\label{app:actwindow}

Fix an input $x$, let $(q_1, \ldots, q_{k-1})$ denote the quotient sequence produced by EEA, and define $b_i := \floor{\log_2 q_i}$ for each iteration $i$.
Under the four-phase schedule, each EEA iteration $i$ is expanded into $4(b_i + 1)$ steps. The total number of steps is $N = 4\sum_{i=1}^{k-1} (b_i + 1) \le N_{\max} := 4\ceil{cn}$, where $\lambda :=(2+\sqrt{3})^{1/3}$ and $c:=1/\log_2\lambda
=3/\log_2(2+\sqrt{3})$ (see Appendix \ref{app:step_num}).

For a global step index $T \in \{1, 2, \ldots, N_{\max}\}$, let $j = j(x, T)$ denote the index of the EEA iteration containing step $T$, and let $u = u(x, T)$ denote the position of $T$ within that iteration. More precisely, the pair $(j(x,T), u(x,T))$ is uniquely determined by the conditions
\[
    \sum_{i=1}^{j-1} 4(b_i + 1) < T \le \sum_{i=1}^{j} 4(b_i + 1),
    \quad
    u = T - \sum_{i=1}^{j-1} 4(b_i + 1).
\]
For convenience, write
\[
    N_j:=\sum_{i=1}^{j-1}4(b_i+1)=T-u.
\]
The bit-length of the current quotient $\ell_q(x, T)$ depends deterministically on indices $(j, u)$:
\begin{equation}\label{eqn:phase_decompose}
\begin{split}
    \ell_q(x, T) & = \begin{cases}
        0 & u(x, T)\in (0, b_j + 1] \cup (3(b_j + 1), 4(b_j + 1)],\\
        u(x, T) - (b_j + 1) & u(x, T)\in (b_j + 1, 2(b_j + 1)],\\
        3(b_j + 1) - u & u(x, T)\in (2(b_j + 1), 3(b_j + 1)].
    \end{cases}
\end{split}
\end{equation}

We next state a lemma that characterizes the minimal possible value of $t_j$, conditioned on the total number of steps $N_j$ taken before reaching the $j$-th iteration of EEA. Let $F$ be the function constructed in Appendix~\ref{app:step_num}, and define
\[
\eta
:=\frac{F\left((2,1)^T\right)}
{\lambda^2F\left((1,1)^T\right)}
=\frac{4\sqrt{3}-3}{13}\lambda^2,
\qquad
\delta:=-\log_\lambda\eta
=0.726208\ldots .
\]
\begin{lemma}\label{log2t_j>phi*N/4}
    Suppose that $N_j$ is fixed. Then, over all possible EEA iteration indices $j$ and all admissible quotient sequences $(q_1, \ldots, q_{j-1})$, we have $t_j\ge \eta\lambda^{N_j/4}
    =\lambda^{N_j/4-\delta}.$ Consequently,
    \[
    \floor{\log_2t_j}+1>
    \frac{N_j/4-\delta}{c}
    =\frac{N_j-4\delta}{4c},
    \qquad
    \floor{\log_2t_j}+1\ge
    \ceil{\frac{N_j-4\delta}{4c}}.
    \]
\end{lemma}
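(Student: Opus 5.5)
The plan is to reuse the antinorm $F$ from Lemma~\ref{lem:weighted-matrix-growth}, applied to the $t$-recurrence instead of the $r$-recurrence. The EEA coefficients satisfy $t_0=0$, $t_1=1$ and $t_{i+1}=q_it_i+t_{i-1}$. Hence
\[
(t_j,t_{j-1})^T=M(q_{j-1})(t_{j-1},t_{j-2})^T=\cdots=M(q_{j-1})\cdots M(q_1)(1,0)^T .
\]
Every vector in this chain lies in the invariant cone $\mathcal C$, because $t_i\ge t_{i-1}\ge 0$. So inequality~\eqref{eq:lemma A.1.1} can be iterated $j-1$ times. The total weight is $\sum_{i<j}d(q_i)=\sum_{i<j}(b_i+1)=N_j/4$, which gives
\[
F\left((t_j,t_{j-1})^T\right)\ge \lambda^{N_j/4}F\left((1,0)^T\right).
\]

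Next I would convert this into a bound on $t_j$ alone. By monotonicity and homogeneity of $F$, together with $t_{j-1}\le t_j$, we get $F((t_j,t_{j-1})^T)\le F((t_j,t_j)^T)=t_jF((1,1)^T)$. Combining this with the identity $F((1,1)^T)=\lambda F((1,0)^T)$ from the proof of Lemma~\ref{lem:weighted-matrix-growth} yields $t_j\ge\lambda^{N_j/4-1}$.

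This is slightly weaker than the claimed $\eta\lambda^{N_j/4}$ whenever $\eta>\lambda^{-1}$, so the first iteration needs sharper treatment; I expect this to be the main obstacle. Numerically $\eta\approx0.32\,\lambda^2$, which is about $0.75$, while $\lambda^{-1}\approx 0.64$. To close the gap I would peel off the first factor explicitly. We have $(t_2,t_1)^T=M(q_1)(1,0)^T=(q_1,1)^T$. Since the input is normalized to $x<p/2$, we have $q_1\ge2$, so $d(q_1)\ge2$, and inequality~\eqref{eq:lemma A.1.2} gives $F((q_1,1)^T)\ge\lambda^{d(q_1)-2}F((2,1)^T)$. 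Iterating~\eqref{eq:lemma A.1.1} over the remaining factors then gives
\[
F\left((t_j,t_{j-1})^T\right)\ge\lambda^{N_j/4-2}F\left((2,1)^T\right),
\]
and dividing by $F((1,1)^T)$ gives $t_j\ge\eta\lambda^{N_j/4}$ as defined. The degenerate case $j=1$ (so $N_j=0$ and $t_1=1$) must be checked separately: it needs $\eta\le1$, which holds numerically. The explicit value $\frac{4\sqrt3-3}{13}\lambda^2$ comes from evaluating the four rows $u_k$ at $(2,1)^T$ and $(1,1)^T$ and identifying which row attains each minimum. This is a direct computation using $\lambda^3=2+\sqrt3$.

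Finally, I would take $\log_2$. Writing $\log_2 t_j\ge (N_j/4-\delta)\log_2\lambda=(N_j/4-\delta)/c$ and using $\lfloor y\rfloor+1>y$ gives the strict inequality. Integrality of the left side then gives the ceiling form.
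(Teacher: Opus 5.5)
Your proposal is correct and matches the paper's proof. It writes $(t_j,t_{j-1})^T=M(q_{j-1})\cdots M(q_2)(q_1,1)^T$ and uses $q_1\ge 2$ (from $x<p/2$) with both parts of Lemma~\ref{lem:weighted-matrix-growth} to get $F((t_j,t_{j-1})^T)\ge\lambda^{N_j/4-2}F((2,1)^T)$. It then bounds the left side by $t_jF((1,1)^T)$ and treats $j=1$ via $t_1=1\ge\eta$. One small numerical slip, which does not affect the argument: $(4\sqrt3-3)/13\approx 0.302$, so $\eta\approx 0.727$ rather than $0.75$; this still exceeds $\lambda^{-1}\approx 0.645$ and is below $1$.
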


\begin{proof}
    By Algorithm~\ref{alg:full_EEA}, the EEA input $x$ is chosen to be smaller than $p/2$, and hence $q_1\ge2$. Lemma~\ref{lem:weighted-matrix-growth} established in Appendix~\ref{app:step_num} gives $F(M(q)v)\ge\lambda^{\floor{\log_2q}+1}F(v)$ and $F\left((q, 1)^T\right) \ge \lambda^{d-2} F\left((2, 1)^T\right)$ whenever $q\ge2$ has bit-length $d$. Given the EEA iteration rule $t_{i+1}=t_{i-1}+q_it_i$, we can write $(t_{i+1},t_i)^T=M(q_i)(t_i,t_{i-1})^T.$ Therefore, for $j\ge2$,
\begin{align*}
F\left((t_j,t_{j-1})^T\right) = F\left(
M(q_{j-1})\cdots M(q_2)\binom{q_1}{1}
\right)\ge
\lambda^{N_j/4-(b_1+1)}
F\left((q_1,1)^T\right)
\ge
\lambda^{N_j/4-2}
F\left((2,1)^T\right).
\end{align*}
Since $0\le t_{j-1}\le t_j$ and $F$ is coordinatewise nondecreasing and positively homogeneous,
\[
F\left((t_j,t_{j-1})^T\right)
\le
F\left((t_j,t_{j})^T\right)
=t_jF\left((1,1)^T\right).
\]
Combining the two inequalities gives
$t_j\ge\eta\lambda^{N_j/4}$. The case $j=1$ follows from $t_1=1\ge\eta$. 
\end{proof}

We next collect three elementary estimates used repeatedly below. First,
$t_jq_j\le t_{j+1}<p<2^n$ implies
\begin{equation}\label{eqn:quotient-length-bound}
    \ell_t+b_j\le n,
    \qquad b_j+1\le n+1-\ell_t.
\end{equation}
Second, iterating
$t_{i+1}=t_{i-1}+q_it_i<(q_i+1)t_i\le2^{b_i+1}t_i$ gives
\begin{equation}\label{eqn:t-length-upper}
    t_j\le\prod_{i=1}^{j-1}2^{b_i+1}=2^{N_j/4},
    \qquad \ell_t\le N_j/4+1.
\end{equation}
Finally, iterating
$r_{i-1}=r_{i+1}+q_ir_i<(q_i+1)r_i\le2^{b_i+1}r_i$ gives
\[
    \frac{p}{r_j}<\prod_{i=1}^{j}2^{b_i+1}
    =2^{N_j/4+b_j+1}.
\]
Since $p\ge2^{n-1}$ and $N_j/4+b_j+1$ is an integer, the bit length of $r'=r_j$ satisfies
\begin{equation}\label{eqn:rprime-length-lower}
    \ell_{r'}\ge n-N_j/4-(b_j+1).
\end{equation}

We now prove the bounds stated in Section~\ref{subsec:actwindow}.

\paragraph{Location-controlled addition/subtraction on $r$'s.}
These operations are activated only in Phases~1 and~2, i.e., when
$0<u\le2(b_j+1)$. We use Equation~\eqref{eqn:quotient-length-bound} and consider two cases.
\begin{itemize}
    \item \textbf{Case 1.} Suppose that $u \le b_j + 1$. In this case, we have $\ell_q = 0$. By Lemma \ref{log2t_j>phi*N/4}, it follows that
    \[
    \ell_t>
    \frac{T-u-4\delta}{4c}
    \ge
    \frac{T-(n+1-\ell_t)-4\delta}{4c}.
    \]
    Hence $\ell_t+\ell_q=\ell_t>
    \dfrac{T-n-1-4\delta}{4c-1}.$

    \item \textbf{Case 2.} Suppose that $b_j + 1 < u \le 2(b_j + 1)$. Then $\ell_q = u - (b_j + 1)$, and we obtain that
    \[
        \ell_t+\ell_q>
        \frac{T-u-4\delta}{4c}
        +u-(b_j+1).
    \]
    The right-hand side is minimized when $u = b_j + 1$, in which case the expression reduces to that of Case 1.
\end{itemize}
Combining the two cases, we conclude that $\ell_t + \ell_q + 2 \ge \max\left\{\ceil{\dfrac{T-n-1-4\delta}{4c - 1}}, 1\right\} + 2$.

\paragraph{Location-controlled swaps.}
For the lower bound, a location-controlled swap is only activated in Phase 2 and 3, i.e. $b_j + 1 < u\le 3(b_j + 1)$, we then analyze the claimed lower bound by considering two cases.
\begin{itemize}
    \item \textbf{Case 1.} Suppose that $b_j + 1 < u \le 2(b_j + 1)$. In this case, we also have $\ell_t + \ell_q \ge \dfrac{T-n-1-4\delta}{4c - 1}$.

    \item \textbf{Case 2.} Suppose that $2(b_j + 1) < u \le 3(b_j + 1)$. In this case, we have $\ell_q = 3(b_j + 1) - u$. By Lemma \ref{log2t_j>phi*N/4}, it follows that
    \[
        \ell_t \ge \frac{T - u   -4\delta}{4c}
        \ge \frac{T-3(n+1-\ell_t)-4\delta}{4c},
    \]
    hence $\ell_t + \ell_q \ge \ell_t \ge \dfrac{T - 3(n + 1) - 4\delta}{4c - 3}$.
\end{itemize}
The Case~1 bound is no smaller than the Case~2 bound throughout the active range. Indeed, their difference equals $\dfrac{2(4cn+4c+4\delta-T)}
{(4c-1)(4c-3)}>0,$ where we used $T\le N_{\max}<4cn+4$. Combining the two cases, we conclude that $\ell_t + \ell_q + 1 \ge \max\left\{\ceil{\dfrac{T - 3(n + 1) - 4\delta}{4c - 3}}, 1\right\} + 1$.

For the upper bound, Equations~\eqref{eqn:t-length-upper} and~\eqref{eqn:phase_decompose} give
$\ell_t\le(T-u)/4+1$ and $\ell_q\le u/2$. Hence,
\[
    \ell_t + \ell_q + 1 \le \left\lfloor (T - u)/4 + u/2 + 2 \right\rfloor
    = \left\lfloor (T + u)/4 \right\rfloor + 2
    \le \left\lfloor T/2 \right\rfloor + 2.
\]
Finally, since the \texttt{Work1} register stores a value $r \ge 1$, we conclude that $\ell_t + \ell_q + 1 \le n + 2$.

\paragraph{Location-controlled addition/subtraction on $t$'s.}
In Phase~3, Equation~\eqref{eqn:t-length-upper} directly shows that the right boundary $\ell_t+1$ is at most $\floor{T/4}+2$.

In Phase~4, where $3(b_j+1)<u\le4(b_j+1)$, the right boundary is
$n+3-\ell_{r'}-\ell_s$. Immediately before the subsequent right shift of \texttt{Work2},
\[
    \ell_s=4(b_j+1)-u+1.
\]
Combining this identity with Equation~\eqref{eqn:rprime-length-lower} yields
\begin{align*}
    n+3-\ell_{r'}-\ell_s
    &\le N_j/4+u-3(b_j+1)+2 \\
    &\le N_j/4+u/4+2
     =T/4+2,
\end{align*}
where the second inequality follows from $u\le4(b_j+1)$. Since the right boundary is an integer, it is at most $\floor{T/4}+2$.

Finally, in Phase~3 the constraint $t\le p<2^n$ gives $\ell_t+1\le n+1$. In Phase~4, both $\ell_{r'}\ge1$ and $\ell_s\ge1$, so $n+3-\ell_{r'}-\ell_s\le n+1$. Thus, over all activated location-controlled additions/subtractions on $t$'s at step $T$, the common right boundary obeys
\[
    \max\left\{\ell_t+1,\,n+3-\ell_{r'}-\ell_s\right\}
    \le K_3(T):=\min\left\{\floor{T/4}+2,\,n+1\right\}.
\]

\paragraph{Length-update circuit.}
For the lower bound $k_4(T)$, the length-update circuit is activated only at the end of an EEA iteration, namely when $u = 4(b_j + 1)$. By Lemma \ref{log2t_j>phi*N/4}, we have
\[
    \ell_t \ge \frac{T - u - 4\delta}{4c} \ge \frac{T - 4(n + 1 - \ell_t) - 4\delta}{4c},
\]
which implies $\ell_t \ge \dfrac{T - 4(n + 1) - 4\delta}{4c - 4}$.

For the upper bound $K_4(T)$, Equation~\eqref{eqn:rprime-length-lower} and
$u=4(b_j+1)$ give $\ell_{r'}\ge n-T/4$. Therefore,
\[
    n+3-\ell_{r'}\le T/4+3.
\]
Moreover, whenever this length-update circuit is activated, the current
$r'=r_j$ is nonzero, so $\ell_{r'}\ge1$ and
$n+3-\ell_{r'}\le n+2$. Hence
\[
    n+3-\ell_{r'}
    \le K_4(T):=\min\left\{T/4+3,\,n+2\right\}.
\]

For the lower bound $k_5(T)$, Lemma \ref{log2t_j>phi*N/4} directly yields
$\ell_t^* \ge \dfrac{T-4\delta}{4c}$. For the upper bound $K_5(T)$, it is clear that $n + 4 - \ell_{r'}^* \le n + 3$.

\section{Improved Modular Multiplication Circuit}
\label{app:mul}

Our modular multiplier used in in-place division/multiplication implements
\[
    \mathrm{Mul}_{p}:
    \ket{x}\ket{y}\ket{0}
    \longmapsto
    \ket{x}\ket{y}\ket{xy \bmod p},
\]
based on the usual Horner double-and-add scan as the low-space multiplier of~\cite{roetteler2017quantum}.  The circuit contains \(n\) controlled modular additions and \(n-1\) modular doublings.
\begin{itemize}
    \item \textbf{Controlled modular additions.}
    The controlled modular addition
\[
    \ket{c}\ket{A}\ket{B}\ket{0}_{f}
    \longmapsto
    \ket{c}\ket{A}\ket{B+cA\bmod p}\ket{0}_{f},
\]
is implemented by first doing the controlled addition
\(B\leftarrow B+cA \pmod {2^n}\), while XORing the overflow carry into a
flag \(f\). We use a specialized controlled
quantum--quantum ripple-carry adder for this step, rather than controlling every gate of an ordinary adder from \cite{cuccaro2004new}.  
Let \(\gamma_i\) denote the carry into bit \(i\), with \(\gamma_0=0\), and $\gamma_{i+1}=\operatorname{maj}(A_i,B_i,\gamma_i)$.

During the forward sweep, the circuit computes the carry chain. At bit \(i\),
the addend and accumulator wires are temporarily transformed to $A_i\oplus \gamma_i$ and $B_i\oplus \gamma_i$, while the carry wire is advanced to \(\gamma_{i+1}\). In the reverse sweep, we
first undo the carry Toffoli, so the carry wire again contains \(\gamma_i\).  A
CNOT from this carry wire to the accumulator wire restores $B_i\oplus \gamma_i \longmapsto B_i$. At this point the addend wire still contains \(A_i\oplus \gamma_i\).  We then
apply the controlled sum-writing Toffoli
\[
    \operatorname{CCX}\bigl(c,\; A_i\oplus \gamma_i;\; B_i\bigr).
\]
Thus, if \(c=0\), the accumulator bit remains \(B_i\); if \(c=1\), it becomes $B_i\oplus A_i\oplus \gamma_i$, which is exactly the \(i\)-th sum bit.  Finally, a CNOT from the carry wire to
the addend wire restores $A_i\oplus \gamma_i \longmapsto A_i$.
Therefore, the controlled writing of the \(n\) sum bits contributes precisely
\(n\) additional Toffoli gates. In addition, after the forward sweep the final carry \(\gamma_n\) is copied into the overflow flag only when \(c=1\), using one more CCX gate,
\[
    f \longleftarrow f\oplus c\gamma_n.
\]
Hence this controlled variant of ripple-carry adder costs $3n+O(1)$ Toffoli gates.

Let $S=B+cA$, then after the first step the accumulator contains
\[
    \widetilde B=S\bmod 2^n,
\]
and the overflow bit
\[
    h=\lfloor S/2^n\rfloor
\]
has been XORed into the flag.  Since \(A,B_{\rm old}<p<2^n\), we have
\(S\le 2p-2<2^{n+1}\), so \(h\in\{0,1\}\).  The true reduction condition is
\(S\ge p\).  If \(h=0\), then \(\widetilde B=S\), so this condition is exactly
\([\widetilde B\ge p]\).  If \(h=1\), then reduction is certainly needed, while
\(\widetilde B=S-2^n<p\), so \([\widetilde B\ge p]=0\).  Hence the reduction
bit is
\[
    r=h\oplus[\widetilde B\ge p].
\]
We therefore update the flag by XORing in the fixed-modulus comparison
\([\widetilde B\ge p]\), which is evaluated by
the classical-quantum comparator from \cite{gidney2025classical}. Here we use the dirty-assisted adder which costs \(3n+O(1)\) Toffoli gates by borrowing the non-target
register \(A\) as dirty workspace. Controlled on this flag, we subtract the fixed modulus
\(p\) from \(B\), again using the same Gidney-style controlled
classical--quantum subtractor, with cost \(3n+O(1)\).
Finally, after the correction, the flag can be uncomputed by one Toffoli gate applying
\[
    f\leftarrow f\oplus c\,[B_{\rm final}<A].
\]
$[B_{\rm final}<A]$ can be  evaluated using the high-bit-only
Cuccaro quantum comparator, derived from the Cuccaro ripple-carry adder~\cite{cuccaro2004new},  costing \(2n+O(1)\) Toffoli gates. All clean and borrowed dirty workspace, including the flag \(f\), is restored
at the end of the block.
  
Therefore, one controlled modular addition costs
\[
\begin{aligned}
    T_{\mathrm{add}}(n)=(3n+O(1))+(3n+O(1))+(3n+O(1))+(2n+O(1))=11n+O(1)
\end{aligned}
\]
Toffoli gates.

\item \textbf{Modular doublings.} For modular doubling, we follow the implementation from \cite{roetteler2017quantum}. The binary doubling itself is implemented by Clifford
shift/CNOT operations. Since \(p\) is odd, the reduction flag can be cleared
from the least significant output bit.  The only parts that require Toffoli gates are
one fixed comparison with \(p\) and one conditional fixed subtraction, which can be implemented by the same classical-quantum adder from \cite{gidney2025classical}.

Therefore, one modular doubling costs
\[
    T_{\mathrm{dbl}}(n)
    =(3n+O(1))+(3n+O(1))
    =6n+O(1)
\]
Toffoli gates.
\end{itemize}

Consequently, the total number of Toffoli gates required by our modular multiplication circuit grows asymptotically as
\[
\begin{aligned}
    T(n)
    &= n\,T_{\mathrm{add}}(n)
       +(n-1)\,T_{\mathrm{dbl}}(n)
    = 17n^2+O(n).
\end{aligned}
\]
The modular squaring circuit is same to the modular multiplication, but with the addend equal to the input register itself. 
Therefore, it also costs at most $17n^2+O(n)$ Toffoli gates.

\end{document}